\newcommand{\eps}{\varepsilon}
\newtheorem{invariant}[theorem]{Invariant}
\newtheorem{goodsit}{Good Situation}
\renewcommand\S{\mathcal{S}}
\newcommand\nice{\mathcal{N}}
\newcommand\wcomplete{\mathcal{L}}
\newcommand\relev{\mathcal{R}}
\newcommand\calC{{\mathcal C}}
\newcommand\calD{{\mathcal D}}
\newcommand\calE{{\mathcal E}}
\newcommand\calF{{\mathcal F}}
\newcommand\calQ{{\mathcal Q}}
\newcommand\Q[1]{{\mathcal Q}_{#1}}
\newcommand\justQ[1]{\overline{\mathcal Q}_{#1}}
\newcommand\justq{\overline{q}}
\newcommand\Qfive{\Q5}
\newcommand\qfive{q_5}
\newcommand{\Qonebig}{\Q{1,\text{big}}} 
\newcommand{\qonebig}{q_{1,\text{big}}}
\newcommand{\Qmatch}{\Q{\text{match}}}
\newcommand{\qmatch}{q_{\text{match}}}
\newcommand{\Qonefive}{\Q{1,5}}
\newcommand{\Qtwofive}{\Q{2,5}}
\newcommand\X{{\mathcal X}}
\newcommand\rel{{r}}
\newcommand{\FFcond}{\textsc{FFcond}}
\newcommand{\offbin}{12}
\newcommand{\onlbin}{18-2\eps}
\newcommand{\topLB}{10+6\eps}
\newcommand{\bigLB}{9-\eps}
\newcommand{\bigLBf}{9-\eps}
\newcommand{\bigfree}{8-8\eps}
\newcommand{\smallslot}{6-6\eps} 
\newcommand{\qtwoUB}{2(\ffsmall)}
\newcommand{\offhalf}{6}
\newcommand{\domslot}{12+4\eps}
\newcommand{\ffsmall}{4-4\eps}
\newcommand{\smallUB}{3-3\eps}
\newcommand{\quarter}{\ffsmall}
\newcommand{\niceLB}{5+\eps} 
\newcommand{\nicetime}{\bindiff}
\newcommand{\bindiff}{6-2\eps}
\newcommand{\largeLB}{6+2\eps}
\newcommand{\fant}{13-3\eps}
\newcommand{\fantoverload}{1-3\eps}
\newcommand{\nicefill}{15+3\eps}
\newcommand{\nicenobig}{6+4\eps}
\newcommand{\niceminustwelve}{3+3\eps}
\newcommand{\niceff}{2+6\eps}
\newcommand{\niceoverffsmall}{11+7\eps}
\newcommand{\nicerzero}{2+8\eps}
\newcommand{\fantandendbig}{4-2\eps}
\newcommand{\ENDBIG}{3+\eps} 
\newcommand{\initd}{\frac{\bigLB}{\topLB}}
\newcommand{\topmbig}{1+7\eps}
\newcommand{\dsize}{\frac{\topmbig}{\topLB}}
\newcommand{\smallUBF}{\frac{10+6\eps}3}
\newcommand{\quarterUBF}{4+\eps}
\newcommand{\heavyUBF}{\frac{9-\eps}{2}}
\newcommand{\bigtwoUBF}{5+3\eps}
\newcommand{\halfUBF}{6+2\eps}
\newcommand{\largemUBF}{\frac{\onlbin-\beta}{2}}
\newcommand{\largepUBF}{9-\eps}
\newcommand{\quarterffull}{14-3\eps}
\newcommand{\wbarfull}{14+2\eps}
\newcommand{\wsqfull}{12+4\eps}
\newcommand{\qonefive}{q_{1,5}}
\newcommand{\qtwofive}{q_{2,5}}
\newcommand{\qconstant}{15}
\newcommand{\qcminusthree}{12}
\newcommand{\qcminusfour}{11}
\newcommand{\qcminusfive}{10}  
\newcommand{\ffmidconstant}{2}  
\newcommand{\countconstant}{14}
\newcommand{\xsize}{19}
\newcommand{\xsizewin}{2}
\newcommand{\Ipart}{I_{\textrm{partial}}}
\newcommand{\Ifut}{I_{\textrm{future}}}
\newcommand{\lc}{\ell}
\newcommand{\wb}{w_{\textrm{big}}}
\newcommand{\wq}{w_{\textrm{top}}}
\newcommand{\wn}{w_{\textrm{large}}}
\newcommand{\wf}{w_{\textrm{top}}}
\newcommand{\chalf}{\delta_{\textrm{half}}}
\newcommand{\clarge}{\delta_{\textrm{large}}}
\newcommand{\Dhalf}{\Delta_{\textrm{half}}}
\newcommand{\justDhalf}{{\overline{\Delta}}_{\textrm{half}}}
\newcommand{\Dlarge}{\Delta_{\textrm{large}}}
\newcommand{\Dnice}{\Delta_{\textrm{nice,1}}}
\newcommand{\Dtwonice}{\Delta_{\textrm{nice,2}}}
\newcommand{\Dthreenice}{{\cal N}}
\newcommand{\cnice}{n}
\newcommand{\conenice}{\delta_{\textrm{nice,1}}}
\newcommand{\ctwonice}{\delta_{\textrm{nice,2}}}
\newcommand{\Dsh}{\Delta_{\textrm{half}}^{+}}
\newcommand{\topthreat}{\textsc{TopThreat}}
\newcommand{\largethreat}{\textsc{LargeThreat}}
\newcommand{\topblock}{\textsc{TopBlock}}
\newcommand{\largeblock}{\textsc{LargeBlock}}
\newcommand{\bigblock}{\textsc{BigBlock}}
\newcommand{\tbs}{\abs{\textsc{TopBlock}}}
\newcommand{\bigthreat}{\textsc{BigThreat}}
\newcommand{\timing}{\textsc{Counted}}
\newcommand{\NEzero}{\mathcal{R}_0}
\newcommand{\Ezero}{\mathcal{E}_0}
\newcommand{\NEzeroc}{47}
\newcommand{\NEzerocplusone}{48}
\newcommand{\NEzerocplustwo}{49}
\newcommand{\mmin}{58380}
\newcommand{\mmintext}{60000}
\newcommand{\mmintwo}{3300}
\newcommand{\Tfirst}{T_{1}} 
\newcommand{\easytwo}{easy\xspace}
\newcommand{\stretchfactor}{R} 
\title{Improved online load balancing with known makespan}
\author{Martin Böhm}{University of Wrocław, Poland}{boehm@cs.uni.wroc.pl}{https://orcid.org/0000-0003-4796-7422}{Research supported by National Science Centre in Poland under grant SONATA 2022/47/D/ST6/02864.}
\author{Matej Lieskovský} {Computer Science Institute of Charles
University, Faculty of Mathematics and Physics, Prague, Czechia}
{ml@iuuk.mff.cuni.cz}{https://orcid.org/0000-0002-0058-3133} {Partially supported by GAUK
project 234723, and GA \v{C}R project 24-10306S.}
\author{Sören Schmitt}{Department of Mathematics, University of Siegen, Germany}{soeren.schmitt@uni-siegen.de}{}{}
\author{Jiří Sgall}
{Computer Science Institute of Charles University, Faculty of Mathematics
and Physics, Prague, Czechia}
{sgall@iuuk.mff.cuni.cz}{https://orcid.org/0000-0003-3658-4848}
{Partially supported by GA \v{C}R project 24-10306S.}
\author{Rob {van Stee}}{Department of Mathematics, University of Siegen, Germany}{rob.vanstee@uni-siegen.de}{}{}
\authorrunning{M. Böhm and M. Lieskovsk\'y and S. Schmitt and J. Sgall and R. van Stee}
\keywords{Online algorithms, bin stretching, bin packing}
\title{Improved online load balancing with known makespan}
\begin{document}
\begin{titlepage}
\maketitle 

\begin{abstract}
    We break the barrier of $3/2$ for the problem of online load balancing with known makespan, also known as bin stretching. 
    In this problem, $m$ identical machines and the optimal makespan are given. The load of a machine is the total size of all the jobs assigned to it and the makespan is the maximum load of all the machines.
    Jobs arrive online and the goal is to assign each job to a machine while staying within a small factor (the competitive ratio) of the optimal makespan.
    
    We present an algorithm that maintains a competitive ratio of $139/93<1.495$ for sufficiently large values of $m$, improving the previous bound of $3/2$.
    The value 3/2 represents a natural bound for this problem: as long as the online bins are of size at least $3/2$ of the offline bin, all items that fit at least two times in an offline bin have two nice properties. They fit three times in an online bin and a single such item can be packed together with an item of any size in an online bin. These properties are now both lost, which means that putting even one job on a wrong machine can leave some job unassigned at the end. It also makes it harder to determine good thresholds for the item types. This was one of the main technical issues in getting below $3/2$. 

The analysis consists of an intricate mixture of size and weight arguments.
\end{abstract}

\end{titlepage}

\setcounter{page}{1}

\section{Introduction}
\textsc{Online Load Balancing with Known Makespan} is an online problem defined as follows. At the start of the input, the number $m$ of machines is revealed, followed by a sequence of jobs with sizes in $[0,1]$, arriving one by one. Each job needs to be assigned to a machine, and the load of a machine is the total size of the jobs assigned to it.
The algorithm is guaranteed a priori that the entire sequence of jobs can be scheduled on $m$ machines so that the makespan (the load of the most-loaded machine) is at most $1$. The objective of the algorithm is to schedule the jobs on the machines as they arrive, minimizing the makespan $\stretchfactor$ of the online schedule, which is allowed to be larger than $1$. The value $\stretchfactor$ is also known under the name \textit{stretching factor}.

The problem was first introduced in 1998 by Azar and Regev~\cite{azarregev98binstretch, azarregev01binstretch} under the name \textsc{Online Bin Stretching}, and studied intensively since~\cite{bsvsv17onepointfive, boehmsimon22,gabay2015online,kellerer2013efficient,   lhomme2022online}. 
Among its given applications is container repacking~\cite{bsvsv17threebins} and reallocation during a server upgrade.
This scheduling problem shares its terminology and some algorithmic ideas with \textsc{Online Bin Packing}. 
The overarching goal of the research of \textsc{Online Bin Stretching} and other related problems over the last few decades is to learn how a small amount of additional knowledge ahead of time (such as knowledge of the makespan) impacts the best possible competitive ratio for the quintessential online problem \textsc{Online Load Balancing}~\cite{graham1966}. 

To that end, another closely related problem is \textsc{Online Load Balancing with Known Sum of Processing Times}, where we have a guarantee that the total volume of jobs is at most $m$, but the optimum can be larger than $1$. (e.g., if jobs larger than $1$ appear in the input sequence). For comparison in \textsc{Online Bin Stretching} we have a guarantee on the makespan which is stronger, while in the classical \textsc{Online Load Balancing} problem we have no guarantee. Having information on the total volume of jobs or the makespan could be viewed as particular kinds of \emph{advice} given to the online algorithm~\cite{10.1145/2993749.2993766,dohrau2015online,renault2015online}.

To answer the general question above quantitatively, the state of the art is the following. For \textsc{Online Load Balancing}, Fleischer and Wahl~\cite{fleischerwahl00} presented a deterministic algorithm with competitive ratio approximately $1.92$, and Rudin~\cite{rudin} showed that no deterministic
algorithm can be better than $1.88$-competitive. 
Kellerer et al.~\cite{kellererkotovgabay2015} showed that having a guarantee on the sum of processing times allows an approximately $1.585$-competitive algorithm as $m$ goes to infinity, matching the lower bound of Albers and Hellwig~\cite{albers2012semi}. Finally, for \textsc{Online Bin Stretching}, Böhm et al.~\cite{bsvsv17onepointfive} presented an algorithm with stretching factor $3/2$, and Azar and Regev~\cite{azarregev01binstretch} showed that no algorithm can have a stretching factor below $4/3$.

\paragraph*{Our contribution.} We propose an online algorithm for \textsc{Online Bin Stretching} that is able to surpass the $3/2$ threshold:

\begin{theorem} For $m\ge\mmintext$ and for $\eps=1/31$, there exists an online algorithm for \textsc{Online Bin Stretching} with stretching factor $3/2 - \eps/6 = 139/93 < 1.495$. 
\end{theorem}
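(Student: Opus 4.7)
The plan is to construct an explicit online algorithm together with a potential/weight bookkeeping that certifies it never gets stuck. After the standard scaling making the offline bin size $\offbin = 12$ and the online bin size $\onlbin = 18 - 2\eps$, I would introduce a partition of item sizes into a few types using thresholds such as $\bigLB$, $\topLB$, $\largeLB$, $\ffsmall$, $\smallUB$ and $\ENDBIG$, giving classes roughly called \textit{top}, \textit{big}, \textit{large}, \textit{medium}, \textit{quarter} and \textit{small}. The algorithm would then be rule-based: on arrival, an item of each class is directed into a specific category of bin (e.g.\ a big item prefers a bin already containing one compatible partner of the right type, then an empty bin, etc.), and each online bin is itself labelled by the types of items it currently holds. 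The choice $\eps = 1/31$ must be made so that every threshold inequality used below holds as a strict $O(\eps)$ gap.

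First I would set up the invariants preserved by the rules: which combinations of item types can coexist in a bin of size $\onlbin$, monotonicity of the various bin counters, and the fact that on every bin some constant ``reserved'' amount of space is available until the bin transitions into its next category. Then the core of the argument is a case analysis over the type of the item $x$ whose arrival allegedly causes failure. For each type I would characterise the configuration forced on the online packing right before $x$ arrives: every bin must be in one of a small list of ``blocking'' states, each with a lower bound on its current load. The aim is to derive a contradiction with the hypothesis that the whole input, together with $x$, can be packed offline in $m$ bins of size $\offbin$.

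The main obstacle, as the introduction emphasises, is that below stretching factor $3/2$ items in the range $(\offhalf, \bigLB)$ no longer fit three times into an online bin and cannot always be combined with an arbitrary partner; consequently a naive volume argument no longer closes the gap. I would introduce a nonuniform weight function $w$ on item sizes with $\sum_i w(s_i) \le m \cdot w(\offbin)$ for every offline-feasible instance, and calibrated so that in each blocking configuration the weight already placed plus $w(x)$ strictly exceeds $m \cdot w(\offbin)$. I expect the hardest cases to be the ones in which the blocking configuration mixes big, large and medium items, because the weight of a medium item must simultaneously be large enough to rule out bins packed with three mediums and small enough not to exceed $w(\offbin)$ when the offline bin carries a big and a medium together; tuning the weights on the borderline ranges to satisfy all such inequalities at once, with slack controlled by $\eps$, is the delicate technical step.

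Finally, the transient states and ``exceptional'' bins that the algorithm unavoidably creates contribute only $O(1)$ slack to both the size and weight inequalities. I would track their exact number through the invariants, and verify that the lower bound $m \ge \mmintext$ is sufficient to absorb every such additive constant in every case, yielding the claimed competitive ratio $3/2 - \eps/6 = 139/93$.
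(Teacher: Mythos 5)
Your sketch captures some of the surface features of the paper (scaling, item-type thresholds, weight functions, $O(1)$ exceptional bins), but it is missing the idea that actually makes the proof work, and without it the argument you outline would fail.

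The decisive gap is that you describe a one-phase rule-based algorithm certified by a single weight function $w$ via a ``blocking configuration'' contradiction. The paper's algorithm is fundamentally two-phase, and this is not an organisational convenience but a necessity. In the starting phase the algorithm deliberately caps most bins at level $\smallslot$, because packing even slightly more into two bins can be fatal if $m$ items of size $\approx 12$ follow. But if it only ever packs up to $\smallslot$, it runs out of empty bins when $m$ items of size just above $6$ arrive later. The resolution is a \emph{fill-up phase}: at a carefully chosen moment (when roughly $(3+\eps)m$ volume has been placed and the $(9-\eps)$-guarantee holds) the algorithm revisits already-used bins with a \emph{new} set of item-type thresholds tailored to the residual space, and packs them in five stages coordinated with the remaining empty bins. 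A static weight function cannot encode this: the ``safe'' amount per bin changes over time, and what matters is a volume threshold, not a weight threshold, because small items carry no weight and would let you drift into a losing position without the weight totals ever noticing. The paper explicitly remarks on this (``The problem with using a weight-based guarantee is that for instance small items can start arriving, which do not have weight'').

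There are several secondary omissions that are also load-bearing. The paper needs \emph{three} weight functions $\wq,\wb,\wn$ (tracking top/big/large threats respectively), not one, and switches between them depending on which bin types exist; a single $w$ cannot simultaneously be tight against triples-of-mediums and against big-plus-medium offline bins, which is exactly the tension you flag as ``delicate'' without resolving it. Quarter items must be maintained in an almost-exact $2{:}1$ ratio $q_1{:}q_2$ (Theorem~\ref{thm:q1q2}) with a matching rule and a swapping rule to keep both a weight guarantee and a volume guarantee simultaneously; neither packing them singly nor in pairs works. Nice items require a special rule (packing three per dedicated bin after the first) because a nice item and a large item can coexist offline, breaking any $\largethreat\le m-\lc$ invariant. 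And finally the hard cases in the fill-up phase are not closed by hand inequalities but by small linear programs (Lemmas~\ref{lem:betasmall}, \ref{lem:nofailstage5}, \ref{lem:nofailstage4}). Your proposal would need all of these mechanisms; as written it would stall at the first input that alternates small items and top items, where no bin is ever ``blocking'' in your sense yet the naive algorithm still fails.
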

For $\eps=1/62$ the algorithm works already for $m\geq \mmintwo$.
Our algorithm builds upon the main concepts of its immediate predecessors~\cite{gabay2015online, bsvsv17onepointfive},
by keeping a portion of the bins empty until a later phase of the input, and by tracking combinatorial properties of the items using a \textit{weight-based analysis}. Any feasible algorithm must follow this general structure. However, once the stretching factor is set below $3/2$, new types of items appear which require great care to pack efficiently. See Figure \ref{fig:algcomparison} and Section \ref{sec:ideas}. The level of complexity of our algorithm as well as its analysis significantly surpasses the previously best-known results.
For instance, it now becomes necessary to use new item types when we start to fill up previously used bins later in the algorithm, as most of the initial item types do not fit well in the remaining space.
Achieving a ratio below $3/2$ for \emph{all} values of $m$ seems to be much harder still, as we often have constantly many bins which are only half-full; only when $m$ is large is the number of such bins negligible.

\paragraph*{History and related work.} The first results on \textsc{Online Bin Stretching} have appeared even before the introductory paper of Azar and Regev in 1998; a year before, Kellerer et al.~\cite{kellererkotov97} already discovered the matching lower and upper bound of $4/3$ for the case $m=2$. Since the beginning, some research works focus on the stretching factor for general number of bins $m$, while others focus on the special cases for $m$ small and fixed. One interesting property of \textsc{Online Bin Stretching} with fixed $m$ is that both the best-known lower bounds~\cite{lhomme2022online}
and some algorithms~\cite{lieskovsky2022}
were designed using a computer-aided approach based on the \textit{Minimax} algorithm, as initially proposed by Gabay et al.~\cite{gabay2017lowerbounds}.

For any value $m\geq 2$ a general lower bound of $4/3$ comes from Azar and Regev~\cite{azarregev01binstretch}.
For $m=3$, the best-known algorithm is by Böhm et al.~\cite{bsvsv17threebins}. The remaining lower and upper bounds for the range $3 \le m \le 8$, listed in the table below, were designed by multiple variants of computer-aided search; the results are by Böhm and Simon~\cite{boehmsimon22}, Lhomme et al.~\cite{lhomme2022online} and Lieskovsk\'y \cite{lieskovsky2024latin}.
\begin{flushleft}
\vskip -1em
\begin{tabular}{@{}cccccccc@{}}
 $m$ & 3 & 4 & 5 & 6 & 7 & 8 & $\ge 9$ \\ \hline
 {Lower bound} & 1.365~\cite{boehmsimon22} & 1.357~\cite{boehmsimon22} & 1.357 \cite{boehmsimon22} & 1.363~\cite{lhomme2022online} & 1.363 \cite{lhomme2022online} & 1.363~\cite{lhomme2022online} & $1.\overline{3}$ \cite{azarregev01binstretch} \\ 
 {Upper bound} & 1.375 \cite{bsvsv17threebins} & 1.393 \cite{lieskovsky2024latin} & 1.410 \cite{lieskovsky2024latin} & 1.429 \cite{lieskovsky2024latin} & 1.455 \cite{lieskovsky2024latin} & 1.462 \cite{lieskovsky2024latin} & 1.5 \cite{bsvsv17onepointfive}
\end{tabular}
\end{flushleft}
\vskip -0.7em

For general $m$, Böhm et al.~\cite{bsvsv17onepointfive} presented the so far best algorithm in 2017 which achieves stretching factor $3/2$; this result was preceded by a long sequence of steady improvements on the algorithmic front, among others by Kellerer and Kotov~\cite{kellerer2013efficient} and Gabay et al.~\cite{gabay2015online}.

For some small fixed values of $m$, especially $m=2$, also specialized problems related to \textsc{Online Bin Stretching} have been investigated previously; 
for example, Epstein~\cite{epstein03bsrevisited} considered online bin stretching with two machines (bins) of uniformly related speed
and Akaria and Epstein~\cite{epsteinakaria22hierarchical}
considered online bin stretching on two bins with grade of service and migration.

\section{Structure of the algorithm}

From now on, as is common in the literature on \textsc{Online Bin Stretching} and because we are dealing with a packing problem, we refer to bins, levels of bins and items instead of machines, loads of machines and jobs, respectively.
Our initial setting is that the offline optimum bins have size $1$ and the bins usable by our algorithm have size $\stretchfactor = 3/2- \eps/6$. 

We assume that the number of bins $m$ is at least $\mmintext$. 
We scale the sizes of the bins such that an offline bin has size $\offbin$ and an online bin has size $\onlbin$. 
(We use $2\eps$ here so that half of the size of an online bin is a more convenient value.)
Our goal is to construct an algorithm which works for the largest possible value of $\eps$. We will eventually set $\eps=1/31$, but for an easier understanding of the relationships between the various values we will mostly use symbolic calculations until Section \ref{sec:finish}, where the exact value of $\eps$ becomes important. Scaling the offline bin size to 12 allows us to work with near-integer type thresholds, which is convenient. After scaling, the total size of the jobs on input is at most $12m$.

Our algorithm uses the algorithms Best Fit and First Fit as subroutines. These algorithms work as follows. Both algorithms open a new bin if the item does not fit into any existing bin.
Otherwise, Best Fit places an item in a bin where the item can still fit and that, after placement, leaves the least amount of remaining empty space in the bin. This means it uses the most-filled bin that can still accept the item.
First Fit always places the item in the first bin in which it will fit, using the order in which it opened the bins. 
In this paper, we will sometimes fix the order in which the bins are to be used in advance, namely if these bins already contain some items. This means that we are applying First Fit to variable-sized ``bins'' (the empty spaces in the actual bins). We give a proof for the performance of First Fit on variable-sized bins which may be of independent interest.

\subsection{Item types}
Our algorithm initially uses the following item types; once we start filling up the bins in the fill-up phase, it will be necessary to use different types because of the amount of space that will be left. 
We group some item types into supertypes. There are two intervals for small items, as these items are packed the same way. The three weighting functions are related to the three types of items that fit only once in an offline bin. Having three separate such types is a consequence of the existence of quarter items (see Figure \ref{fig:algcomparison} and Section \ref{sec:ideas}).

\medskip
 \begin{tabular}{c|ccc|ccc|cc}
		\textbf{Supertype} &   & - &   & \multicolumn{3}{c|}{middle} & \multicolumn{2}{c}{dominant}\\ \hline 
		\textbf{Type} & small & quarter & small & nice & half & large & big & top \\ \hline
		\textbf{Maximum size} & $\smallUB$ & $\quarter$ & $\niceLB$ & $\bindiff$ & $\largeLB$ & $\bigLBf$ & $\topLB$ & $\offbin$\\
		\textbf{Weight $\wq$} & 0 & 1 & 1 & 2 & 2 & 2 & 3 & 4\\
		\textbf{Weight $\wb$} & 0 & 0 & 0 & 2 & 2 & 2 & 4 & 4\\
		\textbf{Weight $\wn$} &0 & 0 & 0 & 0 & 2 &4 &4 & 4
	\end{tabular}

\medskip
We describe the ideas behind these type thresholds in detail in Section \ref{sec:ideas}. Here we describe some fundamental properties of the various (super-)types. See also \Cref{fig:algcomparison}.

Dominant items fit only once in an online bin. 
Nice items fit twice in an offline bin and can be placed in an online bin while still leaving room for another item of any size. (These items are indeed in principle nice to pack, but we still need to be very careful with them.) 
Half and large items fit twice in an online bin, but a large item cannot be packed together with a half item in an offline bin (due to the thresholds $\bindiff$ and $\largeLB$), whereas two half items \emph{may} fit together in an offline bin. 

In our algorithm, we will pack small items only to a level of $\smallslot$ at the beginning to leave room for one top item or two half items. 
As described above, using First Fit guarantees that more than $\ffsmall$ is packed in almost all bins that are packed like this. Of course we get the same guarantee for small items of size more than $\ffsmall$, and this is what motivates the upper bound $\ffsmall$ for quarter items. It is also the same guarantee that we will achieve on average for (a certain subset of) the bins with quarter items (some bins will contain two quarter items). A big item fits in an online bin with two quarter items, and this is the reason that the dominant items are divided into two types.

\begin{definition}
\label{def:threats}
	For a partial input $\Ipart$, 
	let the value $\topthreat$ (resp., $\bigthreat,$ $\largethreat$) be the maximum number of top items (resp., big items, large items) in $\Ifut$ so that $\Ipart\cup\Ifut$ can be packed in $m$ bins of size \offbin, and let $\topblock$ (resp., $\bigblock,$ $\largeblock$) be the set of bins that contain more than $\bindiff$ (resp., $\onlbin-(\topLB)=\bigfree, \bigLB$).
\end{definition}

For any packing of a partial input, we have $\topthreat\le\bigthreat\le\largethreat$ and $\largeblock\le\bigblock\le\topblock$.

\begin{lemma}
	\label{lem:threats}
	For any feasible input $I$ and weighting function $w\in\{\wq,\wb,\wn\}$, we have $w(I)\le 4m$. 
 For any $k\ge0$ and any partial input $\Ipart$:
\begin{itemize}
    \item if $\wq(\Ipart)\ge4k$, then $\topthreat \le m - k$,
    \item if $\wb(\Ipart)\ge4k$, then $\bigthreat \le m - k$,
    \item if $\wn(\Ipart)\ge4k$, then $\largethreat \le m-k$.
\end{itemize} 
\end{lemma}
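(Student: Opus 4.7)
The plan is to first establish the global bound $w(I)\le 4m$ by showing that, in any feasible packing of $I$ into $m$ offline bins of size $\offbin$, each single offline bin carries weight at most $4$ under each of the three weightings $\wq,\wb,\wn$. The three threat bounds will then follow from this by a short counting argument.

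For the per-bin bound, I would run a case analysis on the largest item in a fixed offline bin $B$. If $B$ contains a top item, the remaining free space is below $\offbin - (\topLB) = 1+7\eps$, which is smaller than any positive-weight item, so the bin carries weight exactly $4$. If the largest item is big, only one big item fits (two would sum to more than $18-2\eps$), and the remaining space of less than $3+\eps$ can hold neither a nice/half/large item nor a second small (size $>4-4\eps$), leaving room for at most a single quarter; this gives $\wq\le 3+1=4$, while $\wb$ and $\wn$ already equal $4$ from the big item alone. The interesting cases are large, half and nice, where the bound rests on three boundary identities: (i) a large and a half cannot coexist in an offline bin (their minimum sizes sum to exactly $12$); (ii) at most two items of type nice-or-half fit in a bin; and (iii) two quarters have total size above $6-6\eps$, essentially filling the room remaining next to a half or a large item. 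In each sub-case, summing the item weights gives at most $4$. Finally, a small-only bin (no item of weight $\ge 2$) can contain at most four weight-$1$ items, because each such item has size above $3-3\eps$. Summing over all $m$ bins yields $w(I)\le 4m$.

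For the three threat bounds, I would argue by contradiction. Suppose $\wq(\Ipart)\ge 4k$ and yet $\topthreat \ge m-k+1$. By definition there exists a future input $\Ifut$ containing at least $m-k+1$ top items such that $\Ipart\cup\Ifut$ admits a feasible offline packing into $m$ bins. Each top item has $\wq = 4$, so $\wq(\Ifut)\ge 4(m-k+1)$ and hence $\wq(\Ipart\cup\Ifut)\ge 4k + 4(m-k+1) = 4m+4$, contradicting the global bound applied to the feasible input $\Ipart\cup\Ifut$. The arguments for $\bigthreat$ and $\largethreat$ are identical, using $\wb = 4$ on big items and $\wn = 4$ on large items, respectively.

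The main obstacle is the per-bin analysis for $\wq$: with eight item types and boundary values expressed in terms of $\eps$, one must carefully verify that the thresholds have been tuned so that no combination of items fitting in a bin of size $12$ exceeds weight $4$. The critical size inequalities that drive each case are $(6+2\eps)+(6-2\eps)=12$, $(9-\eps)+2(3-3\eps)=15-7\eps>12$, $2(5+\eps)+(3-3\eps)=13-\eps>12$, and $(6-2\eps)+2(5+\eps)=16>12$. These are precisely the relations that force each weighting to take value at most $4$ on any single offline bin, and they reveal why the three weightings were defined exactly as in the table.
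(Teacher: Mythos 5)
Your proof is correct and follows the same overall structure as the paper's: reduce the global bound $w(I)\le 4m$ to a per-offline-bin bound of $4$, and then get the three threat bounds by a contradiction argument that is identical to the paper's (and which the paper leaves implicit). Where you differ is in how the per-bin bound is established. You run a case analysis on the largest item in a bin, checking combination by combination that no feasible bin contents exceed weight $4$. The paper instead observes that, writing $s_j$ for the infimum size of type $j$, one has $\wq(i)=\lfloor\tfrac{5}{\offbin}s_j\rfloor$ and $\wb(i)=2\lfloor\tfrac{3}{\offbin}s_j\rfloor$. This means an item of $\wq$-weight $k$ has size strictly above $\tfrac{\offbin}{5}k$, so the $\wq$-weight of a feasible bin is strictly below $5$, hence at most $4$; the $\wb$-bound follows analogously (strictly below $6$, and $\wb$ is always even, hence at most $4$). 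Only $\wn$ needs a threshold-combinatorics argument (a large and a half item cannot share a bin), which is a single observation rather than a full case tree. Both routes are valid; the paper's is shorter and makes visible \emph{why} the weight tables were chosen as they were, while your case analysis is more elementary and double-checks the thresholds explicitly. One small slip in your case for top items: the free space left by a top item is below $\offbin-(\topLB)=2-6\eps$, not $1+7\eps$; this does not affect the conclusion since both quantities are below the quarter threshold $\smallUB$, but it is worth correcting.
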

\begin{proof}
    The bound $\wn(I)\le4m$ follows directly from the type thresholds (a large and a half item do not fit together in an offline bin). For the other two weighting functions, note that for an item $i$ of type $j$, the weight $\wq(i)=\lfloor \frac{5}{\offbin}s_j\rfloor$ and 
 $\wb(i)=2(\lfloor \frac{3}{\offbin}s_j\rfloor)$, where $s_j$ is the infimum size of an item of type $j$ (where the small items are split into two separate types for this calculation, one for each range of small items). Intuitively, $\wq$ counts the number of items larger than $\frac{\offbin}5$, that is, items that fit at most four times in an offline bin. 
Similarly, 
$\wb$ counts items larger than
$\frac{\offbin}3$, and multiplies the result by two. The bounds $\wq(I)\le4m$ and $\wb(I)\le4m$ follow.
\end{proof}

The following invariant is a necessary property of any feasible algorithm and we will maintain it and other invariants throughout the processing of the input.
\begin{invariant}
\label{inv:threats}
    We have $\topthreat\le m-\topblock$ and $\bigthreat\le m-\bigblock$.
\end{invariant}
We will \emph{not} be able to maintain $\largethreat\le m-\largeblock$ throughout the algorithm (not even in the starting phase). However, fortunately large items can be placed twice in an online bin. Since these items can have size up to $\bigLB$, a bin must be completely empty in order to guarantee that two large items may be packed in it.

\begin{figure}[t]
	\centering
	\includegraphics[width=\textwidth]{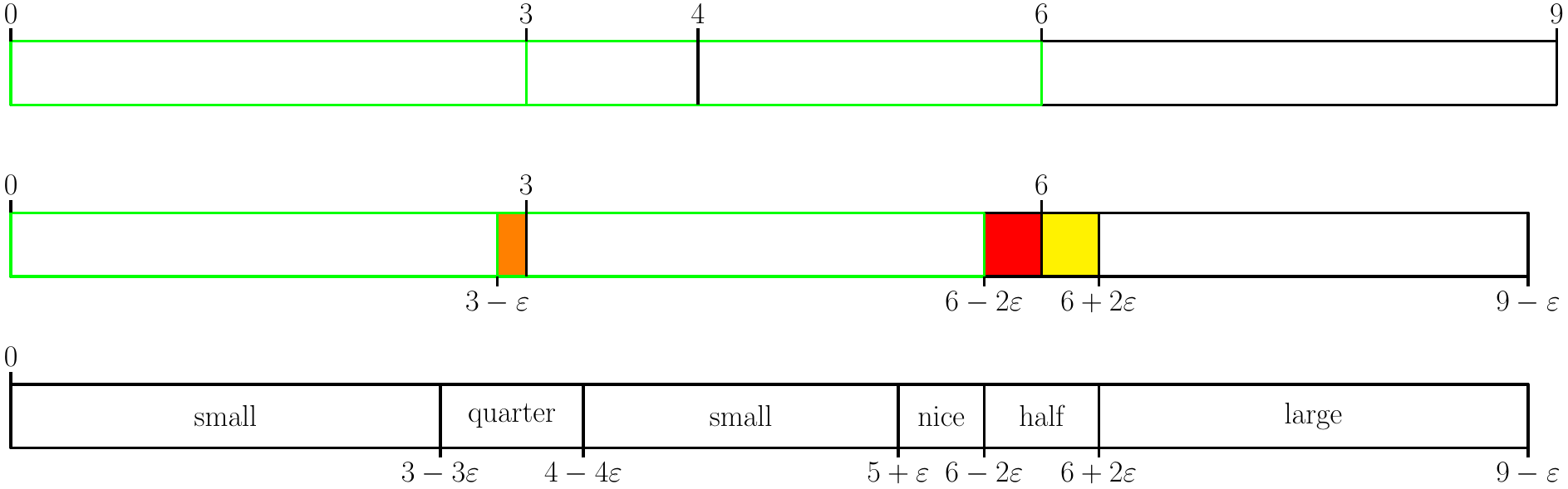}
	\caption{\label{fig:algcomparison}
	(Sketch, using $\eps=1/6$) A comparison of the important thresholds for an algorithm with competitive ratio $3/2$ (top) and an algorithm with competitive ratio strictly less than $3/2$ (middle). The thresholds our algorithm uses are displayed at the bottom. The offline bin size is scaled to be $12$, so all items in the input have size at most 12. 
 The green box indicates (half) the difference between the online and offline bin size. In the top figure, 6 is also a point where the amounts that can be packed in a bin change, both online and offline. \\
    We immediately see that in the middle figure 
    items exist which did not exist before (red); for a competitive ratio of $3/2$, the online algorithm can pack more items per bin for \emph{all} items smaller than 9. Moreover, items in the orange range can block some items of maximal size from being packed in the same bin, if we pack two such items in one bin. Finally, the fact that the red range exists means that items just larger than this (yellow) also need to be packed more carefully than before.}
    \label{fig:algcomparisoncap}
\end{figure}
\subsection{Phases and states}
\label{subsec:packing-strategies} 

\begin{figure}[t]
	\centering
	\includegraphics[width=\textwidth]{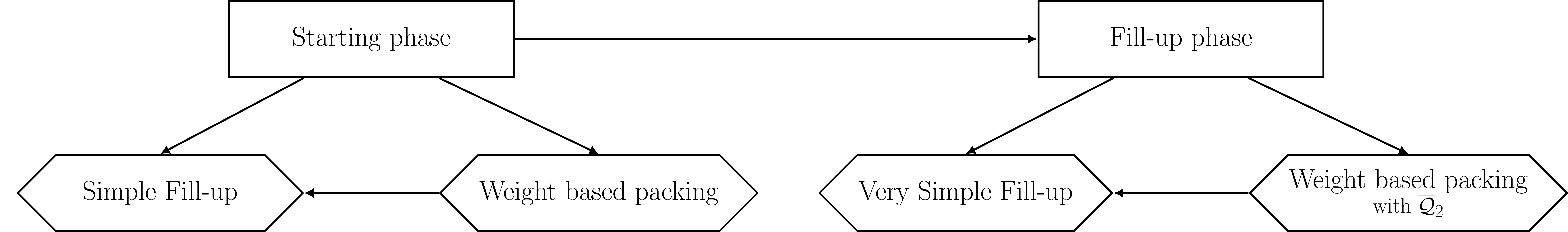}
	\caption{\label{fig:phases}
    An overview of the phases and states.}
 \end{figure}

In the starting phase, we use bins one by one, while staying below a level of $\bindiff$ unless there is a very good reason not to do so. If many relatively large items arrive, we may reach a state where it is sufficient to use First Fit for all remaining items (Simple Fill-Up) or where we know by weight that all items can be packed (Weight-based packing). Otherwise, we will eventually go to the Fill-up phase, where we start filling up the bins that previously received less than $\bindiff$ (or up to $8-8\eps$ in the case of bins with two quarter items). In this phase we will eventually also reach a state where we know that the remaining input can be packed, either by size or by a weight argument. 

\paragraph*{The starting phase}
From the lens of a single bin, our algorithm typically either packs items
until a bin is full -- which is typical for bins containing a single item type, such as the middle items -- or it packs them only up to a level of $\smallslot$, particularly for items of size at most $\smallslot$. 

However, as we have already seen, quarter items do not fit in this framework. On the one hand, we need to avoid packing many quarter items alone in bins (bad packing guarantee) while on the other hand, we also cannot pack too many quarter items in pairs in bins that do not yet contain anything else: that could block top items from being packed (Invariant \ref{inv:threats} would be violated).

Ideally, we would like to pack items as follows:
\begin{itemize}
	\item top items or pairs of half items with small items
	\item big items with quarter items 
	\item large items in pairs
    \item nice items three per bin
\end{itemize}

In this way, all bins would have a weight of at least 4 in $\wq$ and $\wb$ and they would also all be more than $\offbin$ full (except for bins that contain one big item and one quarter item and bins that contain a top item/two half items smaller than $\offhalf$ and not enough small items). There are several problems in using these methods, however:
\begin{itemize}
    \item For bins that are planned to contain items of two different types, or two items of one type, it is not known whether the second type or item will ever arrive.
    \item Packing large items and smaller middle items into separate bins can easily lead to instances that cannot be packed (if there are two bins with single middle items that fit together in an offline bin, and then many top items arrive).
\end{itemize}

We can work around the first problem by changing our packing methods after a certain number of bins have received items of only one type, in particular if many small or quarter items arrive. Basically, our algorithm will first aim to reach the ideal packing described above.
When sufficient volume has been packed, we go back and start filling up the already used bins. This is the fill-up phase of our algorithm. 

The second problem requires us to be very careful with nice items in particular, since some nice items fit with some large items in an offline bin. Packing nice items three per bin in dedicated bins will be fine. However, we cannot afford to do this already starting from the very first bin with nice items, as there could also be a bin with one half item and another bin with one large item at the same time, blocking too many bins for top items so that the algorithm fails. Fortunately, a bin with only a nice item can still receive an item of any other type, so we will pack one nice item alone before starting to pack them three per bin from the second bin onwards. We still need to be very careful if both half and nice items arrive.

\paragraph*{Good situations and the fill-up phase}
We may be fortunate and reach a situation where many bins are filled to (significantly) more than $\offbin$. In this case it will be sufficient to pack the remaining items by essentially using First Fit. This is one example of a \emph{good situation}. This is our term for a configuration which ensures that all remaining items can be packed, usually by using a very simple algorithm. This one is called the \textbf{First Fit case}. 

It may also happen that many relatively large items arrive early. In this case we may reach a state where we know that a small or quarter item will never need to be packed into an empty bin anymore, because they are packed in existing bins first and we would reach the First Fit case before using an empty bin. To ensure that the algorithm does succeed in all cases, even if all bins receive items, we will always use Best Fit as last resort for any item (after exhausting all other rules and all empty bins). We call this the \textbf{Rule of last resort}.\label{anyfit}

If many bins contain items but not enough of them contain a total size of more than 12 or sufficiently large items, it becomes important whether there exist bins that contain only small items or only single quarter items. If that is the case, we will go to the fill-up phase, in which we start filling up the nonempty bins using different item types. Otherwise, we will remain in the starting phase and we will eventually reach a good situation or the input will end.

\paragraph*{The $(9-\eps)$-guarantee} We need to determine when exactly it is safe to start filling up bins in which we have already packed some items, without failing for instance to the threat of top items. To be precise, once we start filling up bins, we need a guarantee that this \emph{remains} feasible no matter what the remaining input is. This will certainly require us to pack a sufficient total size in each bin that we fill up, as we always need to maintain Invariant \ref{inv:threats}.

Our cutoff for starting to fill up bins will be the point at which we know for certain that the future number of \emph{big} items is (and will remain!) strictly smaller than the number of bins in which big items can still be packed (so, $\bigthreat < m-\bigblock$).
There are in principle two ways by which we can know this: by considering weight and by considering volume. The problem with using a weight-based guarantee is that for instance small items can start arriving, which do not have weight. If we start filling up bins using small items, we can soon reach a point where the weight-based bound for $\bigthreat$ has not changed, but $\bigblock$ has increased and we fail when many big items arrive.

We therefore use a volume-based bound. We need to be careful also here. Suppose that already $2m/3$ bins contain small items, and each such bin has a level in the range $(\ffsmall,\smallslot]$. Now suppose that many big items start arriving one by one.  
These big items do \emph{not} bring us really closer to the point where we can safely start filling up the nonempty bins, because every time that we pack a big item $\bigthreat$ decreases by 1 and $m-\bigblock$ decreases by 1. Similarly, top items bring us only slowly closer to this point (since they are slightly larger than big items). 

We will start the fill-up phase once we know the so-called \textbf{$(9-\eps)$-guarantee} holds:
\begin{quote}
\fbox{Whenever new items of total size $\bigLB$ arrive, $\bigthreat$ decreases by at least 1.}
\end{quote}

Having the $(\bigLB)$-guarantee essentially ensures that packing $\bigLB$ per bin is sufficient to maintain Invariant \ref{inv:threats}, although the problem of $m$ large items arriving remains and needs to be dealt with separately. 
Maintaining this average is not at all straightforward, since we also have to make sure not to use too many \emph{empty} bins too early, in order to pack as many pairs of large items into them as possible.  

We present a very careful method of filling the nonempty bins which takes care to use the remaining space in those bins as efficiently as possible, using new item types which are tailored to the remaining space. This method consists of several stages.

\subsection{Bin types}
During the execution of the algorithm, each bin in the
instance will be assigned a specific type. 
Sets of bins of a certain (sub)type are denoted typically by script letters (possibly with an index). 
We define six main types of bins. We use the corresponding lower case letters to refer to numbers of bins of a type: for instance, $\lc=\abs{\wcomplete}$ and $\delta = \abs{\Delta}$. 
\begin{description}
\item[$\calE$] Empty bins.
\item[$\wcomplete$] Large-complete bins. This is a set of bins that reduce $\largethreat$; more generally, they reduce the number of items with weight that can still arrive. Specifically, the number of large items that can still arrive will always be at most $m-\lc$, and the total weight that can still arrive will be at most $4(m-\lc)$.
Since items without weight can still arrive however, and bins in $\wcomplete$ do not necessarily contain at least 12, large-complete bins do still accept items. A formal definition of these bins follows below.
\item[$\S$] Bins with only \emph{small} items. At most $\smallslot$ of small items is packed in each such bin.
\item[$\Delta$] At most four bins that contain two nice items or a single middle item and maybe some other items. See below for more details.
\item[$\calQ$] Bins that are not in $\Delta$ in which the first item (or the second, if the bin was previously in $\Delta$) is a quarter item and that are unmatched. (The algorithm sometimes matches some bins in $\calQ$; these bins are then moved to a special subset $\Qmatch$.)
\item[$\nice$] Bins in which the first two items are nice items and the third item is nice or half. 
\end{description}

Bins started by nice items are filled to triples of nice items in the ideal packing and kept separate to achieve this; these bins can become large-complete upon receiving a dominant item. 
With this large-scale picture in mind, the large-complete bins are defined as follows. These bins require a careful definition because nice items may exist.
\begin{definition}
\label{def:large-complete}
    A bin is called \emph{large-complete} if it satisfies all of the following conditions.
    \begin{itemize}
        \item it has $\wb\ge4$,
        \item it contains an item larger than $\offhalf$ or two items larger than $\bindiff$,
        \item the bin was never in $\calQ$.
    \end{itemize}
\end{definition}

It can be seen that each bin with $\wb\ge4$ has a big item or $\wq\ge4$. A large-complete bin does not necessarily contain a large item or a dominant item. The first condition ensures that these bins contain as much weight as any offline bin.
The second condition implies that $\largethreat\le m-\lc$ at all times. Note that this does not follow from the first condition alone, as a bin could contain two nice items, and a nice and a large item may fit together in an offline bin.

\medskip
The set $\Delta$ contains at most four exceptional bins used for careful handling of middle items. Each of these bins will be created explicitly in our algorithm if they are needed. There are the following four kinds of bins in $\Delta$. 
\begin{description} 
    \item[$\Dlarge$] one bin that contains a single large item, nothing else.
    \item[$\Dhalf$] one bin that contains a single half item and possibly small items of total size at most $\smallslot$ (notation $\Dhalf^S$) or a quarter item (notation $\Dhalf^Q$), nothing else. If the bin contains \emph{only} a half item we call it $\justDhalf$, else $\Dsh$.
    \item[$\Dnice$] at most two bins that contain one nice item and nothing else.
    \item[$\Dtwonice$] one bin that contains two nice items and nothing else.
\end{description}
\label{nice}
Our algorithm will use the so-called \textbf{nice rule} as long as possible: do not pack nice items into $\Dlarge\cup\Dhalf$ and do not pack half or large items into $\Dnice$. 
This rule ensures that nice items get packed into dedicated bins as much as possible (three per bin) so that we gain on these items (both by weight and by packed size per bin) compared to the optimal packing. This in turn ensures that nice items will hardly occur in inputs that are important for our analysis; see the weighting function $\wn$ and the proof of Good Situation \ref{gs:wb}.

\begin{definition}
    A bin is in $\calQ$ if it satisfies the following properties:
    \begin{itemize}
        \item The first item is a quarter item,
        \item The bin is not in $\Dhalf$,
        \item The bin has not been matched (see algorithm).
    \end{itemize}
    Additionally, a bin that was in $\justDhalf$ and then received a quarter item and finally another half or larger item is also in $\calQ$ as long as it has not been matched.
\end{definition}

We define the subset of $\calQ$ of bins in which the first two items are quarter items by $\Q2$, and $\Q1:=\calQ\setminus\Q2$. A bin in $\Q1$ may leave $\calQ$ (and $\Q1$) by receiving a half item (it enters $\Dhalf$); such a bin may later rejoin $\calQ$ by receiving a half or larger item. Bins in $\calQ$ may also leave $\calQ$ permanently by being matched (two bins in $\Q1$ to one bin in $\Q2$).

Instead of using the partition $\calQ=\Q1\cup\Q2$, we will also consider the useful partition of $\calQ$ in the table below. (Some of these subsets may be empty.)

\medskip 

\noindent
	\begin{tabularx}{\textwidth}{cX}
		Bin type & Conditions on contents\\ \hline
		\rule{0pt}{2.6ex}
		$\justQ1$ & A single quarter item, nothing else
  \\
  	$\justQ2$ & Two quarter items, nothing else\\
        $\Qonebig$ & First item is a quarter item, second item is big\\
		$\Qfive$ & First item is a quarter item,  $\wb\ge4$, bin is not in $\Qonebig$\\
   & Or: The first three items are (in this order) half, quarter, half or larger
	\end{tabularx}

\medskip 
We let $\justQ{}=\justQ1\cup\justQ2$.
Bins in $\Qfive$ can be in $\Q1$ (for instance, bins with a top item) or in $\Q2$; we keep track of their membership via the sets $\Qonefive:=\Qfive\cap\Q1$ and $\Qtwofive:=\Qfive\cap\Q2$. All bins in $\Qfive$ will have $\wq$-weight 5 (or more), explaining the name $\Qfive$. For comparison, bins in $\Q1$ have $\wq$-weight at least 1 and bins in $\Q2$ have $\wq$-weight at least 2.

Bins in $\Qonebig$ may get matched (pairwise) to bins in $\Qtwofive$; this is explained in the algorithm (Step 3). The set of matched bins is denoted by $\Qmatch$.

Since the first two items in each bin in $\Q2$ are quarter items, 
we have $\Q2\cap\Qonebig=\emptyset$.
We use the membership of bins in $\Q1$ and $\Q2$ to keep track of the distribution of quarter items in the non-large-complete bins. In our proofs, we will assign weight from the quarter items in $\Q1$ to bins in $\Q2$ on the one hand (so we need sufficiently many bins in $\Q1$) and assign volume from bins in $\Q2$ to bins in $\Q1$ on the other hand (so we need sufficiently many bins in $\Q2$). 
The separation from large-complete bins and the separation of $\Qonefive$ and $\Qtwofive$ will help us maintain an almost fixed ratio $q_1:q_2$.  Because of various half-full bins, we will need some additional bins in $\Q1$ (at most $\qconstant$ in the fill-up phase) before starting to create bins in $\Q2$.
A bin in $\justQ1$ that receives a half item leaves $\calQ$ and enters $\Dhalf$ (and $\Dhalf^Q$). If it later receives another half item or a large, it returns to $\calQ$, namely $\Qonefive$, or moves to $\wcomplete$.
Summarizing, we have the following disjoint unions.
\begin{align}
    \label{inv:q1}
    \Q1&=\justQ1\cup\Qonebig\cup\Qonefive\\
    \label{inv:q2}
    \Q2&=\justQ2\cup\Qtwofive.
\end{align}

We define the set of \emph{complete} bins $\calC$ as the set of bins that from the point of view of the algorithm (and the analysis) do not need to receive any specific items, as follows: 
\[
\calC:=\wcomplete\cup\Qfive\cup\Qmatch\cup\Dthreenice.
\]
Into these bins, any item may be packed. Finally, the unmatched nonempty bins that are not large-complete are called \emph{regular} (set $\relev$). We have
\begin{equation}
    \label{eq:relev}
    \relev = \S\cup\calQ\cup\nice\cup\Delta = 
    \S\cup \justQ1\cup\justQ2 \cup\Qonebig\cup\Qfive\cup \nice\cup\Delta.
\end{equation}
At all times, each bin is in exactly one of the sets $\relev,\Qmatch,\wcomplete,\calE$.

We will show eventually that in the starting phase, some bins remain empty or we can guarantee that all remaining items can be packed (possibly using different methods). However, the partitioning of the sets shown here remains valid even after we run out of empty bins apart from the fact that some bins may contain some items that do not belong there; for instance, there could be some small items packed into $\justQ2$. Also, after we run out of empty bins the nice rule may be violated.
We will maintain the following invariant.
\begin{invariant}
\label{inv:bigq}
    There is never a bin in $\justQ1\cup\justQ2$ at the same time as a bin with a big item as its only item with weight.
\end{invariant}

\begin{invariant}
    \label{inv:large}
    We have $\largethreat\le m-(c-n)$ as long as the nice rule is followed.
\end{invariant}
\begin{lemma}
\label{lem:invlarge}
    Invariant \ref{inv:large} holds for any packing of items.
\end{lemma}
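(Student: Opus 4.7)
The plan is to fix an arbitrary feasible offline packing $P$ of $\Ipart\cup\Ifut$ that achieves the maximum number of large items, and to show that at least $c-n$ of its $m$ offline bins contain no large item. Since any two large items together exceed $\offbin$ (both have size strictly greater than $\largeLB$), this directly implies the desired bound $\largethreat\le m-(c-n)$.

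For each online bin in $\wcomplete\cup\Qfive\cup\Qmatch$ I would identify one or two ``blocker'' items whose presence in an offline bin forbids a large item from being placed there. For $\wcomplete$, Definition~\ref{def:large-complete} supplies either a single item of size $>\offhalf$ (which, combined with any large item of size $>\largeLB=\offhalf+2\eps$, exceeds $\offbin$) or two items of size $>\bindiff$ (each of which already blocks a large item on its own, since $(6-2\eps)+(6+2\eps)=\offbin$ and large is strictly above $6+2\eps$). For $\Qfive$ and $\Qmatch$, the condition $\wb\ge 4$ combined with the nice rule (which keeps nice items out of $\Dlarge\cup\Dhalf$ and gathers them in $\Dnice$, $\Dtwonice$, and $\nice$) ensures that each such bin contains either a dominant item of size $>\bigLB>\offhalf$, or two items of $\wb$-weight~$2$ that are half or large (rather than nice), hence of size $>\bindiff$; for $\Qmatch$, the matched $\Qonebig$--$\Qtwofive$ structure contributes blockers to both members of the pair.

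The main obstacle is then the counting / injection step: the blockers drawn from the $c-n$ online bins in $\wcomplete\cup\Qfive\cup\Qmatch$ must cover at least $c-n$ distinct offline bins of $P$, each necessarily large-free. The relevant geometric facts are that each offline bin of $P$ admits at most one item of size $>\offhalf$ (two such items sum to more than $\offbin$) and at most two items of size $>\bindiff$ (three such items would sum to more than $18-6\eps>\offbin$). I would then split online bins according to which blocker structure they contribute—a single item of size $>\offhalf$, or a pair of items in $(\bindiff,\offhalf]$—and account carefully for the possibility that an offline bin absorbs one blocker of each type. The hard part is checking that in every configuration the resulting map into large-free offline bins is injective; the nice rule is what makes this work, since it prevents $\wcomplete\cup\Qfive\cup\Qmatch$ bins from degenerating into configurations whose only ``weight'' comes from nice items (which do not block large items in an offline bin).
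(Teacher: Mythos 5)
The combinatorial route you chose is viable and the geometric facts you collect are all correct, but the injection step you flag as ``the hard part'' is a genuine gap, and the hint you give for closing it is not the fact that does the work. With your partition into ``single blocker of size $>\offhalf$'' versus ``pair of blockers in $(\bindiff,\offhalf]$'', injectivity really can fail: a half item of size just above $\offhalf$ (from a single-blocker bin) and a half item of size just below $\offhalf$ (from a pair bin) sum to roughly $\offbin-2\eps$ and may share an offline bin, and one can arrange $a$ single-blocker bins and $b$ pair bins whose blockers cover only about $a/2+b$ offline bins. Observing merely that the nice rule rules out nice-only weight does not repair this.

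What actually closes the gap is a finer consequence of the nice rule: if a complete bin outside $\nice$ has as its only item of size $>\offhalf$ a half item (hence of size $\le\largeLB$), then $\wb\ge4$ (from Definition~\ref{def:large-complete}, and the analogous structure of $\Qfive$ and $\Qmatch$) forces a second item of $\wb$-weight at least $2$, and the nice rule guarantees that this second item is half, large, or dominant --- not nice --- and hence of size $>\bindiff$. So every complete bin outside $\nice$ either (i) contains an item of size $>\largeLB$, which cannot cohabit an offline bin with any item of size $>\bindiff$, or (ii) contains two items of size $>\bindiff$; partitioning on (i)/(ii) rather than on whether the blocker exceeds $\offhalf$ eliminates the overlap you worried about, and your count then gives the injection. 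Equivalently --- and this is the shorter finish that the paper's one-sentence proof is gesturing at --- the same case analysis shows each such bin has $\wn\ge4$ ($2+2$ in case~(ii), $4$ in case~(i)), so $\wn(\Ipart)\ge4(c-n)$ and Lemma~\ref{lem:threats} yields $\largethreat\le m-(c-n)$ directly, with no offline-bin bookkeeping at all.
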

\begin{proof}
    As long as the nice rule is followed,
    all complete bins except for the ones in $\nice$ contain two half items or an item larger than 6 and have $\wb\ge4$.
\end{proof}
From this bound it can be seen that the possible existence of bins in $\nice$ force us to keep bins empty for \emph{pairs} of large items, since we cannot ensure $\largethreat\le m-c$.

\subsection{Proof overview}
We begin with some initial observations regarding how many bins there can be of different types and how much they contain (Section \ref{subsec:mainanalysis}). We then focus on the set $\calQ$ in Section \ref{sec:Q}. We prove that up to an additive constant, $2q_2=q_1$ throughout the starting phase in Theorem \ref{thm:q1q2}. 
The (almost) fixed ratio $q_2:q_1$ is used to help show Invariant \ref{inv:threats} for top items in Theorem \ref{thm:q2weightbound}, and to show a packing guarantee for $\calQ$ in Theorem \ref{thm:qsizebound}.

There will be constantly many bins that do not satisfy our packing guarantees, these bins will be in a set $\X$. Packing guarantees for $\X$ and $\wcomplete$ are discussed in Section \ref{sec:LX}. 

In Section \ref{sec:firstphase} we show that in the starting phase, either some bins remain empty, $\justq_2>0$, or all items get packed. We begin by showing that Invariant \ref{inv:threats} is maintained as long as we do not use the rule of last resort (essentially, as long as some bins are empty) in Lemma \ref{lem:tte>0}. We then introduce some good situations in which we can guarantee that all remaining items can be packed (possibly using a different algorithm). Finally we show that Invariant \ref{inv:threats} is maintained in the entire starting phase or we reach a good situation. More generally, the algorithm does not fail in the starting phase (Theorem \ref{thm:firstphaseworks}). Along the way we show that packing $\bigLB$ additionally in each non-complete bin in the fill-up phase is enough to maintain Invariant \ref{inv:threats} in the fill-up phase as well (Theorem \ref{thm:finalphaseworks}).

We next explain how bins are filled up in the fill-up phase in Section \ref{sec:secondphase}. We first introduce some rules for filling up the bins and consider some sets of bins that are already very full in Section \ref{sec:prelim}. In order to fill up the remaining bins efficiently, we introduce new item type thresholds. We describe packing methods in Section \ref{sec:secondphasetypes}. Many items can be packed using First Fit, but quarter and large items require a careful packing in five stages (Section \ref{sec:hard}).

We analyze the fill-up phase in Section \ref{sec:finish}. We first consider some simple cases (essentially, new good situations) in Section \ref{sec:secondeasy}. In Section \ref{sec:firstthree} we show that the algorithm does not fail in the first three stages, and the rest is considered in Section \ref{sec:allitemspacked}.

\section{Algorithm in the starting phase}

Whenever the algorithm uses or attempts to use a set $A$ to pack an item in the following description, we use First Fit on the bins in $A$, unless otherwise stated. The notation $A\to B$ means that a bin in the set $A$ moves to $B$ by receiving an item of the current type.

\begin{description}
    \item[Step 1: Using and creating complete bins] Try the following in this order. 
\end{description}
\begin{itemize}
    \item for half and large items: $\Dtwonice\to\Dthreenice\subseteq\calC$, 
    \item for non-small items: $\Qonebig\to\Qonefive\subseteq\calC$
    \item use a complete bin (a bin in $\calC=\wcomplete\cup
    \Qfive\cup\Qmatch\cup\Dthreenice$).
    \item create a complete bin if this does not violate the nice rule (page \pageref{nice}). 
    
    First try the bins $\justQ2, \justQ1,\Dhalf^Q$ in this order.     
    Among other bins, use Best Fit to create a bin in $\wcomplete$, but do not pack a half item into $\Dlarge$ (yet).\footnote{If $\S\cup\justQ1\ne\emptyset$, we prefer packing half items there rather than in $\Dlarge$, because this improves certain packing guarantees later (see Theorem \ref{thm:sjustq1g0}.1). (If $\S\cup\justQ1\ne\emptyset$, $\Dhalf$ exists and a large item arrives, we will have that $\Dhalf=\Dhalf^S$ which already improves the guarantee.)}
\end{itemize}
\begin{description}
    \item[Step 2: Packing rules for each item type]
If an item is not packed yet, we apply the following rules depending on the item type. 
\begin{description}    
    \item[Small:] 
    First Fit on bins in $\S\cup\Dhalf^S$ while packing at most $\smallslot$ of small items in each bin,  $\justDhalf\to\Dhalf^S$, $\calE\to\S$. 
    \item[Quarter:]
    If $\abs{\Q1}+\chalf^Q\ge 2\abs{\Q2}+\qconstant$ then 
     $\justQ1\to\justQ2$, else $\justDhalf\to\Dhalf^Q$, $\calE \to \justQ1$.

    \item[Nice:] 
    $\Dtwonice\to\Dthreenice$, if $\conenice=2$ then $\Dnice\to\Dtwonice$, $\calE\to\Dnice$.
    \item[Half:] 
    Best Fit on bins in $\S\cup\justQ1\to\Dhalf,\Dlarge\to\wcomplete,\calE\to\Dhalf$. 
    \item[Large:] 
    $\calE\to\Dlarge$. 
    \item[Dominant:] 
    Always packed in Step 1, as will be explained in Section \ref{subsec:mainanalysis}.
\end{description}

\begin{description}
\item[Rule of last resort] 
If some item cannot be packed according to these rules, which can only happen after we run out of empty bins, we use Best Fit for this item, except that we still
follow the nice rule (see page \pageref{nice})
as long as possible. If 
the nice rule has already been violated,
we simply use Best Fit. For future items we still use the packing rules above first.
\end{description}

\item[Step 3: Matching rule]
This step minimizes the number of bins in $\Qonebig$. 

If $\abs{\Qonebig}\geq2$ and there is a bin in $\Qtwofive$, two bins in $\Qonebig$ are matched to a bin in $\Qtwofive$ and all three bins are moved from $\calQ$ to $\Qmatch$.

\item[Step 4: Swapping rule]
Each time that a new bin $\bar b$ in $\justQ{1}$ is created, if there exists a large-complete bin $b$ with a big item but no other items with weight (such a bin must contain also other items, or we would not have created $\bar b$), we virtually swap some items. 
That is, the bin $\bar b$ is treated as a bin in $\S$ from now on, and the bin $b$ supposedly contains a big item and a quarter item. This ensures that Invariant \ref{inv:bigq} is maintained. The quarter item is not considered to be the first item in $b$, so $b$ is not in $\calQ$. 

\end{description}

The swapping rule ensures that big items can safely be packed together with small items without violating Invariant \ref{inv:bigq} even if quarter items arrive later.
Whenever the swapping rule is applied on two bins $\bar b\in\justQ1$ and $b\in\wcomplete$, 
the total size packed into these bins is more than $\onlbin$ at this point (else $\bar b$ would not have been opened). If the bin $\bar b$ contains less than $\ffsmall$ we reassign volume such that the bin $\bar b$ ends up with exactly $\ffsmall$. We see that more than $\onlbin-(\ffsmall)=\wbarfull$ remains for the bin $b$. 
This just means that $\bar b$ possibly has slightly more space for additional items than the algorithm calculates with (because it views $\bar b$ as containing the small items that were in $b$). 

The large-complete bins used by the swapping rule differ from the other bins in $\wcomplete$ only in that they are not used to pack any future item. That is, we ignore such bins in Step 1 (this is not written explicitly in the algorithm; it seemed cleaner to explain this here). Our proofs will rely on the following observation.

\begin{observation}
    \label{obs:large-completeswap}
    At all times, the large-complete bins used by the swapping rule
    contain (or are assigned) at least $\wbarfull$. 
\end{observation}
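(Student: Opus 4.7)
The proof will formalize the short calculation already sketched in the paragraph preceding the observation. Fix the moment the swapping rule fires on a newly created bin $\bar b\in\justQ1$ and a large-complete bin $b\in\wcomplete$ whose only weight-carrying item is a big item; by Definition~\ref{def:large-complete} the remaining contents of $b$ are items without $\wb$-weight, i.e.\ small items (from either small interval). Write $s_b$ for the total physical size in $b$ at this moment and $q$ for the size of the quarter item just placed into $\bar b$. Since Step~1 of the algorithm tries to use a complete bin (hence $b$) for the incoming quarter item before opening a new bin, the fact that $\bar b$ was opened means that $q$ did not fit into $b$, giving
\[
s_b + q \;>\; \onlbin.
\]

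Now apply the virtual swap. After the swap, $b$ is accounted as containing the big item and the quarter item $q$, while $\bar b$ is accounted as containing the small items that were physically inside $b$. The bookkeeping rule is that $\bar b$ ends up assigned exactly $\ffsmall$: if its natural virtual contents are less than $\ffsmall$, volume is reassigned from $b$; in the complementary case one simply records $\ffsmall$ on $\bar b$ and credits the surplus to $b$, consistent with the ``contain \emph{or are assigned}'' phrasing in the statement. Either way the total $s_b+q$ is split so that $\bar b$ receives exactly $\ffsmall$ and $b$ receives the remainder, so the volume assigned to $b$ satisfies
\[
s_b+q-\ffsmall \;>\; \onlbin-\ffsmall \;=\; (18-2\eps)-(4-4\eps) \;=\; \wbarfull.
\]

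To finish, I argue persistence: as noted just before the observation, large-complete bins used by the swapping rule are ignored in Step~1 of the algorithm and therefore never receive any further items; the virtual assignment is also frozen at the moment of the swap. Consequently the bound $\ge\wbarfull$ on the volume assigned to $b$ is preserved for all subsequent times, which is exactly the claim. The only mildly subtle point is the bookkeeping convention in the regime where $\bar b$'s natural virtual contents already exceed $\ffsmall$, so the algorithm's literal ``if $\bar b$ contains less than $\ffsmall$'' clause does not fire; this is handled by the uniform convention just described, which the observation's ``or are assigned'' clause explicitly permits.
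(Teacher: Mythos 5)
Your proof is correct and follows essentially the same reasoning as the paper's discussion preceding the observation: the inequality $s_b+q>\onlbin$ comes from Step~1 having tried $b$ before opening $\bar b$, assigning exactly $\ffsmall$ to $\bar b$ leaves more than $\onlbin-(\ffsmall)=\wbarfull$ for $b$, and persistence holds because such bins are excluded from Step~1 thereafter (and could in any case only gain content). Your handling of the ``complementary case'' fills a small gap in the paper's informal wording and is the right convention; in fact it is the case that always occurs, since $q\le\ffsmall$ forces $s_b>\wbarfull$, and with the big item at most $\topLB$ the small items in $b$ then necessarily exceed $\ffsmall$.
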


\begin{figure}[ht]
	\label{fig:binoverview}
	\centering
	\includegraphics[width=\textwidth]{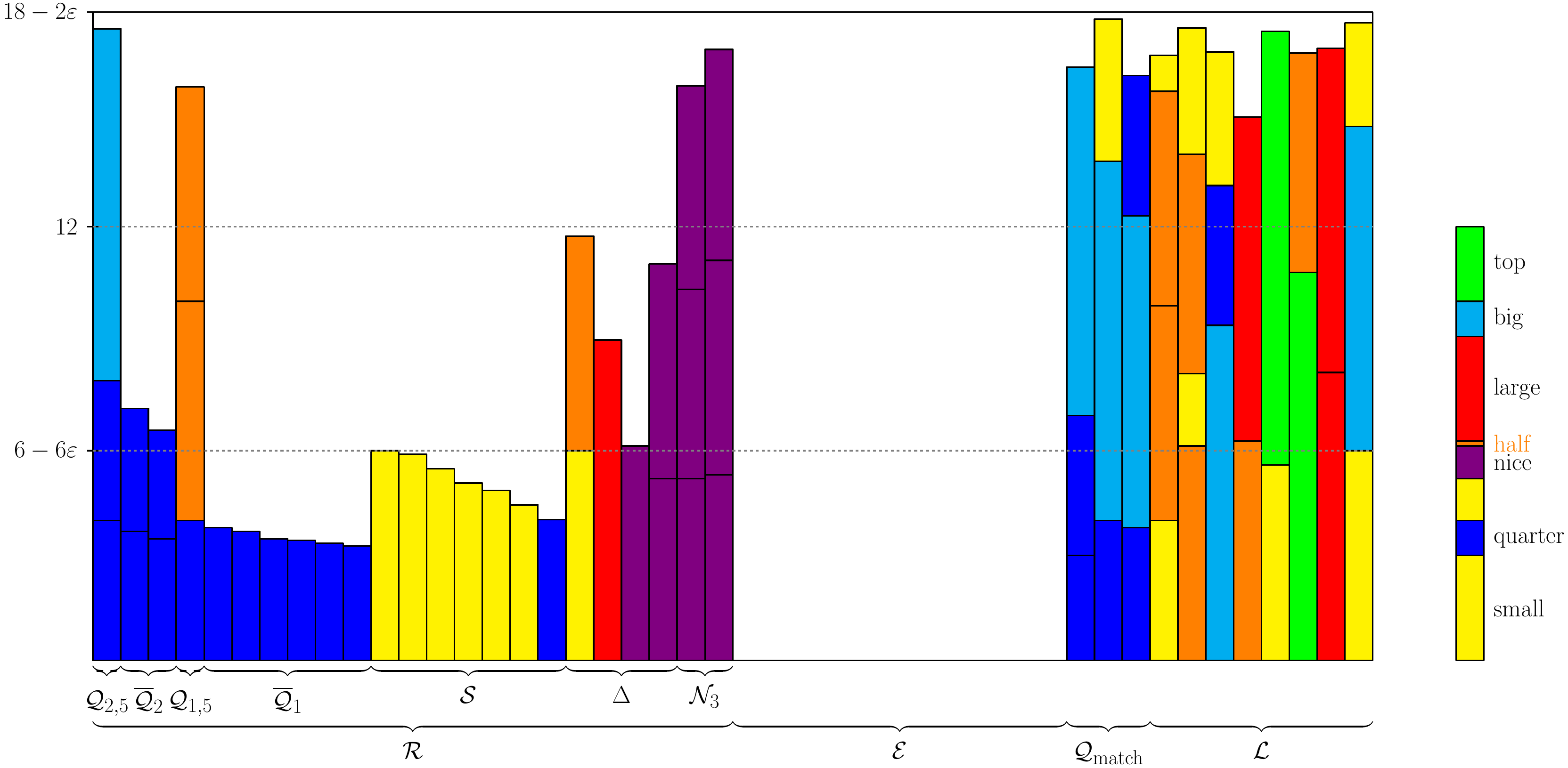}
	\caption{(Sketch) Overview of bins in the starting phase. The three bins in $\Qmatch$ were moved there by the matching rule. The second quarter item in the rightmost bin in $\Qmatch$ arrived there when the bins were already in $\Qmatch.$ The swapping rule was applied to the rightmost bin in $\S$ and the rightmost bin in $\wcomplete.$ The small items on top of the big item in the rightmost bin arrived before the swapping rule was applied. For visual clarity we have left out a number of bins in $\justQ1.$}
    \label{fig:binoverviewcap}
\end{figure}

\section{Properties of the bins in the starting phase}
\subsection{Bin type transitions and initial observations}
\label{subsec:mainanalysis}
We expand on our description
of the packing algorithm with some examples and observations. All dominant items will be packed in Step 1 -- their weight $\wb$ is equal to $4$ and they are larger than 6, so they can form a large-complete bin on their own. If they are packed into a bin in $\calQ$ they form a bin in $\Qfive\cup\Qonebig$. 

If big items arrive after some $\justQ2$ and $\justQ1$ bins
are formed, Step 1 will place them into said quarter item bins, with a preference for $\justQ2$. Symmetrically, if bins with big items exist, quarter items will be placed there (or in some large-complete bin) during Step 1 (if there is room) or by the swapping rule, if a large-complete bin with a big item and no other item with weight exists.
Otherwise, quarter items will form bins of type $\justQ1$ and $\justQ2$ (and at most one bin $\Dhalf^Q$). 

From the above discussion we learn that Invariant \ref{inv:bigq} is maintained. See below for a formal proof (Lemma \ref{lem:bigq}).
As long as $\bar q_2>0$, a bin in $\justQ1$ only receives an item of size more than $\offhalf$ if that item is a top item or if it is a half item. If a bin in $\justQ1$ receives a half item, the bin moves to $\Dhalf^Q$.
In $\Dhalf^Q$, only half items and larger items are packed. 

For middle items, we observe that any two such items together form a bin with $\wb\ge4$ (and $\wq\ge4$).
If there is a bin with only a single half item we pack a large item there (to create a large-complete bin) in Step 1. In general we want to pack half items in $\S\cup\calQ_1$ (ideal packing). When we do that, there can be one bin with one half item and some smaller items ($\Dhalf$) and also a bin with only one large item ($\Dlarge$) at the same time. The algorithm can still pack all remaining items (even if many top items arrive afterwards) because half and large items do not fit together in an offline bin. 

The arguments above also apply to the nice items, as they also belong to the middle supertype. The first nice item is treated specially, because a nice and a large item \emph{can} be packed together in an offline bin. However a nice item also has the (nice) property that any item, in particular any dominant item, fits with it in an online bin. This allows us to keep several $\Dnice$ bins open. Bins will only leave $\Dnice$ by receiving a dominant item (such bins are then large-complete) or a nice or half item (entering $\Dtwonice$).

A bin in $\wcomplete$ stays in $\wcomplete$; a bin in $\calQ$ stays in $\calQ$ except 
if a half item moves a bin in $\justQ1$ to $\Dhalf^Q$. Such a bin may return to $\calQ$ by receiving another half or large item.
In any step in which the rule of last resort is not applied, by the above discussion only the following bin type transitions are possible. 
\begin{itemize}
    \item A bin in $\justQ1$ can receive a quarter item ($\to\justQ2$), a half item ($\to\Dhalf^Q$) or a dominant item ($\to\Qonebig\cup\Qfive$).
    \item A bin in $\justQ2$ can receive a dominant item ($\to\Qtwofive$).
    \item A bin in $\S$ can move 
to $\wcomplete$ or to $\Dhalf^S$. 
\item A bin in $\justDhalf$ can receive a small item or a quarter item ($\to\Dhalf$), a half item ($\to\wcomplete$), or a large or dominant item ($\to\wcomplete$). 
\item A bin in $\Dnice$ can move to $\Dtwonice$ or $\wcomplete$. 
\end{itemize}
Apart from quarter items, bins in $\justQ{}$ only receive items of size at least $\bindiff$.
A \emph{regular} bin has $\wb\ge4$ if and only if it is in $\Qonebig\cup\Qfive\cup\Dtwonice\cup\Dthreenice$.
\begin{observation}
	\label{obs:twoquarter}
    There is never a bin in $\S\cup\justQ1$ at the same time as a large-complete bin containing {only} a dominant item or only two half items. 
\end{observation}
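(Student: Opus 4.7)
The plan is to prove a strengthened invariant by induction on the number of items processed: at every moment, no bin in $\S \cup \justQ1$ coexists with a \emph{bad} large-complete bin (one containing only a dominant item or only two half items), and additionally no bin in $\justDhalf$ coexists with a bin in $\S \cup \justQ1$. The auxiliary $\justDhalf$ clause is needed to close the ``two halves'' case without circularity.

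First I would observe that any bad large-complete bin $b^*$ has empty space at least $\onlbin - \max\{\offbin,\, 2\largeLB\} = 6-6\eps$, which strictly exceeds both $\smallUB$ and $\quarter$. Hence, when $b^*$ already exists and a small or quarter item arrives, Step~1 (``use a complete bin'') places it into $b^*$ or some other fitting complete bin, so the branches $\calE\to\S$ and $\calE\to\justQ1$ of Step~2 are never reached. This handles the direction ``bad bin comes first.''

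For the direction ``$b\in\S\cup\justQ1$ comes first,'' I would split on the item creating $b^*$. For a dominant item $d$, Step~1's ``create a complete bin'' first tries $\justQ2,\justQ1,\Dhalf^Q$: if $b\in\justQ1$ the item lands in $b$, producing $\Qonebig$ or $\Qonefive$ instead of a fresh $\wcomplete$-bin with only $d$. Otherwise the Best-Fit fallback applies, and since $d$ fits into $b\in\S$ (recall $\smallslot+\offbin = 18-6\eps \le \onlbin$), Best Fit prefers the nonempty bin $b$ over any empty one, so the newly created $\wcomplete$-bin cannot contain only $d$. For a half item $h_2$ that would create a bin with only two halves, $h_2$ must be added to a bin in $\justDhalf$; the auxiliary clause of the invariant forbids such a $\justDhalf$ bin to exist.

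To close the loop I would verify the auxiliary clause by inspecting the three Step~2 rules that generate or consume $\justDhalf$, $\S$, and $\justQ1$: for small items $\justDhalf\to\Dhalf^S$ is tried before $\calE\to\S$; for quarter items $\justDhalf\to\Dhalf^Q$ is tried before $\calE\to\justQ1$; and for half items Best Fit on $\S\cup\justQ1$ is tried before $\calE\to\Dhalf$, which always succeeds when those bins exist since $\smallslot+\largeLB \le \onlbin$ and $\quarter+\largeLB \le \onlbin$. Thus the creation of any one of these bin types forces the absence of the others. The main obstacle in the whole argument is recognizing that the observation cannot be proved in isolation and must be bundled with the $\justDhalf$ claim; once the joint invariant is in place, each case reduces to a short inspection of the algorithm.
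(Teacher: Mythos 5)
The paper does not give a proof for this Observation---it is stated as a bare fact, justified only by the informal discussion in Section~4.1 (``arriving quarter items are packed with big items and big items are packed into $\justQ1\cup\justQ2$ in Step~1 if possible'') and by the parenthetical in the swapping rule (``such a bin must contain also other items, or we would not have created $\bar b$''). Your argument is a correct formalization of exactly that reasoning, so the approach is essentially the paper's; the one thing you add beyond what the paper makes explicit is the auxiliary $\justDhalf$ clause, and that addition is indeed necessary: the dominant-item case follows from Step~1's preferences alone, but the ``only two half items'' case genuinely requires knowing that $\justDhalf$ never coexists with $\S\cup\justQ1$, since such a bin can only be closed into a two-halves $\wcomplete$-bin by Best Fit in the ``create a complete bin'' branch.

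Two minor points worth noting but not gaps. First, when you say ``if $b\in\justQ1$ the item lands in $b$'' you should allow that $d$ may instead land in $\justQ2$ or $\Dhalf^Q$ if those come earlier in Step~1's priority order --- this is harmless since none of those create a bad bin either. Second, the size checks you cite are all correct ($\smallslot+\offbin=18-6\eps<\onlbin$, $\quarter+\offbin<\onlbin$, $\smallslot+\largeLB=12-4\eps<\onlbin$, $\quarter+\largeLB<\onlbin$, and the small/quarter item of size at most $\niceLB=5+\eps<6-6\eps$ fits in the residual space of a bad bin for $\eps<1/7$), so the fit conditions you rely on all hold; I mention it only because you state the residual bound as $\onlbin-\max\{\offbin,2\largeLB\}=6-6\eps$ without observing that $2\largeLB>\offbin$, i.e.\ the two-halves case is the binding one.
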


\begin{observation}   
    \label{obs:deltabounds}
    We have $\chalf\le1$ and $\clarge\le1$ at all times. We have
    $\conenice\leq2, \ctwonice\leq1$ and $\conenice+\ctwonice\leq2$. Hence $\delta\le4$ at all times.
    We also have $\bar\delta_{\text{half}}\cdot\clarge=0$ and $\bar\delta_{\text{half}}\cdot(s+\justq_1)=0$ at all times. Each bin in $\Dthreenice$ contains more than $\nicefill$. The total size of the items in each triple in $\Qmatch$ is more than $3\cdot(\bigLB)+4\cdot(\smallUB)=3\cdot(13-5\eps)$.
\end{observation}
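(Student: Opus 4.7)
The plan is to handle the listed claims in the order they appear, grouping them by the style of argument.

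First, I would observe that the individual bounds $\chalf \le 1$, $\clarge \le 1$, $\conenice \le 2$, and $\ctwonice \le 1$ are immediate from the definitions of the corresponding subsets of $\Delta$ in Section~\ref{subsec:packing-strategies}. For $\conenice + \ctwonice \le 2$, I would walk through the three branches of the nice-item rule in Step~2: the subrule $\Dtwonice \to \Dthreenice$ decreases $\ctwonice$ and leaves $\conenice$ unchanged; the subrule $\Dnice \to \Dtwonice$ fires only when $\conenice = 2$, so the sum stays at $1+1=2$; and the subrule $\calE \to \Dnice$ fires only when $\ctwonice = 0$ and, by the priorities, $\conenice \le 1$, so $\conenice$ moves to at most $2$. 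Summing the individual bounds yields $\delta \le 4$. For the content of a bin in $\Dthreenice$, such a bin contains three nice items, each of size greater than $\niceLB = 5+\eps$, so the level exceeds $3(5+\eps) = \nicefill$.

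The main obstacle will be the two exclusion identities $\bar\delta_{\text{half}} \cdot \clarge = 0$ and $\bar\delta_{\text{half}} \cdot (s + \justq_1) = 0$. Here I would isolate the unique rule that creates a bin in $\justDhalf$, namely the subrule $\calE \to \Dhalf$ of the half-item rule of Step~2. By the order in which the rules are tried, this subrule fires only after Step~1 fails to merge the arriving half with any existing bin to form a large-complete one, after the Best Fit on $\S \cup \justQ1$ in Step~2 finds no target, and after the $\Dlarge \to \wcomplete$ subrule is unavailable; hence $s = \justq_1 = \clarge = 0$ at the moment of creation. I would then verify that these invariants persist: while $\justDhalf$ exists, any later small item is routed through $\justDhalf \to \Dhalf^S$ before $\calE \to \S$; any later quarter item is routed through $\justDhalf \to \Dhalf^Q$ before $\calE \to \justQ1$ (the alternative $\justQ1 \to \justQ2$ branch does not create a new $\justQ1$ bin); and any arriving large item is absorbed by Step~1's Best Fit into $\justDhalf$, forming a $\wcomplete$ bin with two items larger than $\bindiff$ and $\wb \ge 4$, so that $\Dlarge$ is never opened. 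The delicate point is the careful reading of the priorities in Steps~1 and~2 that rules out each competing transition.

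Finally, for the total size in a triple in $\Qmatch$, I would combine the structural description in Section~\ref{subsec:packing-strategies} with the matching rule of Step~3. Each triple consists of two bins that were in $\Qonebig$ and one bin that was in $\Qtwofive$. Each $\Qonebig$ bin contains a quarter item of size greater than $\smallUB$ together with a big item of size greater than $\bigLB$. The $\Qtwofive$ bin contains two quarter items together with additional items of total $\wb$-weight at least $4$; the minimum-size realization of that weight is a single big item of size greater than $\bigLB$, so the triple contains at least four quarter items and three items larger than $\bigLB$. Summing gives total content strictly greater than $4\smallUB + 3\bigLB$, which equals $3(13-5\eps)$, as claimed.
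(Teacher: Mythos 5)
Your proof is correct and takes the only natural approach—a routine verification of each claim against the definitions and the packing rules; the paper states the observation without proof, so this is exactly what is implicitly intended. Two small imprecisions are worth flagging. First, in deducing $\delta\le4$ you write ``summing the individual bounds,'' but the individual bounds $\chalf\le1$, $\clarge\le1$, $\conenice\le2$, $\ctwonice\le1$ sum to $5$; you need the combined bound $\conenice+\ctwonice\le2$ (which you did prove) to get $\delta\le1+1+2=4$. Second, for $\Dthreenice$ you say ``such a bin contains three nice items,'' but the definition allows the third item to be a half item; since half items are larger than nice items ($>\bindiff>\niceLB$), the stated bound $>\nicefill=3(\niceLB)$ still holds, and in fact the all-nice case is the worst case, but the sentence as written misstates the bin's contents. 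Everything else, including the careful walk through the rule priorities establishing $\bar\delta_{\text{half}}\cdot\clarge=0$ and $\bar\delta_{\text{half}}\cdot(s+\justq_1)=0$, and the triple bound $2\bigl((\smallUB)+(\bigLB)\bigr)+\bigl(2(\smallUB)+(\bigLB)\bigr)=3(13-5\eps)$ for $\Qmatch$, is right.
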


\begin{lemma}
\label{lem:nofailq2}
    In the starting phase, the algorithm does not fail on an item that has size at most $\bindiff$, and as long as additionally $\justq_2>0$, the algorithm could only fail on a top item.
\end{lemma}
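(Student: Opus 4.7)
My plan is to prove both parts by elementary volume arguments, using the Rule of last resort to guarantee that Best Fit is attempted on every bin before the algorithm can declare failure.

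First I would dispose of items of size at most $\bindiff = 6-2\eps$. Suppose for contradiction that such an item $i$ of size $s$ does not fit anywhere. Then every one of the $m$ bins has current level strictly greater than $\onlbin - s \ge \onlbin - \bindiff = 12$, so the total already-packed volume strictly exceeds $12m$. But the input is feasible on $m$ offline bins of size $12$, so the total size of all items (including $i$) is at most $12m$, contradicting our estimate on what is already packed. The key numerical identity is $\onlbin - \bindiff = 12$, immediate from $\onlbin = 18-2\eps$ and $\bindiff = 6-2\eps$.

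For the second part, I would assume $\justq_2 > 0$ and exhibit a concrete witness bin. Pick any $b \in \justQ2$; by definition $b$ contains exactly two quarter items, each of size at most $\quarter = 4-4\eps$, so the level of $b$ is at most $8-8\eps$ and its free space is at least $\onlbin - (8-8\eps) = 10+6\eps = \topLB$. Every non-top item has size at most $\topLB$ and therefore fits in $b$. Consequently, Best Fit, invoked through the Rule of last resort, succeeds on every item that is not a top item.

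I do not expect a serious obstacle. The only minor points to verify are that $\justQ2$ bins really do contain only two quarter items (which is the defining description of $\justQ2$ from the table of $\calQ$-subsets) and that the Rule of last resort is always available in the starting phase as a fallback to the typed packing rules (this is explicit in the algorithm description). Combining the two parts then gives exactly the statement of the lemma.
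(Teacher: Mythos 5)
Your proposal is correct and follows essentially the same approach as the paper's proof: for the first part, both arguments amount to the observation that failure on an item of size at most $\bindiff$ would force every bin above level $\onlbin-\bindiff=\offbin$, contradicting the feasibility bound on total volume; for the second part, both note that a bin in $\justQ2$ has at least $\topLB$ free space, so every non-top item fits there via the rule of last resort. Your write-up just makes the numerical checks more explicit.
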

\begin{proof}
    If the algorithm failed on an item of size at most $\bindiff$, by the rule of last resort (page~\pageref{anyfit}) all bins would be more than $\offbin$ full, which cannot happen. 
    The algorithm will always use a bin in $\justQ2$ before failing completely, again by the rule of last resort. Any item that is not a top item fits in any $\justQ2$ bin.
\end{proof}

\begin{lemma}
\label{lem:newsq}
    If an item creates a new bin in $\S\cup\justQ1\cup\justQ2$, or a small or quarter item is packed into $\justDhalf$, all bins in $\calC$ that exist at that time contain (or are assigned) more than $(\onlbin)-(\niceLB)=\fant.$
\end{lemma}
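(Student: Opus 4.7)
The plan is to reduce every one of the four triggering scenarios to a single observation: in each case a \emph{small or quarter item} is being placed \emph{outside} of $\calC$, and by the structure of the algorithm that placement occurs only after Step~1 has tried and failed to put the item into a complete bin via First Fit. Concretely, new bins in $\S$ are opened by $\calE\to\S$ for small items, new bins in $\justQ1$ by $\calE\to\justQ1$ for quarter items, new bins in $\justQ2$ by $\justQ1\to\justQ2$ for quarter items, and $\justDhalf$ receives small or quarter items via $\justDhalf\to\Dhalf^S$ or $\justDhalf\to\Dhalf^Q$. All these transitions appear in Step~2, which is reached only when Step~1 could not find room in any bin of $\calC$; the Rule of Last Resort would only be invoked after Step~1 had also been tried and failed, so the observation persists there as well.

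Armed with that, I would partition $\calC$ into swap-assigned and non-swap-assigned bins. For swap-assigned bins, Observation~\ref{obs:large-completeswap} directly gives an assigned load of at least $\wbarfull = 14+2\eps$, which exceeds $\fant = 13-3\eps$. For the remaining bins in $\calC$, which are precisely those Step~1 tries via First Fit, the failure to place the item means that its size added to the bin's load would exceed $\onlbin$. Since both small and quarter items have size at most $\ffsmall = 4-4\eps$, each such bin's load is strictly greater than $\onlbin - \ffsmall = 14+2\eps$, again exceeding $\fant$.

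The only mildly delicate step is the case bookkeeping: I need to check that every listed triggering event really does involve a small or quarter item and that Step~1 has been completely attempted beforehand. Once that is in place, the numerical check collapses to the trivially true inequality $14+2\eps > 13-3\eps$ (equivalently, $5\eps > -1$), and the lemma follows.
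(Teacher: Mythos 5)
Your overall strategy is exactly the paper's: observe that the triggering item is small or quarter, note that Step~1 tries all bins in $\calC$ before any of the transitions $\calE\to\S$, $\calE\to\justQ1$, $\justQ1\to\justQ2$, $\justDhalf\to\Dhalf^S$, $\justDhalf\to\Dhalf^Q$ can occur, and conclude a load lower bound for every bin in $\calC$. You also go slightly beyond the paper by explicitly splitting off the swap-assigned bins (which Step~1 skips) and invoking Observation~\ref{obs:large-completeswap} for them, which is a careful and correct addition the paper's one-line proof elides.

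However, there is a concrete numerical error: you assert that both small and quarter items have size at most $\ffsmall = 4-4\eps$. This is false. The item-type table defines \emph{two} ranges of small items; the second one extends up to $\niceLB = 5+\eps$, and both the transition $\calE\to\S$ and a small item entering $\justDhalf$ can be triggered by such an item. Consequently your derived bound $\onlbin - (\ffsmall) = 14+2\eps$ is not actually attained; a small item of size $5+\eps$ could fail to fit in a complete bin of load just above $13-3\eps$. The correct deduction uses the true maximum size $\niceLB$ and yields precisely $\onlbin - (\niceLB) = \fant = 13-3\eps$, which is exactly the bound in the lemma's statement (indeed, that is why the lemma is phrased with this constant). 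Your final conclusion is numerically true because $14+2\eps > \fant$, but the intermediate claim ``load $> 14+2\eps$'' is wrong, so the proof as written does not hold. Replacing $\ffsmall$ by $\niceLB$ throughout repairs it and reproduces the paper's argument.
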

\begin{proof}
  Bins in $\S\cup\justQ1\cup\justQ2$ are exclusively created by items of size at most $\niceLB$, that is, small and quarter items. By Step 1 the bins in $\calC$ are tried first, so all of them must contain more than $\fant.$
\end{proof}

\begin{lemma}
\label{lem:shlow}
    The number of bins in $\S\cup\Dhalf^S$ in which at most $\ffsmall$ of small items is packed is at most 2.
\end{lemma}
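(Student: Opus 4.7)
My plan is to argue by contradiction using the First Fit rule applied to small items. I would assume that there exist three bins $b_1, b_2, b_3 \in \S\cup\Dhalf^S$ each containing at most $\ffsmall=4-4\eps$ of small items, indexed in the order in which they first received a small item. Because a bin enters $\S\cup\Dhalf^S$ precisely when it receives its first small item (either through $\calE\to\S$ or $\justDhalf\to\Dhalf^S$), this ordering is exactly the First Fit order on $\S\cup\Dhalf^S$.

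The key calculation I would carry out is the following. Since small-item levels only grow and $\ell_1\le\ffsmall$ at the current time, $b_1$ had level at most $\ffsmall$ throughout. Thus every small item ever packed into $b_2$ rather than $b_1$ must satisfy $s>\smallslot-\ffsmall=2-2\eps$ (else First Fit would have placed it in $b_1$). Combined with $\ell_2\le\ffsmall=4-4\eps$, this forces $b_2$ to contain at most one small item; since $b_2\in\S\cup\Dhalf^S$ contains at least one small item by definition, $b_2$ contains exactly one small item $y_2>2-2\eps$. Now letting $x_3$ denote the first small item added to $b_3$, First Fit and the fact that $b_2$ already contains $y_2$ at that moment give $y_2+x_3>\smallslot=6-6\eps$. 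But $y_2,x_3\le\smallUB=3-3\eps$ yields $y_2+x_3\le 6-6\eps$, a contradiction.

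The subtle point I would need to verify is that $b_2$ actually remains in $\S\cup\Dhalf^S$ between its first small item and the arrival of $x_3$, so that First Fit consults it when placing $x_3$. This follows from the bin type transitions listed in Section \ref{subsec:mainanalysis}: a bin in $\S$ can only move to $\wcomplete$ or $\Dhalf^S$ (the latter keeping it in $\S\cup\Dhalf^S$ without changing its small-item level), and once a bin leaves to $\wcomplete$ it never returns. Hence any bin currently in $\S\cup\Dhalf^S$ has been in $\S\cup\Dhalf^S$ continuously since its first small item. Note that the final inequality uses the tight identity $2\smallUB=\smallslot$ with no slack, which explains why the threshold $\ffsmall$ is exactly $\smallslot-\smallUB$ and why the bound of $2$ in the lemma is the right constant.
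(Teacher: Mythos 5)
Your proof is correct and follows essentially the same approach as the paper's: identify the first low-level bin, observe that all small items packed into later bins must exceed $2-2\eps$ (else First Fit would have used that first bin), deduce that any later low-level bin holds a single item in $(2-2\eps,\smallUB]$, and then use $2\cdot\smallUB=\smallslot$ to show First Fit cannot produce two such later bins. The only cosmetic difference is that you cast it as an explicit contradiction with three named bins, whereas the paper phrases it as a direct count; you also spell out the continuity-of-membership observation that the paper compresses into ``these bins remain in the same order.''
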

\begin{proof}
    All small items are of size either greater than $\ffsmall$ or at most $\smallUB$. All bins in $\S\cup\Dhalf^S$ are nonempty. These bins remain in the same order throughout the execution of the algorithm (but some bins may be moved to $\wcomplete$).
    Consider the first bin in $\S\cup\Dhalf^S$ that contains at most $\ffsmall$. (If there is none, we are done.) Since we use First Fit and pack up to $\smallslot$ of small items in each bin (and none of these bins contain more than $\onlbin-(\smallslot)$ of other items), any subsequent bin in $\S\cup\Dhalf^S$ contains only items of size more than $2-2\eps$. If any of these bins contains at most $\ffsmall$ of small items, they contain only a single such item with size in $(2-2\eps,\smallUB]$. Since we use First Fit and pack up to $\smallslot$, there can only be one such bin, meaning that there are at most two bins in $\S\cup\Dhalf^S$ that contain at most $\ffsmall$ of small items in total.
    This proves the lemma.
\end{proof}

The lower bound for the second range of small items was set to match the bound $\ffsmall$.

\subsection{Packing and weight guarantees for $\calQ$}
\label{sec:Q}

\begin{lemma}
\label{lem:bigq}
    Invariant \ref{inv:bigq} is maintained in the starting phase.
\end{lemma}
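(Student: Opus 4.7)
The plan is to induct on the sequence of arriving items, tracking two events that could break Invariant~\ref{inv:bigq}: (a) a fresh bin enters $\justQ1\cup\justQ2$ while a big-only bin exists, or (b) a fresh big-only bin is created while some bin lies in $\justQ1\cup\justQ2$. From the bin-type transitions listed in Section~\ref{subsec:mainanalysis}, a bin can enter $\justQ{}$ only by receiving a quarter item (via $\calE\to\justQ1$ or $\justQ1\to\justQ2$), and a bin can become big-only only when a big item is routed to $\calE$ or to a bin in $\S$. So only arrivals of quarter and big items matter; for other item types both sets can only shrink and the invariant is preserved by the inductive hypothesis.

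For a big item $i$ I argue via a size check: a bin in $\justQ1$ holds a single quarter (level $\le 4-4\eps$) and a bin in $\justQ2$ holds two quarters (level $\le 8-8\eps$), so $i$ (of size at most $\topLB=10+6\eps$) always fits in any $\justQ{}$ bin. In Step~1 the algorithm routes a non-small item first to $\Qonebig$ (if possible), then to an existing complete bin (which is already too full to accept a big), and then ``creates a complete bin'' by trying $\justQ2$, $\justQ1$, $\Dhalf^Q$ in this order before Best-Fitting elsewhere. Consequently, whenever $\justQ1\cup\justQ2\neq\emptyset$ at the moment $i$ arrives, $i$ is placed in such a bin, which leaves $\justQ{}$ while acquiring a quarter alongside the big, so no big-only bin is created. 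If $\justQ{}$ is empty at that moment, event (b) is vacuous.

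For a quarter item $i$ the only risky transition is $\calE\to\justQ1$ in Step~2; the transitions $\Qonebig\to\Qonefive$, $\justDhalf\to\Dhalf^Q$, $\justQ1\to\justQ2$, and placement into a complete bin neither increase $|\justQ{}|$ nor produce big-only bins. In the risky case, Step~4 (the swapping rule) fires exactly as required: if a large-complete big-only bin $b$ is present, the new bin $\bar b$ is virtually reclassified as being in $\S$ and $b$ is virtually credited with a quarter, so that $\bar b\notin\justQ1\cup\justQ2$ and $b$ gains a weighted item (the quarter has $\wq=1$), losing its big-only status.

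The main obstacle is reconciling the virtual semantics of the swap when several big-only bins coexist at once (which is possible because each was created at a moment with $\justQ{}=\emptyset$). I would handle this iteratively: a single swap per $\justQ1$-creation suffices because after the swap the virtual $\justQ{}$ set is empty, so the surviving big-only bins vacuously satisfy the invariant; a subsequent quarter that again creates a $\justQ1$ triggers another swap with one of the remaining big-only bins, and so on. Finally, I would use Observation~\ref{obs:large-completeswap} to confirm that the virtual level assigned by the swap stays consistent with any later item routed to $b$ or $\bar b$, so that the induction in the starting phase closes without circularity.
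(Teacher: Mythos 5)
Your proposal is correct and follows essentially the same route as the paper's proof: in both, (i) Step 1 routes big items into $\justQ1\cup\justQ2$ whenever that set is nonempty (a size check confirms they always fit), so a big-only bin can only be created when $\justQ1\cup\justQ2=\emptyset$, and (ii) the swapping rule prevents a new $\justQ1$ bin from coexisting with a big-only bin, handling multiple big-only bins one swap at a time exactly as you describe. The final appeal to Observation~\ref{obs:large-completeswap} is not actually needed here -- that observation concerns volume accounting for the fill-up analysis, whereas the invariant is purely about weight/type bookkeeping, which the virtual swap already settles.
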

\begin{proof}
    The invariant holds at the start. As described above, arriving quarter items are packed with big items and 
    big items are packed into $\justQ1\cup\justQ2$ in Step 1 if possible. In particular, these options are considered before possibly creating any bin considered in Invariant \ref{inv:bigq}.
    Thus if a big item is packed outside of $\justQ1\cup\justQ2$, then $\justQ1\cup\justQ2=\emptyset$.
    A bin in $\justQ2$ can only be created from $\justQ1$. Due to the swapping rule, a bin in $\justQ1$ is not created as long as a bin with a big item as its only item with weight exists, so $\justQ1\cup\justQ2=\emptyset$ remains true until all such bins have been transformed using the swapping rule (or until the input ends).
\end{proof}

\begin{lemma}
\label{lem:qchanges} 
    The sum $q_1+\chalf^Q+q_2$ only decreases by application of the matching rule; this is also the only way to decrease $q_2$.
    The value $\justq_2$ can decrease at any time. 
   The value $q_1+\chalf^Q$ can only increase if $\justq_1+\chalf^Q$ increases on arrival of a quarter item. The number $q_2$ can only increase if $\justq_2$ increases; at such a time $\justq_1$ decreases (and $\chalf^Q$ remains constant). Apart from the matching rule, this is the only way to decrease $q_1+\chalf^Q$.
\end{lemma}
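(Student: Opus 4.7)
The plan is to prove each clause of the lemma by case analysis on the events that can alter any of $\justq_1,\justq_2,\qonebig,\qonefive,\qtwofive$ or $\chalf^Q$. By the enumeration in Section~\ref{subsec:mainanalysis} these events are exactly the arrival of a quarter, dominant, half, large, small or nice item together with the Step~3 matching rule, so first I would compile a small transition table with one row per event and one column per variable; almost every entry is~$0$. Small and nice items do not touch any of the six variables (the only possibility, $\Qonebig\to\Qonefive$, preserves $q_1$), and half or large items trigger only $\justQ1\to\Dhalf^Q$, $\Dhalf^Q\to\Qonefive$, or $\Qonebig\to\Qonefive$, all of which preserve both $q_1+\chalf^Q$ and $q_2$. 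The nontrivial rows are the quarter item (transitions $\calE\to\justQ1$, $\justDhalf\to\Dhalf^Q$, $\justQ1\to\justQ2$, $\Qonebig\to\Qonefive$), the dominant item (transitions $\justQ2\to\Qtwofive$, $\justQ1\to\Qonebig$, $\justQ1\to\Qonefive$, $\Dhalf^Q\to\Qonefive$, $\Qonebig\to\Qonefive$), and the matching rule ($\qonebig$ drops by~$2$ and $\qtwofive$ by~$1$).

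Given this table, each assertion reduces to a direct inspection. The sum $q_1+\chalf^Q+q_2$ is preserved by every non-matching transition, is increased by~$1$ only by $\calE\to\justQ1$ and $\justDhalf\to\Dhalf^Q$, and is decreased by~$3$ only by matching -- this gives the first statement. The quantity $q_2$ changes only via $\qtwofive$ or $\justq_2$: $\justQ2\to\Qtwofive$ preserves $q_2$, $\justQ1\to\justQ2$ raises it by one, and matching drops it by one, so $q_2$ only decreases via matching and only increases when $\justq_2$ does, which yields the second and fifth statements. Moreover $\justq_2$ can grow only via the quarter-item transition $\justQ1\to\justQ2$, in which $\justq_1$ drops by~$1$ and $\chalf^Q$ is untouched, giving the sixth statement. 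The statement that $\justq_2$ may decrease at any time is immediate: a dominant item may arrive at any point and Step~1 routes it to $\justQ2$ whenever $\justq_2>0$.

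For $q_1+\chalf^Q$, the table shows that this quantity is preserved by every transition outside the following four: $\calE\to\justQ1$ and $\justDhalf\to\Dhalf^Q$ each raise it by~$1$, each also raises $\justq_1+\chalf^Q$ by~$1$, and each occurs only on arrival of a quarter item (the fourth claim); the matching rule drops it by~$2$; and $\justQ1\to\justQ2$, which is the unique source of an increase in $\justq_2$, drops it by~$1$. This simultaneously yields the seventh claim.

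The main obstacle I foresee is not any individual computation but rather exhaustiveness: I have to be sure that the list of transitions in Section~\ref{subsec:mainanalysis} really captures every way in which the six variables can change under the algorithm. The delicate point is Step~1's ``create a complete bin'' branch applied in turn to $\justQ2,\justQ1,\Dhalf^Q$: I would need to verify against both clauses of the definition of $\Qfive$ (and the special clause by which a $\justDhalf$ bin re-enters $\calQ$) that no further transitions are triggered beyond those listed. For example, a half or large item added to $\justQ1$ or $\justQ2$ yields $\wb\le 2<4$ and so does not create a $\Qfive$ bin, so Step~2 handles these cases and the listed transitions are indeed complete. Once exhaustiveness is confirmed, the remainder of the proof is a mechanical table lookup.
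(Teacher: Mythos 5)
Your proposal is correct and follows essentially the same approach as the paper: a case analysis of which bin-type transitions can affect $q_1$, $q_2$, $\justq_1$, $\justq_2$, $\chalf^Q$, with the matching rule handled separately. The paper presents the argument more tersely as a narrative (bins leave $\calQ$ only via $\justQ1\to\Dhalf^Q$ or matching; bins joining $\Q2$ stay there unless matched; $q_1$ and $\chalf^Q$ only increase on a quarter arrival or on the balanced exchange $\Dhalf^Q\to\Qonefive$), while you organize the same information as a transition table and check the weight-threshold that rules out spurious $\Qfive$ creations from half/large items ($\wb\le2<4$) — that check is correct since large items have $\wb=2$. The substance is identical.
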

\begin{proof}
The only option for a bin to leave $\calQ$ besides the matching rule 
is 
if 
it 
is 
in 
$\justQ1$ and receives a half item (keeping $\justq_1+\chalf^Q$ constant). Bins can transition from $\Q1$ to $\Q2$, but any bin joining $\Q2$ will
stay there unless it gets matched. A bin can leave $\justQ2$  by receiving a dominant item (and this can happen at any time). This shows the first two statements.

The value $q_1$ only increases if a new quarter item is packed into an empty bin, increasing $\justq_1$, or if the bin from $\Dhalf^Q$ comes back to $\Q1$, which keeps $q_1+\chalf^Q$ constant.

The value $\chalf^Q$ can increase on arrival of a quarter item. If a half item causes the increase, $\justq_1+\chalf^Q$ remains constant as discussed above. As these are the only options for increasing these values, the third statement follows.

Since the first two items in each bin in $\Q2$ are quarter items,
the only way to create a new bin in $\Q2$ is by packing a quarter item in a bin in $\justQ1$, increasing $\justq_2$ and decreasing $\justq_1$. Regarding $q_1 + \chalf^Q$, we also have the possibility of a bin leaving $\Dhalf^Q$; this happens when this bin receives a half item or larger. It then enters $\Qfive$, or, more specifically, $\Qonefive$. This change thus sets $\chalf^Q$ to zero but increases $q_1$ by one, so $q_1 + \chalf^Q$
does not decrease.
\end{proof}

\begin{theorem}
    \label{thm:q1q2}
    In the starting phase, at all times we have 
    \begin{equation}
        \label{eq:q1inv}
        q_1+\chalf^Q \le 2q_2+\qconstant    
    \end{equation}
     and if $q_2>0$ we also have
    \begin{equation}
        \label{eq:weightinv}
        2q_2+\qcminusthree\le q_1+\chalf^Q.
    \end{equation}
\end{theorem}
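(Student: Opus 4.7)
The plan is to introduce the potential $\Phi := q_1 + \chalf^Q - 2q_2$ and prove, by induction on the events of the algorithm, that $\Phi \le \qconstant$ at all times (this is \eqref{eq:q1inv}) and that $\Phi \ge \qcminusthree$ whenever $q_2 > 0$ (this is \eqref{eq:weightinv}). The key observation driving the argument is that the quarter-item branch of Step~2 tests exactly the condition $\Phi \ge \qconstant$, so the algorithm itself regulates the potential at the upper threshold.

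The first task is to enumerate all events that change the triple $(q_1, \chalf^Q, q_2)$, using Lemma~\ref{lem:qchanges} as a checklist and walking through every bin-type transition allowed by the algorithm. The result is that only arrivals of quarter items handled by Step~2 change $\Phi$: the branch $\justQ1 \to \justQ2$ fires exactly when $\Phi \ge \qconstant$ and gives $\Delta \Phi = -3$, while the branches $\justDhalf \to \Dhalf^Q$ and $\calE \to \justQ1$ each fire only when $\Phi < \qconstant$ and give $\Delta \Phi = +1$. Every other transition preserves $\Phi$: the matching rule trades $\Delta q_1 = -2$ against $\Delta q_2 = -1$; a half item routed into $\justQ1$ trades $\Delta q_1 = -1$ against $\Delta \chalf^Q = +1$; and any half, large, or dominant item added to a bin in $\Dhalf^Q$ sends that bin to $\Qonefive$ (via the ``Additionally'' clause of the $\calQ$ definition if the bin's history passed through $\justDhalf$, or via the main definition of $\calQ$ if it passed through $\justQ1$), trading $\Delta \chalf^Q = -1$ against $\Delta q_1 = +1$. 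All Step~1 placements that move a bin within $\Q1$ (such as $\Qonebig \to \Qonefive$ or $\justQ1 \to \Qonebig$) or within $\Q2$ (such as $\justQ2 \to \Qtwofive$) leave all three counts unchanged, as do placements into $\S$, $\nice$, $\Dlarge$, $\calE$, and preexisting complete bins.

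With the enumeration in hand, the induction is short. Initially $\Phi = 0$ and $q_2 = 0$, so both bounds hold (vacuously for \eqref{eq:weightinv}). The upper bound \eqref{eq:q1inv} is preserved because $\Phi$ only strictly increases via the $+1$ branches, which are gated by $\Phi < \qconstant$. For the lower bound, $q_2$ can become positive only through the $-3$ branch, which brings $\Phi$ from exactly $\qconstant$ down to $\qcminusthree$; from then on $\Phi$ moves only by $+1$, by $0$, or by $-3$, and the $-3$ move fires only from $\Phi = \qconstant$, so $\Phi$ remains in $\{\qcminusthree, \dots, \qconstant\}$ as long as $q_2 > 0$. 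The matching rule preserves $\Phi$ and only decreases $q_2$, so it either keeps the invariant in force or renders it vacuous.

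The main technical obstacle is confirming the enumeration is exhaustive, and in particular ruling out a spurious $\Delta \Phi = -1$ event in which a dominant item would turn a $\Dhalf^Q$ bin into a large-complete bin, decrementing $\chalf^Q$ without compensating change in $q_1$. This is prevented by the ``Additionally'' clause in the definition of $\calQ$: the bin enters $\calQ$ at the very same step, failing the ``never in $\calQ$'' requirement of Definition~\ref{def:large-complete}, and is therefore classified as $\Qonefive$ instead. The analogous issue for a nice item added to $\Dhalf^Q$ is ruled out by the nice rule, which forbids packing a nice item into $\Dhalf$. Once these bookkeeping points are pinned down, the potential argument closes cleanly.
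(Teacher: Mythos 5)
Your potential-function framing $\Phi := q_1 + \chalf^Q - 2q_2$ is a cleaner repackaging of exactly the argument the paper gives: the paper's proof also establishes \eqref{eq:q1inv} by observing that the quarter-item rule only increases $q_1+\chalf^Q$ when the threshold test fails, and then derives \eqref{eq:weightinv} by showing equality $q_1+\chalf^Q=2q_2+\qcminusthree$ at the moment $q_2$ increments (your ``$-3$ move fires only from $\Phi=\qconstant$''), with Lemma~\ref{lem:qchanges} doing the work of the transition enumeration in both treatments. One transition your enumeration omits is the swapping rule (Step~4): when it fires, the bin $\bar b$ that was just placed in $\justQ1$ (your $+1$ move) is immediately reclassified into $\S$, so $q_1$ drops back by one; the net effect across the arrival and the swap is $\Delta\Phi=0$, so the invariants survive, but a complete enumeration should list it.
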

\begin{proof}
    At the start, $q_1=q_2=0$. Applications of the matching rule do not invalidate the claimed bounds. 
    From \Cref{lem:qchanges} we know that $q_1 + \chalf^Q$
    increases only when $\justq_1 + \chalf^Q$ increases on arrival of a quarter item. As the packing rule for quarter items increases $\justq_1 + \chalf^Q$ only when $q_1+\chalf^Q < 2q_2+\qconstant$, the inequality (\ref{eq:q1inv}) is maintained as
    $q_1 + \chalf^Q$ increases. As $q_2$ is nondecreasing whenever the matching rule is not applied (again by \Cref{lem:qchanges}), the inequality (\ref{eq:q1inv}) holds throughout the starting phase.

    We turn our attention to inequality (\ref{eq:weightinv}), which is only valid when $q_2 > 0$. We focus on the moment just before
    $q_2$ increases, when the quarter item is about to be placed. \Cref{lem:qchanges} tells us that this happens only when $\justq_2$ is about to be increased. Combining inequality (\ref{eq:q1inv}) with the packing rule for quarter items, which says that at this time, we have $q_1+\chalf^Q \ge 2q_2+\qconstant$, we conclude that just before the adjustment, we have that $q_1 + \chalf^Q = 2q_2 + \qconstant$.

    If we denote $q'_2$ and $q'_1$ as the updated values after the quarter item is placed, we know that $q'_2=q_2+1$ and $q'_1=q_1-1$. We thus have
    \[2q'_2 + \qcminusthree = 2(q'_2-1) + \qcminusthree + 2 = 2q_2 + \qconstant -1 = q_1 + \chalf^Q -1
    = (q_1 - 1) + \chalf^Q = q'_1 + \chalf^Q. \]

    Thus, whenever $q_2$ increases, the inequality (\ref{eq:weightinv}) holds with equality. This is sufficient by the last statement of  \Cref{lem:qchanges}.
\end{proof}

\begin{lemma}
    \label{lem:q1b}
    If $\justq_2>0$, then $\qonebig=0$. At all times $\qonebig\le\qconstant$.
\end{lemma}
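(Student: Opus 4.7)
The plan is to establish the two claims in turn. The first follows from the Step~1 priority list, which intercepts big and quarter items before they can simultaneously activate $\qonebig$ and $\justq_2$; the second combines the first with the matching rule (Step~3) and Theorem~\ref{thm:q1q2}.

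For the first claim, I maintain ``$\justq_2 > 0 \Rightarrow \qonebig = 0$'' inductively over the arriving items. By Lemma~\ref{lem:qchanges} together with the bin-transition list in Section~\ref{subsec:mainanalysis}, the only way $\qonebig$ can increase is when a big item enters a $\justQ1$ bin (creating a bin in $\Qonebig$), and the only way $\justq_2$ can increase is when a quarter item enters a $\justQ1$ bin (creating a bin in $\justQ2$). It therefore suffices to rule out each event while the other quantity is positive. For the first event, Step~1's ``create a complete bin'' sub-step tries $\justQ2, \justQ1, \Dhalf^Q$ in this order, and a big item always fits in a $\justQ2$ bin: two quarter items occupy at most $\bigfree$, leaving space $\onlbin - \bigfree = \topLB$, exactly the maximum big-item size. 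The resulting bin joins $\Qtwofive \subseteq \calC$, so while $\justq_2 > 0$ a big item is always absorbed before $\justQ1$ is ever considered. For the second event, Step~1's ``for non-small items: $\Qonebig \to \Qonefive$'' is tried strictly before the quarter rule of Step~2. As long as a bin is in $\Qonebig$ its contents consist only of its initial quarter and big items (small items never enter $\Qonebig$, and any other non-small addition would remove the bin from $\Qonebig$), so the remaining space is at least $\onlbin - \quarter - \topLB = \quarter$, which always accommodates an arriving quarter item; that addition pushes $\wq$ from $4$ to $5$, so the bin enters $\Qonefive \subseteq \calC$. Hence while $\qonebig > 0$, no quarter item ever reaches Step~2.

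For the second claim, the matching rule (Step~3) is applied exhaustively after every item (as indicated by its stated purpose of minimizing $\qonebig$), so at all times either $\qonebig \le 1$ or $\qtwofive = 0$. If $\qonebig \le 1$ we are immediately done. Otherwise $\qtwofive = 0$, and the first claim applied at that moment (with $\qonebig > 0$) gives $\justq_2 = 0$, so $q_2 = \justq_2 + \qtwofive = 0$. Theorem~\ref{thm:q1q2} then yields $q_1 + \chalf^Q \le 2q_2 + \qconstant = \qconstant$, and since $\Qonebig \subseteq \Q1$ by~\eqref{inv:q1} we conclude $\qonebig \le q_1 \le q_1 + \chalf^Q \le \qconstant$.

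The main point of care is to get the priority structure of Step~1 right, so that big items and quarter items are each absorbed into complete-bin-creating moves before any transition relevant to $\qonebig$ or $\justq_2$ can fire, together with the exhaustive reading of Step~3. Once these are in place, both claims reduce to short arithmetic checks on the item-type bounds and a single application of Theorem~\ref{thm:q1q2}, so no deeper obstacle is expected.
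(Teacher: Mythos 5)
Your proof is correct and follows essentially the same approach as the paper's: the first claim is established by observing that the Step~1 priorities force a big item into $\justQ2$ before $\justQ1$ and force a quarter item into a $\Qonebig$ bin before creating a new $\justQ1$/$\justQ2$ bin (so the two events ``$\qonebig$ increases'' and ``$\justq_2$ becomes positive'' are mutually exclusive, which you verify with explicit size arithmetic the paper leaves implicit), and the second claim combines the matching rule with Theorem~\ref{thm:q1q2} exactly as the paper does. The only cosmetic difference is that the paper tracks the state at the moments $\justq_2$ becomes positive or $\qonebig$ increases, while you phrase it as an induction ruling out the two offending transitions, but these are the same argument.
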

\begin{proof}
    Consider a time at which $\justq_2=0$ and we create a bin in $\justQ2$. At this time $\qonebig=0$ by Step 1 of our algorithm.
    After this, $\qonebig$ can only increase if $\justq_2=0$ again, because for big items the algorithm prefers $\justQ2$. Thus if $\qonebig$ increases we always have $\Q2=\Qtwofive$. By the matching rule (step 3 of the algorithm), we match two bins in $\Qonebig$ to a bin in $\Qtwofive$ whenever possible. Hence as long as $\qtwofive>0$, we have $\qonebig\le1$. If all bins in $\Qtwofive$ are matched and $\qonebig$ increases, then $\qonebig\le q_1-2\qtwofive=q_1-2q_2\le\qconstant$ by Theorem \ref{thm:q1q2}.
\end{proof}

The inequality (\ref{eq:q1inv}) will allow us to reassign packed size to the bins in $\justQ1$ later and reach an average packed size of (almost) $\ffsmall$.  
The first inequality is used to reassign weight in the next theorem.

\begin{theorem}\label{thm:q2weightbound}
    If $\justq_2>0$ and 
    no bin has entered $\topblock$ by the rule of last resort, 
    then $q_1\ge2q_2+\qcminusfive$ 
    and it is possible to reassign weight so that 
    \[
    \wq(\topblock\cup\Dnice)\ge4\abs{\topblock\cup \Dnice}
    \] 
    and the $\wq$-weight of all other bins is at least 0. In particular, we have $\topthreat\le m-\abs{\topblock\cup\Dnice}$ whenever $\justq_2>0$.
\end{theorem}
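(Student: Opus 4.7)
The first inequality $q_1 \ge 2q_2 + \qcminusfive$ is immediate from Theorem~\ref{thm:q1q2}: because $\justq_2 > 0$ implies $q_2 > 0$, the second part there gives $q_1 + \chalf^Q \ge 2q_2 + \qcminusthree$, and $\chalf^Q \le \chalf \le 1$ (Observation~\ref{obs:deltabounds}) closes the one-unit gap.

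For the reassignment, my plan is to prove the stronger global bound $\wq(I) \ge 4|T|$ with $T := \topblock \cup \Dnice$. This bound both enables the reassignment (put exactly $4$ on each bin in $T$ and put the non-negative remainder arbitrarily on the other bins) and yields $\topthreat \le m - |T|$ via Lemma~\ref{lem:threats} applied with $k = |T|$. Since bins in $\Dnice$ hold a single nice item of size at most $\bindiff$, $\Dnice$ and $\topblock$ are disjoint, so $|T| = |\topblock| + \conenice$. The bins outside $T$ are exactly $\S$, $\justQ1$, and $\calE$; they contribute at least $\justq_1$ to $\wq(I)$ via the single quarter item in each $\justQ1$ bin.

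The computation inside $T$ leans on two consequences of $\justq_2 > 0$ (together with the hypothesis that no bin entered $\topblock$ via the rule of last resort, which keeps the algorithm's structural invariants in force): (i) Lemma~\ref{lem:q1b} gives $\qonebig = 0$; and (ii) Invariant~\ref{inv:bigq} combined with the swap rule ensures every bin in $\wcomplete$ has effective $\wq \ge 4$, since any large-complete bin whose sole weight-bearing item is a big item has been virtually paired with a quarter pulled from a $\justQ1$ bin (the latter then being accounted as an $\S$ bin). The remaining complete bin types satisfy $\wq \ge 5$ per bin in $\Qfive$, at least $4+4+5=13$ per triple in $\Qmatch$, and at least $6$ per bin in $\nice = \Dthreenice$. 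The deficit bins in $T$ are therefore only $\justQ2$, $\Dnice$, $\Dhalf$, and $\Dlarge$, with combined deficit bounded by $2\justq_2 + 2\conenice + 2\chalf + 2\clarge \le 2\justq_2 + 8$ using Observation~\ref{obs:deltabounds}.

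The final calculation chains the first claim with $\qonebig = 0$: writing $q_1 = \justq_1 + \qonefive$ and $q_2 = \justq_2 + \qtwofive$, the inequality $q_1 \ge 2q_2 + \qcminusfive$ rearranges to $\justq_1 + \qfive \ge 2\justq_2 + 3\qtwofive + 10 \ge 2\justq_2 + 10$, which already exceeds the deficit bound $2\justq_2 + 8$ before even counting the surpluses from $\nice$ and $\Qmatch$, so $\wq(I) \ge 4|T|$. The main technical obstacle I anticipate is the exhaustive case analysis required to certify $\wq \ge 4$ for every bin in $\wcomplete$: systematically enumerating the possible compositions of a large-complete bin (top alone; big plus something; two middle items of the right types) and invoking the swap rule properly in the big-only and big+small subcases. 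Everything else amounts to assembling the already-established counting invariants.
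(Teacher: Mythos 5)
Your proof is correct and takes essentially the same approach as the paper's: invoke $q_1 + \chalf^Q \ge 2q_2 + \qcminusthree$ from Theorem~\ref{thm:q1q2}, observe that every bin in $\calC\cup\Dtwonice$ already carries $\wq\ge 4$ (using Invariant~\ref{inv:bigq} and the swap rule to handle big-item bins), bound the deficit in $\justQ2\cup(\Delta\setminus\Dtwonice)$ by $2\justq_2+8$, and cover it from the $\Q1$ bins. The paper frames the last step as an explicit reassignment of $1$ unit of weight from bins in $\Q1$ to each deficient bin, whereas you derive the global inequality $\wq(I)\ge 4|T|$ and note this is equivalent to the existence of the reassignment; the arithmetic is the same either way. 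One tiny imprecision worth flagging: the bins outside $T$ are not \emph{exactly} $\S\cup\justQ1\cup\calE$, since a $\justQ2$ bin whose two quarter items sum to at most $\bindiff$ lies outside $\topblock\cup\Dnice$ as well; this only helps your bound (such a bin contributes $\wq=2\ge 0$ and shrinks $|T|$), so the argument goes through unchanged. Also, the case analysis you anticipate for $\wcomplete$ is simpler than you fear: a big item is the only type with $\wb>\wq$, so $\wb\ge 4$ without a big item forces $\wq\ge 4$, and with a big item Invariant~\ref{inv:bigq} (or the swap assignment) supplies the missing unit.
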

\begin{proof}
    The bins in $\Qonefive$
    contain $\wq$-weight 5. The large-complete bins used by the swapping rule have assigned $\wq$-weight 4, as they contain a big item and have a quarter item assigned to them which is not in $\Q{}$. 
    By the assumption and Invariant \ref{inv:bigq}, every bin that contains a big item also contains (or is assigned) another item with $\wq$-weight. We see that all bins in $\calC\cup\Dtwonice$ contain (or are assigned) $\wq$-weight 4 already. 
   
    We next show that if $\justq_2>0$, we can reassign weights such that $\wq(\relev\cup\wcomplete)=\wq(\S\cup \justQ1\cup\justQ2 \cup\calC\cup\Delta)\geq4(\justq_2+c+\delta)$ and $\wq(\S\cup\justQ1)\ge0$.
    The equality follows from (\ref{eq:relev}) and the fact that $\qonebig=0$ if $\justq_2>0$ (Lemma \ref{lem:q1b}). If all bins are nonempty and some items are placed into bins where they do not belong, the weight of those bins is not lower than the bounds given here. Further, in this case we have $\topblock\subseteq \justQ2\cup\calC\cup\Delta\setminus\Dnice$. 

    Bins in $\justQ2\cup\Delta$ contain $\wq$-weight at least 2.
    By Observation \ref{obs:deltabounds}, $\delta\le4$, so at most $2\justq_2+8$ weight is missing in $\justQ2\cup\Delta$.
    By (\ref{eq:weightinv}), we have $q_1+\chalf^Q\ge2q_2+\qcminusthree$ if $\justq_2>0.$ 
    We reassign 1 weight from at most $2\justq_2+8 \le 
    2q_2+\qcminusfour$ 
    bins in $\Q1$ to $\justQ2\cup(\Delta\setminus\Dtwonice)$
    to get the claimed result (each bin receives at most 2 weight). This maintains $\wq(\Qonefive)\ge 4\qonefive$ and $\wq(\justQ1)\ge0$. The bound on $\topthreat$ follows from Lemma \ref{lem:threats}.
\end{proof}

From this point, many of the following packing guarantees apply to all except constantly many bins. For that reason we introduce a set $\X$ of the following exceptional bins.

\begin{itemize}\label{def:X}
    \item $\X_s$: the bins in $\S$ filled less than $\ffsmall$ or bin in $\Dhalf^S$ with less than $\ffsmall$ of small items.
    \item $\X_\lc$: the bins in $\wcomplete$ that have level less than $\wsqfull$. 
    \item $\X_q$: a set that remains if we remove a subset containing $2q_2$ bins from $\Q1\cup\Dhalf^Q$, preferring bins in $\Qonebig$ to be removed. 
    We have $x_q\le\qconstant$ by (\ref{eq:q1inv}).
    The set $\X_q$ is initialized when we first create a bin in $\justQ1$ and updated when the numbers $q_1$ or $q_2$ change (by creating new bins in $\justQ1\cup\justQ2$). 
    \item $\Dnice\cup\Dtwonice$.
\end{itemize}
The sets that make up the set $\X$ are only considered for the analysis. 
For packing decisions, the algorithm considers the bins in $\X_q$ to still be in $\Q1$.

\begin{theorem}
\label{thm:qsizebound}
    At any time, we can reassign volume among bins in $\calQ\setminus\Qmatch$ such that we have the following lower bounds for the levels in the various subsets of  $\calQ.$ 
       \[\begin{array}{rccc}
           A & \Q1\setminus(\Qfive\cup\Qonebig)
           &  \Qonefive & \Qonebig
           \\ \hline\rule{0pt}{2.6ex}
        \text{Lower bound for }
        A\cap\X_q
        & \smallUB 
        & (\smallUB)+(\topLB) &(\smallUB)+(\bigLB) 
        \\
        \text{Lower bound for }
        A\setminus\X_q
        & \ffsmall 
        & (\ffsmall)+(\topLB) & 
        \\
        \multicolumn{1}{c}{}\\
        A &  \justQ2& \Qtwofive 
        \\ \hline\rule{0pt}{2.6ex}
    \text{Lower bound for }A        
    &\ffsmall &(\ffsmall)+(\bigLB) 
    \end{array}
    \]
\end{theorem}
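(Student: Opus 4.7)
The plan is to partition $\calQ\setminus\Qmatch$ into the exceptional bins indexed by $\X_q$ and the rest, establish the $\X_q$ rows directly from item-type thresholds, and then transfer volume from the $q_2$ bins in $\Q2$ to lift the non-$\X_q$ bins in $\Q1$ by $1-\eps$ each, effectively replacing the operative quarter lower bound $\smallUB$ by $\ffsmall$.

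For the $\X_q$ rows I would use the lower endpoints of the item intervals: a quarter exceeds $\smallUB$, a big item exceeds $\bigLB$, a top item exceeds $\topLB$, and a half exceeds $\bindiff$. Thus a bin in $\justQ1\cap\X_q$ satisfies the $\smallUB$ bound trivially, and a bin in $\Qonebig\cap\X_q$ satisfies $\smallUB+\bigLB$ from quarter plus big. For $\Qonefive\cap\X_q$ I would enumerate the ways a bin arrives in $\Qonefive$: from $\justQ1$ via a top item (level $>\smallUB+\topLB$, matching exactly), from $\Qonebig$ via a further non-small item in Step~1 (level $>\smallUB+\bigLB+\smallUB=15-7\eps$), or from $\Dhalf^Q$ via a half or larger item (level $>2(\bindiff)+\smallUB=15-7\eps$). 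The latter two exceed $\smallUB+\topLB=13+3\eps$ as soon as $\eps\le 1/5$, which covers our parameter range.

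For the reassignment, the definition of $\X_q$ together with Theorem~\ref{thm:q1q2} (which yields $q_1+\chalf^Q\ge 2q_2$ whenever $q_2>0$) ensures that $(\Q1\cup\Dhalf^Q)\setminus\X_q$ has exactly $2q_2$ elements. Each $\Q2$ bin contains two quarters of combined size $>6-6\eps$, giving an excess of $2-2\eps$ over the $\ffsmall$ target. I would pair each $\Q2$ bin with two non-$\X_q$ bins in $\Q1$ and transfer $1-\eps$ of volume across each pair. This raises $\justQ1$ bins from $>\smallUB$ to $>\ffsmall$ and $\Qonefive$ bins from $>\smallUB+\topLB$ to $>\ffsmall+\topLB$, while every $\Q2$ bin retains at least $\ffsmall$ from its two quarters, and $\Qtwofive$ bins additionally keep whatever comes from the $\wb\ge 4$ contribution.

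The main obstacle is the tightness of the $\Qtwofive$ bound $\ffsmall+\bigLB=13-5\eps$: this is attained exactly when the $\wb\ge 4$ addition is a single big item, giving $6-6\eps+(9-\eps)-(2-2\eps)=13-5\eps$, while the alternatives (one top item or two middle items) leave strictly more. A secondary subtlety is that a non-$\X_q$ bin may lie in $\Dhalf^Q$ rather than $\calQ$; since the reassignment must stay within $\calQ\setminus\Qmatch$, that bin cannot receive volume, but one can instead pair the corresponding $\Q2$ bin with only one $\Q1$ bin while the other $\Q2$ bins pair with two each. The single-paired $\Q2$ bin retains $>6-6\eps-(1-\eps)=5-5\eps>\ffsmall$, every non-$\X_q$ $\Q1$ bin still receives $1-\eps$, and all listed bounds hold.
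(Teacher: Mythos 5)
Your proposal is correct and follows the same core approach as the paper: derive the $\smallUB$, $\bigLB$, $\topLB$ terms directly from item size thresholds, then reassign $2-2\eps$ from each $\Q2$ bin (quarter content $>\smallslot$) to two non-$\X_q$ bins in $\Q1$, giving $1-\eps$ each and landing every bin at the $\ffsmall$ average. You are, if anything, slightly more careful in two places: (i) you explicitly enumerate the ways a bin reaches $\Qonefive$ (via a top item, via $\Qonebig$ plus a non-small item, or via $\Dhalf^Q$ plus a half-or-larger item), whereas the paper compresses this to ``one top item or two half items''; and (ii) you note that the $2q_2$ removed bins live in $\Q1\cup\Dhalf^Q$ rather than in $\Q1$ alone, so one $\Q2$ bin may have to be paired with a single $\Q1$ bin --- the paper writes $q_1-x_q=2q_2$ which is only exact when $\chalf^Q=0$, so your handling of the $\chalf^Q=1$ case closes a small gap the paper glosses over.
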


\begin{proof}
    Consider the first column and the first summands in the other columns. These entries correspond to the quarter items. Each bin in $\Q1$ contains at least one quarter item and each bin in $\Q2$ contains at least two quarter items. By definition we have $\X_q\subseteq\Q1\cup\Dhalf^Q$. As each quarter item is of size at least $\smallUB$ the values for the first column and the first summands in the first line are explained. This leaves the four appearances of the number $\ffsmall$ in the table unexplained. We have $q_2$ bins in $\Q2$ and exactly $2q_2$ bins in $\Q1\setminus\X_q$ by the definition of $\X_q$ and invariant~(\ref{eq:q1inv}). Therefore $q_1-x_q=2q_2$.
    Each $\Q2$ bin contains quarter items of total size more than  $\smallslot$ and each bin in $\Q1\setminus\X_q$ contains a quarter item of size more than $(\ffsmall)-(1-\eps)$. Hence we can reassign $2-2\eps$ from one bin in $\Q2$ away such that two bins in $\Q1\setminus\X_q$ receive $1-\eps$ each, for an average of $\ffsmall.$ 

    This leaves only the second summands in the second column unexplained. It follow from the fact that the bins in $\Qonefive$ contain one top item or two half items and bins in $\Qtwofive$ contain one dominant item. The bins that contain one quarter item and one big item are always either in $\Qonebig\subseteq(\Q1\setminus\Q5)$ or in $\wcomplete$.
\end{proof}

\subsection{Packing guarantees for $\wcomplete$ and $\X$}
\label{sec:LX}

\begin{theorem}
\label{thm:sjustq1g0} 
Suppose $s+\justq_1>0$. Then the following statements hold.
\begin{enumerate}
    \item A bin in $\wcomplete$ with level at most $\wsqfull$ can only be created from a bin in $\X_s$. Such a bin is then in $\X_\lc$.
    \item There are at most two bins in $\X_s\cup\X_\lc.$
    \item All bins in $\wcomplete$ contain more than $13-5\eps$, are in $    \X_\lc$ or contain two large items.
    \item There are at most $\xsize$ bins in $\X$. At most $\xsizewin$ of them can be large-complete, namely the ones in $\X_\lc$. 
    Each bin in $\relev\setminus(\X\cup\Dthreenice)$ contains on average at least $\ffsmall$. 
\end{enumerate}
\end{theorem}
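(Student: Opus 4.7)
The plan is to prove the four parts of \Cref{thm:sjustq1g0} in order, exploiting the bin-type transitions from Section~\ref{subsec:mainanalysis}. The key preliminary observation is that $s + \justq_1 > 0$ forces $\justDhalf = \emptyset$: the bin $\justDhalf$ is created only by $\calE \to \Dhalf$ in Step~2 when $\S \cup \justQ1 \cup \Dlarge = \emptyset$, and once it exists, the very next arriving small or quarter item is preferentially routed to it (yielding $\Dhalf^S$ or $\Dhalf^Q$) before any new $\S$ or $\justQ1$ bin can be opened. Hence while $\justDhalf$ exists, neither $s$ nor $\justq_1$ can become positive.

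For part~(1), I would enumerate the possible transitions $\cdot \to \wcomplete$ by a single arriving item. Bins that were ever in $\calQ$ cannot reach $\wcomplete$ (by Definition~\ref{def:large-complete}), and transitions from $\calE$ do not occur under our assumption since Step~1 prefers a non-empty $\S$ or $\justQ1$ bin. The remaining sources are $\S$, $\Dhalf^S$, $\Dhalf^Q$ (only if the bin originated from $\justDhalf$), $\Dnice$, and $\Dlarge$, while $\justDhalf$ itself is empty under our assumption. For each remaining source I would compute the minimum resulting level assuming the source is \emph{not} in $\X_s$. The tightest estimate is $(\ffsmall)+(\bigLBf) = 13-5\eps > \wsqfull$, valid for $\eps < 1/9$, which rules out $\S \to \wcomplete$ via a big item from a non-$\X_s$ source; the other transitions ($\S$ via a top item, $\Dhalf^S$ via half, large, big, or top, $\Dnice$ via a dominant item) all give strictly larger lower bounds. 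The case $\Dlarge \to \wcomplete$ via a half item would otherwise produce a low-level bin, but it is suppressed by the footnote rule in Step~1 under $s+\justq_1 > 0$.

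Part~(2) then combines Lemma~\ref{lem:shlow}, which gives $\abs{\X_s}\le 2$, with part~(1) and a First Fit ordering argument extended across time: each bin in $\X_\lc$ was a low-small-content bin in $\S\cup\Dhalf^S$ at its transition moment, and the same ordering argument that bounds the simultaneous number of such bins by $2$ continues to apply after the transition. For part~(3), I would argue that a bin $b \in \wcomplete$ with level at most $13-5\eps$ is in $\X_\lc$ or contains two large items: if $b$ is not in $\X_\lc$, its level is at least $\wsqfull$, and enumerating the $\wb\ge 4$ compositions in the range $[\wsqfull,\,13-5\eps]$ leaves only two-large bins and bins whose only weight-bearing item is a big item; the latter case is controlled by the swap rule together with Observation~\ref{obs:large-completeswap}, which guarantees a virtual level of at least $\wbarfull > 13-5\eps$, and the remaining configurations (two halves plus small items, half plus large plus small items, top plus small items) are eliminated by reapplying the part~(1) calculations together with Invariant~\ref{inv:bigq}.

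For part~(4), the count bound $\abs{\X}\le\xsize$ is assembled from $\abs{\X_s\cup\X_\lc}\le 2$ (part~2), $\abs{\X_q}\le\qconstant$ by inequality~(\ref{eq:q1inv}), and $\conenice+\ctwonice\le 2$ by Observation~\ref{obs:deltabounds}. That $\X_\lc$ is the only large-complete subset of $\X$ is immediate from the constituent definitions, giving the $\xsizewin$ bound. The average-level bound follows by verifying a per-bin level of at least $\ffsmall$ on the remaining types: $\S$ and $\Dhalf^S$ bins outside $\X_s$ by definition, $\Dhalf^Q$, $\Dlarge$, $\nice$, and the remaining $\Delta$ bins by direct inspection of their minimum contents, and $\calQ$ bins outside $\X_q\cup\Qmatch$ on average after the volume reassignment of \Cref{thm:qsizebound}. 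The hard part will be the exhaustive case analysis in part~(1), and the analogous ruling out of leftover $\wb\ge 4$ compositions in part~(3), where one must carefully verify that the footnote rule in Step~1, the swap rule, and Invariant~\ref{inv:bigq} together eliminate every potentially low-level configuration.
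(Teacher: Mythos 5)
Your plan for parts (1) and (4) largely mirrors the paper's proof, but there are concrete gaps in parts (1)--(3). In part (1) you claim the tightest bound is $(\ffsmall)+(\bigLBf)=13-5\eps$, but you omit the transition $\Dlarge\to\wcomplete$ via a \emph{large} item, which is not blocked by the footnote rule (that rule only redirects \emph{half} items). That transition gives level only strictly above $2\cdot(\largeLB)=\offbin+4\eps=\wsqfull$, i.e.\ it \emph{is} the tight case; it is also precisely the source of the ``two large items'' exception you later need in part (3). So the bound in part (1) still holds, but your stated estimate is wrong, and the case that drives part (3) is missing from your enumeration. Separately, you list some impossible transitions (a big or top item into a $\Dhalf^S\setminus\X_s$ bin cannot occur for our $\eps$, as the item does not fit) while omitting the possible one.

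The more serious issues are in parts (2) and (3). For part (2), a ``First Fit ordering argument extended across time'' does not give $\abs{\X_s\cup\X_\lc}\le 2$: once a bin leaves $\S\cup\Dhalf^S$ into $\X_\lc$, Lemma~\ref{lem:shlow} applied to the \emph{remaining} $\S\cup\Dhalf^S$ could again allow two low-content bins, giving a total of three. What makes the paper's argument work is a time-evolution invariant that is missing from your sketch: a new bin can only join $\X_s$ at a moment when a small/quarter item opens a new bin, and by Lemma~\ref{lem:newsq} at that moment every bin in $\wcomplete$ has level more than $\fant>\wsqfull$, i.e.\ $\X_\lc=\emptyset$. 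Combined with part (1) (each addition to $\X_\lc$ pairs with a removal from $\X_s$), this preserves $\abs{\X_s\cup\X_\lc}\le 2$. For part (3), your static enumeration of $\wb\ge 4$ compositions in $[\wsqfull,13-5\eps]$ only constrains bins that entered $\wcomplete$ while $s+\justq_1>0$; bins that entered $\wcomplete$ during an interval with $s+\justq_1=0$ are not covered by the part-(1) transition analysis. The paper closes this by the ``Conversely'' remark: whenever $s+\justq_1$ becomes positive (via Lemma~\ref{lem:newsq}), all existing $\wcomplete$ bins already have level more than $\fant>13-5\eps$. Without that observation, your argument is incomplete.
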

\begin{proof}
\begin{enumerate}
    \item 
    Suppose a bin $b$ enters $\wcomplete$ on receiving some item while $s+\justq_1>0$, and is not in $\X_s$, nor was it in $\calQ$. 
    An empty bin would have to receive a dominant item to enter $\wcomplete$, but this does not happen if $s+\justq_1>0$. Therefore $b$ already contained more than $\ffsmall$, either because it was in $\S\setminus\X_s$ or because it was in $(\Delta\setminus\justDhalf)\cup\Dnice$. We have that $b\neq \justDhalf$ by Observation \ref{obs:deltabounds}. If $b=\Dhalf^Q$, only a half or large item would be packed into it, for a total packed size of more than $2\cdot(\bindiff)+\smallUB=15-7\eps$.
    
    We now make a case distinction based on the last item packed into $b$.
    A dominant item makes the size packed into $b$ larger than $13-5\eps$. As long as $s+\justq_1>0$, a large item is not packed into $\Dnice$ because that would violate the nice rule.\footnote{This is where we use the restriction on the rule of last resort once no bin remains empty.} If it is packed into $\Dlarge\cup\Dhalf^S\setminus\X_s$, the total size is more than $\offbin+4\eps$.
    A half item only creates a large-complete bin if it is larger than $\offhalf$, if it is packed into a bin with an item larger than $\offhalf$ or if it is packed with another half item. If the other non-small item in the bin is a large item, the algorithm does not pack the half item in it because it prefers using a bin in $\S\cup\justQ1$ over $\Dlarge$, and with a dominant item we get the bound $15-3\eps$. Else, the bin contains more than $\ffsmall$ of small items if $s+\justq_1>0$, for a total size of more than $16-6\eps$.
    
    Finally, a nice item is not packed into $\Dlarge$ or $\Dhalf$ (which is the only way it could make a bin large-complete) because that would violate the nice rule while $s+\justq_1>0$, and no bin can become large-complete by receiving a small or quarter item because these items do not increase $w_b$ and also do not affect the other conditions in Definition \ref{def:large-complete}.
\item A new bin with small items is only created if a small item is packed into an empty bin or into $\justDhalf$. By the packing rules at that time all bins in $\wcomplete$ are filled to at least $\fant>\wsqfull$, so $\X_\lc=\emptyset$. By Lemma \ref{lem:shlow} we have $\abs{\X_s}\leq2.$ A quarter item only creates a bin in $\justQ1$ if all bins in $\wcomplete$ are filled to at least $\fant$ as well.
    Hence the statement cannot become wrong if $s+\justq_1$ increases (this includes that a bin is added to $\X_s$). It could only become wrong if a bin is added to $\X_\lc$ while $s+\justq_1>0$, so a bin in $\wcomplete$ with load less than $\wsqfull$ is created while $s+\justq_1>0$. By the first statement this can only happen from bins in $\X_s$ and hence for every bin created in $\X_\lc$ one bin from $\X_s$ is removed. By the above a bin can only be added to $\X_s$ while $X_\lc=\emptyset.$ 
\item Reviewing the case distinction in the proof of the first statement, we see that the only way for a bin to enter $\wcomplete$ while $s+\justq_1>0$ without coming from $\calQ\cup\X_s$ and without having level more than $13-5\eps$ is for that bin to contain two large items. Conversely, if $s+\justq_1=0$ and this changes by packing a new small or quarter item into an empty bin, all bins in $\wcomplete$ contain more than $\fant$.
\item There are at most two bins in $\X_s\cup\X_\lc$ by the second statement, and from Observation~\ref{obs:deltabounds} we have $\conenice+\ctwonice\le2$. We have $x_q\le\qconstant$ by (\ref{eq:q1inv}).
Therefore $x\le\qconstant+2+2=\xsize$. 

We have $\relev=\S\cup\Q{}\cup\nice\cup\Delta$ by (\ref{eq:relev}). The statement about the level in $\S\setminus \X$ holds by Lemma~\ref{lem:shlow} and in $\calQ\setminus \X$ by Theorem \ref{thm:qsizebound}. \qedhere 
\end{enumerate}
\end{proof}

\section{The starting phase succeeds}
\label{sec:firstphase}
In this section we show that the algorithm does not fail to pack an item during the starting phase. In some cases, it will even manage to pack all items already in this phase. We will show in particular that Invariant 1 is maintained during the starting phase.
For this to hold, the starting phase cannot continue too long.
The $(9-\eps)$-guarantee will start to hold roughly after packing $(3+\eps)m$ (as discussed before, at the beginning it is clearly not true).
To make this more precise, we will keep track of how much volume we have already packed. For easier bookkeeping during the second phase, we will ignore the following amounts\footnote{We consider the level of a bin after possible reassignment due to the swapping rule.}:
\begin{itemize}
	\item Anything packed above $\smallslot$ in bins in $\justQ2$
    \item Anything packed above $\smallUB$ in $\X_q$.
	\item Anything packed in $\X\setminus\X_q$
\end{itemize}
In each bin in $\Qonefive\cup\Qonebig\cup\Qmatch\setminus\X$ the entire quarter item gets counted, plus exactly $9-\eps$ from the rest (the remaining content is ignored).
We define $\timing$ as the total size of the items that are not ignored. 

\begin{observation}
    All ignored items or parts of items are in bins in $\X\cup \calQ$.
\end{observation}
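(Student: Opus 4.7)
The plan is to verify the observation by a case analysis against the definition of $\timing$ given in the paragraph immediately above. That definition names four sources of ignored material, and for each source I would show that the containing bin lies in $\X\cup\calQ$.

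I would first dispatch the three bullet points of the definition. Material packed above $\smallslot$ in a bin of $\justQ2$ sits in a bin of $\justQ2\subseteq\Q2\subseteq\calQ$, using (\ref{inv:q2}). Material packed above $\smallUB$ in a bin of $\X_q$ sits in a bin of $\X_q\subseteq\X$ by the definition of $\X$ on page~\pageref{def:X}. Material packed in any bin of $\X\setminus\X_q$ is by definition in $\X$. So in each of these three cases the ignored content lives in a bin of $\X\cup\calQ$, where in fact two of the three reduce to containment in $\X$ and one to containment in $\calQ$.

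The remaining case is the implicit cap at ``the entire quarter item plus exactly $9-\eps$'' for bins in $\Qonefive\cup\Qonebig\cup\Qmatch\setminus\X$: everything in such a bin beyond that cap is ignored. Ignored material here lies in a bin of that union. By (\ref{inv:q1}), $\Qonefive\subseteq\Q1\subseteq\calQ$ and $\Qonebig\subseteq\Q1\subseteq\calQ$, so those two subcases are in $\calQ$. Bins in $\Qmatch$ were moved from $\calQ$ by the matching rule (Step~3 of the algorithm) and still exclusively hold the quarter-originated content that this observation is tracking, so for the bookkeeping of $\timing$ they are grouped with $\calQ$.

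The main obstacle, if any, is just this convention concerning $\Qmatch$: formally $\Qmatch$ and $\calQ$ were declared disjoint in Section~2.3, but within the scope of this observation a matched bin is still an ``originally $\calQ$'' bin whose quarter items remain in place, and this is the reading under which the observation is true. Once this is made explicit the proof is a one-line check for each of the four cases.
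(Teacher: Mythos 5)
The paper gives no proof for this Observation, so there is nothing to compare against directly; your case analysis against the four sources of ignored material in the definition of $\timing$ is the natural and correct way to verify it. Your handling of the first three bullets is right ($\justQ2\subseteq\Q2\subseteq\calQ$ via~(\ref{inv:q2}); $\X_q\subseteq\X$ and $\X\setminus\X_q\subseteq\X$ by definition), and so is the reduction of the fourth case to $\Qonefive,\Qonebig\subseteq\Q1\subseteq\calQ$ via~(\ref{inv:q1}).

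You are also right to flag $\Qmatch$ as the real wrinkle: by the paper's definition a bin in $\calQ$ is explicitly \emph{unmatched}, and the disjoint-union sentence (``each bin is in exactly one of $\relev,\Qmatch,\wcomplete,\calE$'') together with $\calQ\subseteq\relev$ makes $\Qmatch\cap\calQ=\emptyset$ formal. Bins in $\Qmatch$ can indeed carry ignored content --- a bin that was in $\Qonebig$ holds a quarter item plus a big item, of which only the quarter item plus $\bigLB$ is counted, so up to $1+7\eps$ (and more, since $\Qmatch\subseteq\calC$ keeps receiving items) is ignored. So, read literally, the statement should say $\X\cup\calQ\cup\Qmatch$. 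Your resolution --- that the author means ``$\calQ$-descended'' bins, whose quarter items remain in place and whose contents are tracked the same way --- is the reading under which the claim holds, and it matches how the bins are treated in the surrounding bookkeeping for the fill-up phase. This is a correct proof that also exposes a minor looseness in the paper's notation.
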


\begin{lemma}
\label{lem:tte>0}
    As long as no bin enters $\topblock$ by the rule of last resort, Invariant \ref{inv:threats} holds.
\end{lemma}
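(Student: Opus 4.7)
The plan is to argue by induction on the items packed, using the weight bounds of Lemma~\ref{lem:threats}. At the start, $\topblock=\bigblock=0$ and $\topthreat,\bigthreat\le m$, so both inequalities in Invariant~\ref{inv:threats} hold. Since $\topthreat$ and $\bigthreat$ are monotonically nonincreasing as items arrive, the only way the invariant can fail is for a bin to newly enter $\topblock$ or $\bigblock$. By hypothesis, rule-of-last-resort transitions into $\topblock$ are excluded, so every new $\topblock$ bin is created by a normal rule of Step~1 or~2.

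If the incoming item $i$ is itself of the critical type (top for $\topblock$; big or top for $\bigblock$), the corresponding threat quantity drops by $1$ in lock-step with the $+1$ change in $\topblock$ or $\bigblock$, preserving the invariant. For the remaining transitions, I would use Lemma~\ref{lem:threats} to reduce the task to showing that $\wq(\Ipart)\ge 4\abs{\topblock}$ and $\wb(\Ipart)\ge 4\abs{\bigblock}$, possibly after reassigning weight between bins.

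For the $\topblock$ bound, the case $\justq_2>0$ is exactly Theorem~\ref{thm:q2weightbound}, yielding $\topthreat\le m-\abs{\topblock\cup\Dnice}\le m-\topblock$. When $\justq_2=0$, every bin type that can sit in $\topblock$ already carries $\wq\ge 4$ with the sole exceptions of at most one $\Dlarge$ bin ($\wq=2$) and at most one $\Dhalf$ bin holding a single half item plus small or quarter content ($\wq\le 3$); Observation~\ref{obs:deltabounds} bounds the count of these exceptions and restricts which can coexist (notably $\bar\delta_\text{half}\cdot\clarge=0$). The small $\wq$-deficit is absorbed by weight held in $\Dnice$ bins or in $\justQ1$ bins, both of which lie outside $\topblock$, using in particular the bound $q_1+\chalf^Q\le 2q_2+\qconstant$ from Theorem~\ref{thm:q1q2}. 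For $\bigblock$ the analogous inspection for $\wb$ shows the same $\Dlarge$ bin as the principal source of deficit; the missing $2$ units of $\wb$ come from a $\Dnice$ bin if one exists, and in its absence one appeals to the fact that at the moment $\Dlarge$ is opened, $\bigthreat$ itself drops by $1$ since large and big items cannot coexist in a single offline bin of size~$12$.

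The main obstacle I expect is the $\justq_2=0$ case: one must enumerate the few $\Delta$-bin configurations permitted by Observation~\ref{obs:deltabounds} and carefully verify that whatever $\wq$- and $\wb$-deficits the exceptional bins incur are more than compensated by free weight sitting outside the corresponding block, without double-counting any contribution between the two invariants.
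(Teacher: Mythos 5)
Your plan for the $\justq_2>0$ case (citing Theorem~\ref{thm:q2weightbound}) matches the paper, and your reduction via Lemma~\ref{lem:threats} is the right starting point. However, the case $\justq_2=0$ contains a genuine gap: you propose to verify $\wq(\Ipart)\ge 4\abs{\topblock}$, but $\wq$ is the wrong weight function here. A bin in $\wcomplete$ whose only item with weight is a single big item has $\wq=3$ while $\wb=4$ (big items alone are weighted $3$ by $\wq$, not $4$); once $\justq_1=\justq_2=0$ Invariant~\ref{inv:bigq} is vacuous, so there can be \emph{arbitrarily many} such bins, each of which lies in $\topblock$. A bounded reservoir of spare weight in $\Dnice$ or $\justQ1$ (which may be empty anyway) cannot absorb an unbounded $\wq$-deficit. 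The paper circumvents exactly this by never attempting a $\wq$-bound for $\topblock$ when $\justq_2=0$: it instead bounds $\bigthreat$ via $\wb$ (or, when $\cnice=\conenice=0$, via $\wn$, because $\wn(\Dlarge)=4$ covers the $\Dlarge$-deficit) and then uses $\topthreat\le\bigthreat$ to transfer the bound.

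A second missing ingredient is that the paper does \emph{not} reach $\wb\ge 4\abs{\topblock}$ exactly; it settles for $\wb(\calC\cup\Qonebig\cup\Delta)\ge4\abs{\topblock}-2$ (the $-2$ stemming from $\Dlarge$ and $\Dhalf$), and then exploits integrality: $\bigthreat\le\lfloor m-\abs{\topblock}+\tfrac12\rfloor=m-\tbs$. Your proposal aims for the tight $4\abs{\topblock}$ bound, which, combined with the $\wq$ issue above, makes the $\justq_2=0$ branch unworkable as stated. Your informal ``lock-step'' remark about $\Dlarge$ being opened while $\bigthreat$ drops is in spirit close to what $\wn$ formalizes, but to make the argument rigorous you would have to switch to $\wn$ in the sub-case $\cnice=\conenice=0$ rather than appeal to a per-item counting claim.
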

\begin{proof}
We have $\topblock=\calC\cup\justQ2\cup\Qonebig\cup(\Delta\setminus\Dnice)$ and all other bins are in $\S\cup\justQ1\cup\Dnice\cup\calE$ by (\ref{eq:relev}).
We have $\calC\cup\Dtwonice\subseteq\bigblock\subseteq\calC\cup\Dlarge\cup\Dtwonice$.

    Suppose $\justq_2=0$.
    If there exists a bin with $\wb$-weight at least 6 (including the case $\cnice>0$), it can be seen that $\wb(\calC\cup\Qonebig\cup\Dtwonice)\geq 4(c+\qonebig+\ctwonice)+2$ and $\wb(\Delta\setminus\Dtwonice)=2(\delta-\ctwonice).$ So in total we have $\wb(\calC\cup\Qonebig\cup\Delta)\geq4(c+\qonebig+\ctwonice)+2+2(\delta-\ctwonice)\geq 4(c+\qonebig+\delta-\conenice)-2=4\abs{\topblock}-2$ (because $\chalf+\clarge\leq 2$ and $\Dnice$ is not in $\topblock$). Hence we have for the remaining input $\wb(I_{\text{rem}})\leq 4m-(4\abs{\topblock}-2)$ and it follows 
    $\topthreat\leq \bigthreat\le\lfloor m-\abs{\topblock}+\frac{1}{2}\rfloor=m-\tbs\le m-\abs{\bigblock}.$

    If there exists no bin with $\wb$-weight at least 6, but $\conenice>0$, we get the same bound for $\wb(\calC\cup\Qonebig\cup\Delta)$ because bins in $\Dnice$ have $\wb$-weight 2 (and are not in $\topblock$).

    Finally, the case that there exists no bin with $\wb$-weight at least 6 (so $\cnice=0$) and $\conenice=0$. This implies that $\ctwonice=0$ because no bin has entered $\topblock$ by the rule of last resort. We consider the $\wn$-weight. We have $\wn(\calC\cup\Qonebig)\geq 4(c+\qonebig)$ and $\wn(\Delta)\geq 4\delta-2$ (as $\wn(\Dlarge)=4$ and $\conenice=\ctwonice=0$). We get the same bounds as before since the $\wn$-weight of big and top items is 4.

    For the case $\justq_2>0$ the bound on $\topthreat$ holds by Theorem \ref{thm:q2weightbound}. For $\bigthreat$ we can still use the bounds above even if $\justq_2>0$, because bins in $\justQ2$ are not in $\bigblock$ but also not in $\calC$. For example, we get $\wb(\calC\cup\Delta)\ge4\abs{\bigblock}-2$ in the first case (since $\qonebig=0$), so $\wb(I_{\text{rem}})\leq 4m-(4\abs{\bigblock}-2)$ and $\bigthreat\le m-\abs{\bigblock}$. 
    
    The only exception to this would be if a bin in $\justQ2$ entered $\bigblock$ by the rule of last resort. This only happens if $e=0$. By Theorem \ref{thm:q2weightbound} already $4m$ $\wq$-weight has been packed at this point. In this case $\bigthreat=\topthreat=0.$
\end{proof}

\subsection{Good situations}
\label{sec:FFcase}

We will next describe several cases in which we can guarantee to pack all remaining items in the input: the so-called \emph{good situations}. A good situation is a partial instance for which we can prove that a $18-2\eps$ capacity suffices until the end of the instance, and a proof of a good situation is then a presentation and analysis of the algorithm that finishes the partial instance.

A simple but important example is the following. 

\begin{goodsit}
\label{gs:weight}
    For the input $I$ packed so far, we have $\wq(I)=4m$ or $\wb(I)=4m$.
\end{goodsit}
\begin{proof}
    By Lemma \ref{lem:threats}, all remaining items have weight zero. By Lemma \ref{lem:nofailq2}, the algorithm does not fail on an item with zero weight, as all such items are smaller than $\bindiff$. 
\end{proof}

The next case can be reached from the starting phase as well as the second phase.
\begin{definition}
    For $\alpha>0$ we define  
\[
    \FFcond(\alpha)=\frac{\ENDBIG}{\ENDBIG+\alpha}m
    -\frac{\niceminustwelve-\alpha}{3+\eps}\cnice
    + \frac{2\cdot\offbin}{\alpha} +\frac{\ffmidconstant\cdot(4+8\eps)}{\alpha}+2.
\]
\end{definition}

\begin{theorem}[First Fit case]
\label{thm:ffcase}
    Let $\alpha>0$. Sort the bins in order of decreasing level. Define $\calF$ as a maximal prefix of the bins in this sorted order so that the bins in $\calF$ have average level at least $\offbin+\alpha$ and the last bin in $\calF$ has level at least $\offbin$.
    
    Suppose $\abs{\calF}\ge\FFcond(\alpha)$, 
    there are at most two bins not in $\calF$ that have level in $(\bigfree,\bigLB)$, and $\topthreat\le m-\tbs$. 
    Then all remaining items can be packed.
\end{theorem}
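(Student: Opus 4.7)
The plan is to argue by contradiction: suppose First Fit, applied in some natural order to the arriving items, fails on some item $x$ of size $s$. Then at the failure moment every bin has level strictly greater than $\onlbin-s$, so the total packed volume $V$ satisfies $V>(\onlbin-s)m$. I would case-split on $s$. If $s\le\bindiff$ (nice or smaller), then $V>\offbin\cdot m$, immediately contradicting the input volume bound $V\le\offbin\cdot m$.

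For $s\in(\bindiff,\bigLB]$ (half or large), I combine the failure inequality with the $\calF$-surplus $V(\calF)\ge(\offbin+\alpha)\abs{\calF}$ to get $V>(\onlbin-s)m+(s-\bindiff+\alpha)\abs{\calF}$. Using $V\le\offbin\cdot m$, this gives $\abs{\calF}<\frac{(s-\bindiff)m}{s-\bindiff+\alpha}$; the worst case $s=\bigLB$ yields $\abs{\calF}<\frac{(3+\eps)m}{3+\eps+\alpha}$, contradicting the leading term $\frac{\ENDBIG}{\ENDBIG+\alpha}m$ of $\FFcond(\alpha)$. The correction $-\frac{\niceminustwelve-\alpha}{3+\eps}\cnice$ comes from sharpening the surplus: each bin in $\Dthreenice\cap\calF$ has level more than $\nicefill$ rather than only $\offbin+\alpha$, so replacing the average bound on these bins by the tighter per-bin bound gains $\niceminustwelve\cdot\cnice$ in volume.

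For top items, $s\in(\topLB,\offbin]$, failure means every bin is in $\topblock$, so $\tbs=m$; combined with $\topthreat\le m-\tbs$ (maintained through First Fit) this gives $\topthreat=0$, contradicting the arrival of $x$. The delicate point is maintaining Invariant~\ref{inv:threats} under First Fit: one argues via Lemma~\ref{lem:threats} and the weight function $\wq$ that a bin newly entering $\topblock$ is always accompanied by arriving $\wq$-weight sufficient to force a corresponding drop in $\topthreat$, ensuring that $4\tbs$ never outruns $\wq(\Ipart)$.

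The main obstacle is big items, $s\in(\bigLB,\topLB]$: the naive volume argument only yields $\abs{\calF}<\frac{(4+8\eps)m}{4+8\eps+\alpha}$, which is weaker than the leading term of $\FFcond$. Here I use the structural hypothesis that at most two non-$\calF$ bins have level in $(\bigfree,\bigLB)$. Assuming this property is essentially preserved throughout the First Fit packing, at failure all but a bounded number $M$ of non-$\calF$ bins have level at least $\bigLB$, giving $V>(\bigLB)m+(3+\eps+\alpha)\abs{\calF}-M(1+7\eps)$ and hence $\abs{\calF}<\frac{(3+\eps)m+M(1+7\eps)}{3+\eps+\alpha}$. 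This contradicts $\FFcond$ provided its additive constants $\frac{2\cdot\offbin}{\alpha}+\frac{2(4+8\eps)}{\alpha}+2$ dominate the $M$-dependent correction. The hardest part is controlling $M$: First Fit itself may create new ``medium'' bins (for instance when a large item lands in a nearly empty bin), so I would need to track, via a careful choice of packing order or a bookkeeping argument on when each medium bin can be created, that the number of such bins remains absorbed by the additive constants in $\FFcond(\alpha)$.
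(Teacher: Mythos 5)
Your approach has a genuine gap, and you have correctly identified where it is: the big-item case. The direct contradiction argument ``at failure every bin has level $>\onlbin-s$'' only yields a per-bin floor of $\onlbin-\topLB=\bigfree$ when $s$ is a big item, which is weaker than the $\bigLB$ floor you need to beat $\FFcond(\alpha)$. The hypothesis that at most two non-$\calF$ bins \emph{start} with level in $(\bigfree,\bigLB)$ does not by itself control the number of such bins at the moment of failure: during First Fit, a large or \easytwo item landing in a lightly loaded bin can park that bin anywhere in $(\bigfree,\bigLB)$, and nothing in your sketch bounds how often this happens. You flag this as ``the hardest part'' and propose ``a careful choice of packing order or a bookkeeping argument,'' but that argument is precisely the missing content of the proof, not a finishing detail. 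Separately, your claim that $\topthreat\le m-\tbs$ is ``maintained through First Fit'' would need a proof of its own; the paper never asserts such a running invariant.

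The paper's proof avoids this trap by not reasoning from the size of the failing item at all. It restricts First Fit to the \emph{useful} bins (level at most $\bigfree$), splits them into $A$ (level in $(\bindiff,\bigfree]$) and $B$ (level at most $\bindiff$), and runs First Fit with $A$ before $B$, sorted by decreasing level. Two consequences follow: while any $A$-bin still has room, only top items can spill into $B$ (every non-top item fits in the $A$-bins), and the bound $\topthreat\le m-\tbs=\abs{B}$ caps the number of top items that can ever land in $B$. In either order of exhaustion ($A$ first or $B$ first), every $A$-bin receives at least one item, and Corollary~\ref{cor:FF11} then guarantees a total of at least $((\onlbin)+(\bindiff))(\abs{A}-1)/2$ in $A$, while the $B$-bins each end up above $\bigLB$ (either by a top item or by First Fit). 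This yields a uniform lower bound of $\bigLB$ per useful bin (minus $O(1)$), \emph{independent of which item fails}, which is exactly the clean per-bin floor that your volume accounting needs and cannot obtain from the naive failure inequality. If you want to salvage your route, you would essentially have to rediscover this sorting and the variable-bin First Fit lemma.
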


\begin{proof}
The idea of the proof is that if we use First Fit on the at most $m-\FFcond(\alpha)$ bins that are not in $\calF$, they will be filled to at least $\bigLB$ by Corollary \ref{cor:FF11}, for an average load of more than $\offbin$ if we run out of bins (showing that we cannot run out of bins). However, Corollary \ref{cor:FF11} can only be applied if all of those bins receive at least one item in the packing of First Fit. This may not hold if some of those bins already contain more than $\bindiff$. For this reason, the problematic bins for our proof are bins with level in the range $(\bindiff,\bigLB)$.

The bins that contain at most $\bigfree$ (including all the empty bins) are called useful and all other bins that are not in $\calF$ useless. Any item that is not a top item fits into each bin that is useful.
By assumption all but at most four bins are in $\calF$, useful, or useless but with level in $(\bigLB,\offbin+\alpha)$.
Let the number of useful bins be $m'$. 
We further partition the useful bins in two subsets. Set $A$ are the bins 
that have level in $(\bindiff,\bigfree]$, and set $B$ are the bins with level at most $\bindiff$ (including the empty bins). We have $\topblock=m-\abs{B}$.

We use the following algorithm to pack the items.
We sort the useful bins in order of decreasing level. This means all bins in $A$ come first, followed by the set $B$.
We run First Fit on the useful bins for all items together, ignoring the other bins. As long as bins in $A$ are still available, only top items will be packed into $B$.

For the analysis of this algorithm, we distinguish between two cases. 
If all bins in $B$ are used up first (and First Fit eventually fails), each bin in $A$ will receive at least one item (as all possible top items have already arrived since $\topthreat\le m-\topblock=\abs{B}$), and the total size packed into $A$ will be at least $((\onlbin)+(\bindiff))(\abs{A}-1)/2=
(12-2\eps)(\abs{A}-1)$ by  Corollary \ref{cor:FF11}, whereas bins in $B$ have received top items so each bin got more than $\topLB$.

If we run out of bins in $A$ first (this happens for instance if many big items arrive), then each bin in $A$ has received at least one item and we get the same bound as above for $A$. The bins in $B$ contain at least $\bigLB$ in the end (apart from at most one bin). We conclude that if First Fit fails to pack some item, we pack at least $(\bigLB) (m'-2)$ into these $m'$ bins. 

For the remaining calculations, we assign a packed size of $\bigLB$ to the first $m'-2$ bins and nothing to the last two bins.

We will show that using  
${2\cdot\offbin}/{\alpha}+\ffmidconstant\cdot(4+8\eps)/\alpha$ bins in $\calF$ to compensate for these two ``empty'' bins and the possibly two bins with level in $(\bigfree,\bigLB]$\footnote{The ``empty'' bins need to be compensated for at most $\offbin$ and the bins with level in $(\bigfree,\bigLB]$ need to be compensated for at most $\offbin-(\bigfree)=4+8\eps.$}, all $m$ bins together have an average level of at least $\offbin$, which shows that all items end up packed, contradicting that First Fit fails. We next explain this in more detail.

Let $C$ be the set of all but the last $\left\lceil \frac{2\cdot\offbin}{\alpha} +\frac{\ffmidconstant\cdot(4+8\eps)}{\alpha} \right\rceil$ bins in $\calF$. 

Suppose $\cnice=0$. By assumption and by the definition of $\FFcond(\alpha)$, $|C|\ge \frac{\ENDBIG}{\ENDBIG+\alpha}m+1$, and all these bins contain at least $\offbin+\alpha$ on average.
The number of useful and useless bins with (assigned) packed size more than $\bigLB$ (and the remaining bins in $\calF$, which have level at least $\offbin$) is therefore upper bounded by $\frac{\alpha}{\ENDBIG+\alpha}m-1 < \frac{\alpha}{\ENDBIG+\alpha}m$.

We conclude that $\frac{|C|}{m-|C|}>\frac{\ENDBIG}{\alpha}$, implying that $\alpha|C|>(\ENDBIG)(m-|C|)$. This means that the total surplus (size packed above $\offbin$) on the first $\abs{C}$ bins is more than the total deficit of the remaining bins at the time at which First Fit fails.
But then the total size of all items is more than $\offbin m$, a contradiction. 

We now consider the case $\cnice>0$. In this case there are $\cnice$ bins with an additional level of at least $\nicefill-(12+\alpha)=\niceminustwelve-\alpha$ above what was already considered.
This surplus compensates for an additional 
$\frac{\niceminustwelve-\alpha}{\ENDBIG}\cnice$
bins that may be packed to only $\bigLB$ by First Fit. Thus the bound $\FFcond(\alpha)$ decreases by $\frac{\niceminustwelve-\alpha}{\ENDBIG}\cnice$ 
compared to the case $\cnice=0$ as claimed.
\end{proof}

We define 
\begin{align*}
    \FFcond&:=\FFcond\left(\fantoverload\right)\\
    &=\frac{\ENDBIG}{(\ENDBIG)+(\fantoverload)}m
    -\frac{\niceff}{\ENDBIG}\cnice
    +O(1)\\
    &=\frac{\ENDBIG}{\fantandendbig}m-\frac{\niceff}{\ENDBIG}\cnice
    +O(1).
\end{align*}

The next good situation is a direct application of Theorem \ref{thm:ffcase}.
\begin{goodsit}[Simple Fill-Up]
    \label{gs:ffcase}
    Sort the bins in order of decreasing level. Define $\calF$ as a maximal prefix of the bins in this sorted order so that the bins in $\calF$ have average level at least $\offbin+\fantoverload$ and the last bin in $\calF$ has level at least $\offbin$.
    
    We have $\abs{\calF}\ge\FFcond$, at most four bins have level in $(8-8\eps,9-\eps]$ and $\topthreat\le m-\tbs$.
\end{goodsit}

The threshold $\niceLB$ between small and nice items is related to $\FFcond$ (since $\onlbin-(\niceLB)=\offbin+\fantoverload$) and could have been set (slightly) differently; the present value is convenient, e.g.~for the proof of Lemma \ref{lem:beta}.

\subsection{Ways to end the starting phase}
\begin{lemma}\label{lem:ttsq=0FFcond}
    If $s+\justq_1=0$ and $c\ge\FFcond$ before any bin enters $\topblock$ by the rule of last resort, we maintain $s+\justq_1=0$ and Invariant \ref{inv:threats} as long as we use the packing methods of the starting phase, including the rule of last resort.
\end{lemma}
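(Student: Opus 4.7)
The plan is to show that the hypotheses trigger Good Situation~\ref{gs:ffcase} (Simple Fill-Up), from which Theorem~\ref{thm:ffcase} supplies a First-Fit-based packing of all remaining items and, in particular, rules out any bin entering $\topblock$ by the rule of last resort. Invariant~\ref{inv:threats} then follows throughout by Lemma~\ref{lem:tte>0}, and $s+\justq_1=0$ is preserved because, as argued below, the $\calE\to\S$ and $\calE\to\justQ1$ fallbacks of Step~2 are never reached.

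The key precondition to verify is $|\calF|\ge\FFcond$. The bound $\topthreat\le m-\tbs$ is given at the present moment by Lemma~\ref{lem:tte>0} and the hypothesis, and the ``at most four bins with level in $(\bigfree,\bigLBf]$'' condition is a direct accounting against the bin-type taxonomy: bins in $\S$ sit at level $\le\smallslot$, bins in $\justQ2$ at level $\le 2(\ffsmall)=\bigfree$, bins in $\Qonebig$ above $\bigLBf$ since they already contain a big item, and the only other regular bins potentially falling in $(\bigfree,\bigLBf]$ are the four bins of $\Delta$ (Observation~\ref{obs:deltabounds}). For $|\calF|\ge\FFcond$, I leverage $c\ge\FFcond$ together with pooled level estimates on the complete bins: bins in $\Dthreenice$ exceed $\nicefill$; bins in $\Qfive$ exceed $(\smallUB)+(\topLB)=13+3\eps$ or, in the half/quarter/half-or-larger subcase, exceed $15-7\eps$; each triple in $\Qmatch$ totals more than $3(13-5\eps)$; and bins in $\wcomplete\setminus\X_\lc$ exceed $13-5\eps$ by Theorem~\ref{thm:sjustq1g0}. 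The large positive surpluses from $\Dthreenice$ and $\Qfive$ comfortably absorb the modest per-bin deficits of $\Qmatch$ and $\wcomplete$ relative to the threshold $\fant=\offbin+\fantoverload$, and the additive $O(1)$ slack built into $\FFcond$ absorbs the constant-size exceptional sets ($\X_\lc$ of size at most two, the bins of $\Delta$, and the possibly short bin per $\Qmatch$ triple).

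Once Good Situation~\ref{gs:ffcase} is in force, Theorem~\ref{thm:ffcase} yields a running volume inequality I would carry forward inductively: at every subsequent moment the free space on the non-$\calF$ bins strictly exceeds the remaining input volume. Combined with the facts that Step~1 always tries complete bins first and that Step~2 only falls through to the $\calE\to\S$ or $\calE\to\justQ1$ branches when the incoming small or quarter item fails to fit in any complete bin, in $\Dhalf^S$ or $\Dhalf^Q$, and in $\justDhalf$, this inequality precludes the creation of any new bin in $\S\cup\justQ1$; it also blocks the rule of last resort from ever moving a bin into $\topblock$. Hence $s+\justq_1=0$ is preserved and Lemma~\ref{lem:tte>0} continues to supply Invariant~\ref{inv:threats}. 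The main obstacle is exactly this volume accounting: a hypothetical $\calE\to\S$ or $\calE\to\justQ1$ event forces every complete bin to have level above $\onlbin-\niceLB=\fant$, which combined with $|\calF|\ge\FFcond$ and the contents of the remaining bins should be shown to exceed the feasibility bound $\offbin m$---a contradiction that must be netted carefully against the two-large subclass of $\wcomplete$ and the $\Qonebig$/$\Qtwofive$ pairings inside $\Qmatch$.
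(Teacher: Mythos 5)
Your proposal tries to reduce the lemma to Good Situation~\ref{gs:ffcase}, but this route has a genuine gap right at the start: $c\ge\FFcond$ does \emph{not} imply $\abs{\calF}\ge\FFcond$. The set $\calF$ requires every bin in it to have level at least $\offbin=12$, whereas a bin in $\wcomplete$ may contain a single big item of size just above $\bigLBf=9-\eps$ and nothing else; such a bin has level well below $12$ and cannot contribute to $\calF$ regardless of any averaging. Your appeal to Theorem~\ref{thm:sjustq1g0} to get the $13-5\eps$ bound is also misapplied: that theorem assumes $s+\justq_1>0$, while the lemma's hypothesis is $s+\justq_1=0$, and in fact the bound breaks exactly in the configurations at issue (e.g., the input starts with $\FFcond$ big items of size $10$, giving $c\ge\FFcond$ and $s+\justq_1=0$ but $\calF=\emptyset$). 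The claimed compensation from $\Dthreenice$ and $\Qfive$ is unfounded because there is no lower bound on the size of those sets --- they may be empty while $c$ is large.

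More fundamentally, the lemma must show that Invariant~\ref{inv:threats} persists \emph{under the starting-phase packing rules, including the rule of last resort}. Theorem~\ref{thm:ffcase} analyzes a different algorithm (a dedicated First-Fit sweep), so its volume conclusion cannot simply be "carried forward inductively" while the starting-phase rules are still being executed. The actual content of the paper's proof --- which your proposal skips entirely --- is a weight argument: if $\justq_2>0$ when $e$ hits $0$, Theorem~\ref{thm:q2weightbound} shows a $\wq$-weight of $4m$ is already present; otherwise one has to track, case by case, what item the rule of last resort places into $\Dnice$ (dominant, half/large, or nice). In each case the Best-Fit preferences of the rule of last resort force $\Dlarge$ and/or $\Dhalf$ to have already filled up, and a $\wb$-accounting then yields $\wb\ge 4m-2$ over $\calC\cup\Qonebig\cup\Delta$, giving $\bigthreat=\topthreat=0$ and hence Invariant~\ref{inv:threats}. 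Good Situation~\ref{gs:ffcase} is used in the paper only as a fall-back to argue that \emph{if} a small/quarter item would ever be forced into an empty bin (all of $\calC$ being nearly full), then at that moment we would already be in a good situation --- not as the starting point of the whole argument. Without the weight-based case analysis, your proposal cannot close the proof.
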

\begin{proof}
    As long as $e>0$ the statement about Invariant \ref{inv:threats} holds by Lemma \ref{lem:tte>0}. Suppose we reach $e=0$, until this time the rule of last resort was not applied. If at this time $\justq_2>0$ then by Theorem \ref{thm:q2weightbound} already $4m$ $\wq$-weight has been packed and hence $\bigthreat=\topthreat=0.$ 
    
    Otherwise, it follows from $c\ge\FFcond$ and Good Situation \ref{gs:ffcase} that $s+\justq_1=0$ remains true because we pack small and quarter items into $\calC$ first. Even after packing items into $\Dnice$ using the rule of last resort, $s+\justq_1=0$ remains true for the same reason. In this case we also never pack any small or quarter items into $\Delta$.

    Now suppose $e=0$ and $\justq_2=0$ (which remains true because $c\ge\FFcond$) when $e=0$ was reached.
    If $s+\justq_1=0$, the only bins that are not in $\topblock$ are $\Dnice\cup\calE$. Hence the condition of Lemma \ref{lem:tte>0} is satisfied as long as no item is packed into $\Dnice$ by the rule of last resort, so Invariant \ref{inv:threats} holds. 

    We now consider an item $y$ that is packed in some existing bin in $\Dnice$ by the rule of last resort. At this point all bins are in $\calC\cup\Delta$ as there are no empty bins left.
    
    If $y$ is a dominant item we create a bin with $\wb$-weight at least 6. Putting this bin in $\calC$, we have $\wb(\calC\cup\Qonebig\cup\Delta)\ge 4(c+\qonebig+\ctwonice)+2+2\chalf+2\clarge\ge4\tbs-2=4m-2$ as in the proof of Lemma~\ref{lem:tte>0}.
    It follows that $\bigthreat=\topthreat=0.$
    
    If $y$ is a half or large item then at least $\Dlarge$ received additional 2 $\wb$-weight (or $\Dlarge$ did not exist to begin with) prior by the rule of last resort since we use Best Fit ($\Dlarge$ contains more than bins in $\Dnice$) and because we do not pack any small or quarter items into $\Delta$. So $\Dlarge$ entered $\calC$ or did not exist and we have 
    $\wb(\calC\cup\Qonebig\cup\Delta)\ge 4(c+\qonebig+\ctwonice)+2\chalf\ge4\tbs-2$ since $\chalf\le1$. We are done as above.

    Finally, if $y$ is nice (this can only happen by the rule of last resort if $\conenice=1$) then both $\Dlarge$ and $\Dhalf$ have previously received a weight of 2 (or did not exist) and these bins as well as $\Dnice$ can be put into $\calC$, so $\wb(\calC\cup\Qonebig)\ge4\abs{\topblock}$.  
    In all cases $\bigthreat=\topthreat=0$
    after $y$ is packed. 
\end{proof}

We define
\begin{align*}
\Tfirst(k) &:= (\ENDBIG)m +(\bigLB)(c+\qonebig+k),\\
\Ezero &:= \frac{1-5\eps}{\ffsmall}(m-\lc-\qmatch)+\frac{\nicerzero}{\ffsmall}\cnice + \frac{1-5\eps}{\ffsmall}\qmatch+ \frac{4\eps}{\ffsmall}\ell-\NEzerocplusone,\\
\NEzero &:= \left\lfloor(m-\lc-\qmatch)-\Ezero\right)\rfloor.
\end{align*}

\begin{theorem}[Foundation of the second phase]
	\label{thm:finalphaseworks}
	Let $k\ge0$. 
	If $\timing\ge \Tfirst(k)$, then at most a total size of $(\bigLBf)(m-c-\qonebig-k)$ 
    of items can still arrive in the future. If additionally $c+\qonebig+k>0$, then $\bigthreat<m-c-\qonebig-k$ and $\topthreat<\initd(m-c-\qonebig-k).$
\end{theorem}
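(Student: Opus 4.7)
The plan is to prove this by a direct volume-counting argument, based on feasibility of the instance and the structure of $\timing$. First I would establish that $\timing$ is a genuine lower bound on the total packed size so far. The quantities explicitly labelled as ignored (content above $\smallslot$ in $\justQ2$, content above $\smallUB$ in $\X_q$, everything in $\X\setminus\X_q$) are nonnegative, so discarding them underestimates the packed volume. The only subtle part of this step is the rule for bins in $\Qonefive \cup \Qonebig \cup \Qmatch \setminus \X$, where we count each quarter item fully plus exactly $9-\eps$ from ``the rest.'' Here one checks that the rest always has size at least $9-\eps$: in $\Qonefive$ the rest is a top item (size $>\topLB$) or two half items (sum $>12-4\eps$); in $\Qonebig$ the rest contains a big item (size $>\bigLB$); in $\Qmatch$ the former $\Qonebig$ bins still contain a big item and the former $\Qtwofive$ bin still contains a dominant item. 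So $\timing \le $ actual packed volume.

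Next I would combine this with feasibility of the input: the total volume of the input is at most $12m$, and $\timing \ge \Tfirst(k) = (\ENDBIG)m + (\bigLB)(c+\qonebig+k)$. Hence the total size of all items still to arrive is at most
\[
12m - \Tfirst(k) = (\ENDBIG + \bigLBf)m - (\bigLB)(c+\qonebig+k) = (\bigLBf)(m - c - \qonebig - k),
\]
which is the first claim of the theorem.

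Finally I would derive the threat bounds from this volume bound together with the strict lower bounds on the sizes of big and top items. Every big item has size strictly greater than $\bigLB$: if the future contained $m-c-\qonebig-k$ big items, their total size would strictly exceed $(\bigLB)(m-c-\qonebig-k)$, contradicting the first part, so $\bigthreat < m-c-\qonebig-k$. Similarly, every top item has size strictly greater than $\topLB$, so the number of top items in the future is strictly less than $(\bigLB)(m-c-\qonebig-k)/(\topLB) = \initd(m-c-\qonebig-k)$, giving $\topthreat < \initd(m-c-\qonebig-k)$. The assumption $c+\qonebig+k>0$ is what makes $m-c-\qonebig-k<m$ and thus turns these into non-trivial bounds.

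The proof is essentially a direct computation, so I do not expect a true obstacle; the only place that requires any care is the verification in the first step that the blanket $9-\eps$ credited to ``the rest'' in mixed bins is actually realized by their contents. This reduces to a short case check against the bin-type definitions and the item-type thresholds, and everything else is arithmetic involving the constants $\ENDBIG$, $\bigLB$, $\topLB$, and $\initd$.
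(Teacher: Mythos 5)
Your proposal is correct and follows essentially the same route as the paper's own (much terser) proof: bound total input volume by $12m$, subtract $\timing\ge\Tfirst(k)$, simplify the arithmetic, and derive the threat bounds by dividing the remaining volume budget by the strict size lower bounds $\bigLB$ and $\topLB$. Your explicit verification that the $9-\eps$ credited to ``the rest'' of mixed bins in $\Qonefive\cup\Qonebig\cup\Qmatch\setminus\X$ is actually present is a detail the paper omits (it simply phrases the leftover as ``future and/or ignored items'') but it is correct and worth having spelled out.
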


\begin{proof}
	Suppose
	$ 
	\timing\ge \Tfirst(k).
	$ 
	The total size of the input is at most $\offbin m$. This leaves at most $(\bigLB)(m-c-\qonebig-k) 
 $ for any future and/or ignored items. 
 The bounds on $\bigthreat$ and $\topthreat$ follow.
\end{proof}

By applying Theorem \ref{thm:finalphaseworks} for successive values of $k$, we see that what we have called the $(\bigLB)$-guarantee continues to hold provided we pack enough into each bin: if we pack $\bigLB$ into one bin and we increase $k$ by 1, then $\bigthreat$ decreases by 1.

\paragraph*{Transitioning to the fill-up phase}
We will start the second phase if 
\[
\timing\ge \Tfirst(\countconstant)
\]
and we are not already in a good situation.
The number \countconstant\xspace will be explained later.
It is chosen to guarantee $\bigthreat< m-\abs{\bigblock}$ during the entire algorithm.

\medskip 
We define 
$\tilde m:=m-\lc-\qmatch$. From this point on we fix $\eps$ to be $1/31$ (but we will continue to use symbolic calculations as well).

\begin{theorem}\label{thm:emptywhensq>0}
    If $\timing < \Tfirst(\countconstant)$ and $s+\justq_1>0$, 
    then for $\eps=1/31$ we get $r\le\NEzero-1$, $e\ge\Ezero+1$ and $e=\Omega(m)$. 
\end{theorem}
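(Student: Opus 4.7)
The plan is to lower-bound the counted quantity $\timing$ by a sum of per-bin contributions and combine this with the hypothesis $\timing<\Tfirst(\countconstant)=(3+\eps)m+(9-\eps)(c+\qonebig+14)$ to derive an upper bound on $r$, equivalently a lower bound on $e$. For the regular bins I would use Lemma~\ref{lem:shlow} for $\S\setminus\X_s$ (each bin has at least $\ffsmall$ of small items), Theorem~\ref{thm:qsizebound} for $\calQ$ (the quarter-volume reassignment gives average counted $\ge\ffsmall$ per bin in $\calQ\setminus\X_q$, while bins in $\Qonebig\cup\Qfive\setminus\X_q$ additionally contribute $9-\eps$ from their top/big/dominant item by the very definition of $\timing$), and the straightforward half/large/quarter bounds for $\Delta\setminus\X$ (again at least $\ffsmall$ per bin). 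For the non-regular or complete bin classes I would invoke Theorem~\ref{thm:sjustq1g0}.3 (at least $12+4\eps$ per bin in $\wcomplete\setminus\X_\lc$) and Observation~\ref{obs:deltabounds} (more than $3(13-5\eps)$ per $\Qmatch$ triple and more than $15+3\eps=\nicefill$ per bin in $\nice$). Bins in $\X_q$ contribute exactly $\smallUB$ (from the cap), while bins in $\X\setminus\X_q$ contribute $0$.

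The hypothesis $s+\justq_1>0$ together with $\eps=1/31$ activates the sharp bounds of Theorem~\ref{thm:sjustq1g0}.2, namely $x_s+x_\lc\le 2$, and we also have $x_q\le\qconstant=15$ and $\conenice+\ctwonice\le 2$ so that every $\X$-related contribution is an absolute constant. When the per-bin lower bounds are substituted into $\timing<\Tfirst(\countconstant)$, the $(9-\eps)q_5$ and $(9-\eps)\qonebig$ terms on the $\Tfirst$ side largely cancel against the corresponding extras on the left. The main residual slack involves the $\Qonebig$ and $\Qonefive$ bins inside $\X_q$ (whose counted drops from $\ge 12-4\eps$ to $\smallUB$) and the small per-bin deficit of $\justq_1\setminus\X_q$ relative to the $\ffsmall$ baseline. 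Both are handled using inequality~(\ref{eq:q1inv}) of Theorem~\ref{thm:q1q2}: $q_1+\chalf^Q\le 2q_2+\qconstant$ allows the negative surpluses of $\justq_1,\Qonebig,\Qonefive$ to be absorbed by the positive surpluses of $\justq_2$ ($2-2\eps$ per bin above the $\ffsmall$ baseline) and $\qtwofive$ ($11-3\eps$ per bin), at an additive cost of $(1-\eps)\qconstant$ from the constant $\qconstant$ in the inequality.

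After the algebra settles the bound takes the form $\ffsmall r\le (3+\eps)m-(3+5\eps)\lc-\ffsmall\qmatch-\nicerzero n+K$ for an explicit constant $K=O(1)$; rewriting using $\tilde m=m-\lc-\qmatch$, this is equivalent to $r\le\tilde m-\Ezero-1$, i.e.\ $e\ge\Ezero+1$, provided $K\le\NEzerocplusone\cdot\ffsmall=48\ffsmall$. At $\eps=1/31$ the verification is a direct numerical check of the dominant contributions to $K$, namely $(9-\eps)\cdot 14$ (the $+14$ inside $\Tfirst(\countconstant)$), $(9-\eps)\qconstant$ (from $\X_q\cap(\Qonebig\cup\Qonefive)$), and $(1-\eps)\qconstant$ (from the (\ref{eq:q1inv}) slack), together with the small $\ffsmall(x_s+\conenice+\ctwonice)$ and $(12+4\eps)x_\lc$ pieces controlled by $x_s+x_\lc\le 2$. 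For $e=\Omega(m)$, substituting $\tilde m=m-\lc-\qmatch$ and $\qmatch\ge 0$ into $\Ezero$ yields $\Ezero=\frac{1-5\eps}{\ffsmall}m-\frac{1-9\eps}{\ffsmall}\lc+\frac{\nicerzero}{\ffsmall}n-48$, whose coefficients at $\eps=1/31$ are $\frac{13}{60}$ on $m$ and $-\frac{11}{60}$ on $\lc$, so using $\lc\le m$ gives $\Ezero\ge\frac{m}{30}-48$, hence $e\ge\Ezero+1=\Omega(m)$ for $m\ge\mmintext$.

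The main obstacle is the tight numerical step in the last paragraph: the worst case for $K$ arises when $q_2=0$, which by Lemma~\ref{lem:q1b} permits $\qonebig$ up to $\qconstant$ and forces every $\Qonebig$ bin into $\X_q$, so the $(9-\eps)\qconstant$ term must fit comfortably inside the $48\ffsmall$ budget together with the $(9-\eps)\cdot 14$ and $(1-\eps)\qconstant$ terms; the choices of $\countconstant=14$, $\qconstant=15$, $\NEzerocplusone=48$, and $\eps=1/31$ are precisely calibrated so that this inequality holds.
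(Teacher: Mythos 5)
Your overall strategy matches the paper's: lower-bound $\timing$ by per-bin contributions using Theorem~\ref{thm:sjustq1g0}, Theorem~\ref{thm:qsizebound}, Observation~\ref{obs:deltabounds}, and Lemma~\ref{lem:shlow}, compare against $\Tfirst(\countconstant)$, and extract a bound on $r$; your $e=\Omega(m)$ step via $\lc\le m$ is a slightly different (and valid) shortcut compared to the paper's case split on $\lc+\qmatch$. However, the final numerical check you describe does not go through, and the way you set it up diverges from what the paper actually does.

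You budget for a separate $(\bigLB)\qconstant$ contribution coming from $\Qonebig$-bins (or $\Qonefive$-bins) that end up in $\X_q$, on top of $(\bigLB)\countconstant$ from $\Tfirst(\countconstant)$ and a $(1-\eps)\qconstant$ deficit for $\X_q$-bins. Already these three terms sum to $(\bigLBf)\cdot\countconstant+(\bigLBf)\cdot\qconstant+(1-\eps)\cdot\qconstant = 276-44\eps$, whereas the available budget is at most $\NEzeroc\cdot(\ffsmall)=181-37\cdot\frac1{31}\approx 181.9$ (your $\NEzerocplusone\cdot(\ffsmall)\approx 185.8$ does not help either). So the ``precise calibration'' you invoke fails by a wide margin; this is not a borderline numerical check. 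The paper's derived constant is
\[
(\ffsmall)\xsize-(\smallUBf)\qconstant+(\bigLBf)\countconstant+(\wsqfull)\xsizewin = 181-37\eps < \NEzeroc\cdot(\ffsmall),
\]
which has no $(\bigLBf)\qconstant$ term at all: the paper reduces to the case ``$\qconstant$ bins in $\justQ1$ and $\qonebig=0$'' and cancels the $(\bigLBf)\qfive$ contribution on both sides, rather than charging an extra $(\bigLBf)$ per $\X_q$-bin. If you genuinely need to account for the scenario where $\Q1$-bins with $\wb\ge 4$ land in $\X_q$ and lose their counted $\bigLBf$, you would need a different argument (e.g.\ showing that such a scenario cannot simultaneously have $r$ close to $\NEzero$), not a larger numeric constant. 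As written, your proof attempt therefore has a gap at exactly the step you flag as the ``main obstacle''.
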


\begin{proof}%
    We have that all nonempty bins are in $\relev\cup\Qmatch\cup\wcomplete$ and $\relev=\S\cup\calQ\cup\nice\cup(\X\cap\relev)$. 
	 In the bins in $\relev\setminus(\X\cup\Dthreenice)$ at least $\ffsmall$ is packed on average by Theorem \ref{thm:sjustq1g0}. The bins in $\Qfive\cup\Qmatch$ contain at least $(\ffsmall)+(\bigLB)$ after reassignment (Theorem \ref{thm:qsizebound} and  \Cref{obs:deltabounds}). 
     By Observation $\ref{obs:deltabounds}$, bins in $\Dthreenice$ contain at least $\nicefill=(\ffsmall)+(\niceoverffsmall).$ By Theorem \ref{thm:sjustq1g0} at most $\xsize$ bins are in $\X\cap\relev$. Of these bins, at most 4 are ignored in $\timing$, and at most $\qconstant$ are in $\Q1$.

    By Theorem \ref{thm:sjustq1g0} all bins in $\wcomplete\setminus\X$ have level at least $\wsqfull$ and $\abs{X_\lc}\leq\xsizewin$. 
    The worst case for our bound is if $\qconstant$ bins are in $\justQ1$, and $\qonebig=0$.
    We conclude that     
    \begin{align*}
		& (\ffsmall)(\rel-\xsize+\qmatch)+(3-3\eps)\qconstant
  +(\wsqfull)(\lc-\xsizewin)\\ &+ (\bigLB)(\qfive+\qmatch) +(\niceoverffsmall)\cnice 
  \\
		\le&\timing 
  <  
  (\ENDBIG) (m-\lc-\qmatch)+(\bigLB) (\countconstant+\qfive+\cnice) + \offbin (\lc+\qmatch).
	\end{align*} 
    It follows 
    \begin{align*}
    &(\ffsmall)\rel+ (3-3\eps)\qconstant+ 4\eps \lc+(\nicerzero)\cnice+(1-5\eps)\qmatch
    \\<
    &(\ENDBIG)(m-\lc-\qmatch) +(\ffsmall)\xsize
     +(\bigLB)\countconstant + (\wsqfull)\xsizewin,
    \end{align*} 
    so 
 \[r<\frac{3+\eps}{\ffsmall}(m-\lc-\qmatch)
-\frac{\nicerzero}{\ffsmall}\cnice- \frac{1-5\eps}{\ffsmall}\qmatch-\frac{4\eps}{\ffsmall} \lc + \NEzeroc. 
\]
As $r$ is integer it follows $$r\le \left\lfloor\frac{3+\eps}{\ffsmall}(m-\lc-\qmatch)
-\frac{\nicerzero}{\ffsmall}\cnice- \frac{1-5\eps}{\ffsmall}\qmatch-\frac{4\eps}{\ffsmall} \lc + \NEzeroc\right\rfloor=\NEzero-1.$$
We have $e+r=m-\lc-\qmatch$, so 
\begin{align*}
    e&\ge \frac{1-5\eps}{\ffsmall}\tilde m+\frac{\nicerzero}{\ffsmall}\cnice + \frac{1-5\eps}{\ffsmall}\qmatch+ \frac{4\eps}{\ffsmall}\ell-\NEzeroc=\Ezero+1\\&\ge\frac{1-5\eps}{\ffsmall}\tilde m+\frac12\cnice+\frac{1-5\eps}{\ffsmall}\qmatch+\eps \lc-\NEzeroc.
\end{align*}     If $(\lc+\qmatch)\ge m/2$, then $e\ge \eps m/2-O(1)$, else $e\ge \frac{1-5\eps}{\ffsmall} m/2-O(1)$. Here we use that $\frac{1-5\eps}{\ffsmall}>\frac{1}{5}>\eps$ for $\eps=1/31.$
\end{proof}

\begin{goodsit}[Weight-based packing]
\label{gs:wb}
    We have $c\ge\FFcond, e=\Omega(m)$ and $s+\justq_1+\justq_2=0$. 
\end{goodsit}
\begin{proof}
    By Lemma \ref{lem:ttsq=0FFcond}, $s+\justq_1+\justq_2=0$ remains true (meaning that no small/quarter item gets packed into an empty bin). Suppose an item of size more than $\niceLB$ remains unpacked. At this point $e$ has been reduced to 0 and the rule of last resort fails. The unpacked item is at most large by Invariant \ref{inv:threats}, which holds by Lemma \ref{lem:ttsq=0FFcond}.
    No bins are empty in the end, and all bins are in $\calC\cup\Qonebig\cup\Delta$. We first consider the weight according to $\wb$. No bin can be in $\Dnice\cup\Dlarge$ at this stage. (If the bin $\Dnice$ received an item of weight at least 2 without entering $\Dtwonice$, we consider it to be in $\calC$.)
    The only bin that can have $\wb$-weight less than 4 is $\Dhalf$, which has weight 2. The only way that an item that is at most large remains unpacked is if the item is in fact large and the bin is $\Dsh$. Moreover, if there exists a bin with $\wb$-weight 6 (for example, a bin in $\nice$)
    we are in Good Situation \ref{gs:weight}. 

    If the bin $\Dhalf^+$ does not exist, it cannot be created once $c\ge\FFcond$ and $s+\justq_1=0$ by Lemma \ref{lem:ttsq=0FFcond}: $s+\justq_1$ will remain 0 and small and quarter items are packed into $\calC$ first, not into $\justDhalf$. 
    Thus when the preconditions of this good situation held for the first time, $\Dhalf^+$ already existed. From this point on (when $e=\Omega(m)$), the only items that are packed into empty bins are nice, large or dominant.

    If a dominant item arrives after this point, then if no bin with $\wb$-weight at least 6 is formed and $s+\justq_1=0$, it is packed into an empty bin. We must have $\Dnice=\emptyset$ at this time.    Moreover, as this bin is preceded by at least $\FFcond$ bins in $\calC$, it will remain alone in its bin. In this case no nice item can arrive anymore (and none arrived earlier either). We now consider the weight according to $\wn$ and note that the packed $\wn$-weight is at least $4m-2$ as $\Dlarge$ has weight 4. Then the unpacked item is not large, a contradiction. 

    We next consider arriving nice items.
    If there is no bin with $\wb$-weight 6, then $n=0$ and no nice item gets packed into $\calC$. Packing a nice item into $\justQ2\cup\justQ1$ does not make it complete, while packing a nice item into $\Delta$ violates the nice rule, so no nice item is packed in Step 1.
    There are at most three nice items in $\Dnice\cup\Dtwonice$. Apart from these bins, any nice items not packed in Step 1 must get packed using the rule of last resort into $\Delta$ (because $\calC$ is already tried in Step 1 and $\nice=\emptyset$), so there are $O(1)$ nice items in the whole input.   
    Since $e=\Omega(m)$ when we entered this good situation, $\Dsh$ already existed and no dominant item is packed into an empty bin, $\Omega(m)$ of the empty bins must receive pairs of large items if an item remains unpacked.

We again find a contradiction using $\wn$, as we have 8 $\wn$-weight in $\Omega(m)$ bins, compensating for the deficit in $\Dtwonice$ and $\Dsh$.
\end{proof}

\begin{theorem}
\label{thm:firstphaseworks}
    Throughout the execution of the starting phase, as long as we are not in a good situation the algorithm maintains Invariant \ref{inv:threats} even when using the rule of last resort, and the algorithm does not fail to pack any item. Furthermore, we have $e=\Omega(m)$ or $\justq_2>0$ throughout. Whenever $s+\justq_1>0$, we have $r\le\NEzero-1$.
\end{theorem}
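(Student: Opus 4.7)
The plan is to verify the four assertions of the theorem---that Invariant \ref{inv:threats} is maintained (even with the rule of last resort), that the algorithm does not fail, that $e=\Omega(m)$ or $\justq_2>0$ throughout, and that $r\le\NEzero-1$ whenever $s+\justq_1>0$---by a case analysis that stitches together the preceding lemmas of this section. Being in the starting phase means $\timing<\Tfirst(\countconstant)$ by definition, which will be the hypothesis every time Theorem \ref{thm:emptywhensq>0} is applied.

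The bound $r\le\NEzero-1$ follows immediately from Theorem \ref{thm:emptywhensq>0}, which at the same time yields $e\ge\Ezero+1=\Omega(m)$ and thereby disposes of the dichotomy $e=\Omega(m)\vee\justq_2>0$ in the case $s+\justq_1>0$. When $\justq_2>0$ the dichotomy is trivial, so the only remaining subcase is $s+\justq_1+\justq_2=0$. Here (\ref{eq:relev}), Lemma \ref{lem:q1b}, and Observation \ref{obs:deltabounds} cut the regular bins outside $\calC$ down to $\Qonebig\cup(\Delta\setminus\Dthreenice)$, of size at most $\qconstant+O(1)$, so $e\ge m-c-O(1)$. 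If $c<\FFcond$ we get $e=\Omega(m)$ since $m-\FFcond=\Omega(m)$; otherwise almost every bin lies in $\calC$ with $\wb\ge4$, forcing $\wb(I)$ up against the bound $4m$ of Lemma \ref{lem:threats} and placing us in Good Situation \ref{gs:weight} (or, modulo the $O(1)$ slack from $\Delta\setminus\Dthreenice$, in Good Situation \ref{gs:ffcase} or \ref{gs:wb}).

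For Invariant maintenance and non-failure, I would split on whether $e>0$. When $e>0$, the rule of last resort is never triggered since any item of size at most $\offbin<\onlbin$ fits in some empty bin, so Lemma \ref{lem:tte>0} gives Invariant \ref{inv:threats} and the algorithm cannot fail. When $e=0$, the previous paragraph shows that either we are in a good situation or $\justq_2>0$; in the latter case Theorem \ref{thm:emptywhensq>0} also forces $s+\justq_1=0$. The sub-branch with $c\ge\FFcond$ is then handled by Lemma \ref{lem:ttsq=0FFcond} directly, which maintains Invariant \ref{inv:threats} and precludes failure even when the rule of last resort is active. In the complementary sub-branch, Lemma \ref{lem:nofailq2} restricts any possible failure to a top item, and Theorem \ref{thm:q2weightbound} supplies $\topthreat\le m-\abs{\topblock\cup\Dnice}$, guaranteeing that some bin outside $\topblock$ is still available to receive it.

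The main obstacle will be preserving the hypothesis ``no bin has entered $\topblock$ via the rule of last resort'' of Theorem \ref{thm:q2weightbound} in the $\justq_2>0$, $e=0$ regime: I need to check that any item triggering the rule of last resort in this state either avoids $\topblock$ or simultaneously decreases $\topthreat$ by one, so that Invariant \ref{inv:threats} is still kept intact when we combine the weight reassignment of Theorem \ref{thm:q2weightbound} with the $\bigthreat$ bound of Lemma \ref{lem:tte>0}. A secondary delicate point is the boundary of the dichotomy with $c\ge\FFcond$, $s+\justq_1+\justq_2=0$, and $e$ not yet $\Omega(m)$: reconciling the $O(1)$-bin slack from $\Delta\setminus\Dthreenice$ with the exact equality $\wb(I)=4m$ required by Good Situation \ref{gs:weight} may force me to route those edge cases through Good Situation \ref{gs:ffcase} via the high packed volume inside $\Dthreenice$ and $\wcomplete$.
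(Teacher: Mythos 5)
Your overall decomposition mirrors the paper's: Theorem~\ref{thm:emptywhensq>0} for the $r$-bound and the dichotomy when $s+\justq_1>0$, Lemma~\ref{lem:tte>0} while $e>0$, Lemma~\ref{lem:ttsq=0FFcond} and the good situations once small and quarter items stop arriving. However, you flag two ``delicate points'' without resolving them, and those are precisely where the paper's proof needs its actual ideas.

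For the first (the $\justq_2>0$, $e=0$ regime), your plan of tracking whether last-resort items enter $\topblock$ or decrease $\topthreat$ is not what is needed and would be hard to push through. The paper sidesteps this entirely: at the moment $e$ first hits zero with $s+\justq_1=0$ and $\justq_2>0$, no last-resort packing has occurred yet, and by~(\ref{eq:relev}) every bin lies in $\topblock\cup\Dnice$ (since $\S\cup\justQ1\cup\calE$ is empty). Theorem~\ref{thm:q2weightbound} then yields $\wq(I)\ge4m$ immediately, so one is already in Good Situation~\ref{gs:weight} before the rule of last resort is ever invoked, and its precondition never becomes an issue. This ``$\topblock\cup\Dnice$ is everything'' observation is the missing step in your argument; without it you cannot invoke Theorem~\ref{thm:q2weightbound} or Lemma~\ref{lem:nofailq2} to rule out a failing top item.

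For the second (the $s+\justq_1+\justq_2=0$, $c\ge\FFcond$ case where $e$ may not currently be $\Omega(m)$), your hedged routing through Good Situation~\ref{gs:ffcase} does not line up with that good situation's hypotheses, and Good Situation~\ref{gs:weight} requires an exact weight bound you have not established. The paper instead rewinds to the \emph{last time} $s+\justq_1>0$ was true, where Theorem~\ref{thm:emptywhensq>0} guarantees $e\ge\Ezero+1=\Omega(m)$ and the rule of last resort has not been used; Good Situation~\ref{gs:wb} is applied at that earlier moment and covers the remainder of the input. If $c<\FFcond$ at that moment, the paper notes that only items of nice size or larger are subsequently packed into empty bins, so $e\ge m-\FFcond-4=\Omega(m)$ persists and Lemma~\ref{lem:tte>0} carries through. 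Both of these temporal arguments are absent from your sketch, and without them the non-failure and invariant claims are not established in the cases you correctly identify as problematic.
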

\begin{proof}
    By Lemma \ref{lem:tte>0} and Lemma \ref{lem:ttsq=0FFcond} we maintain Invariant \ref{inv:threats} as long as the rule of last resort is not applied, so in particular as long as there is an empty bin, or if $s+\justq_1=0$ and $c\ge\FFcond$ before a bin enters $\topblock$ by the rule of last resort. 
    
    As long as we are not in a good situation but we are in the starting phase, $\timing<\Tfirst(\countconstant)$. By Theorem \ref{thm:emptywhensq>0}, we then have $r\le\NEzero-1$ and $e\ge\Ezero+1$ if $s+\justq_1>0$. In this case we are done by Lemma \ref{lem:tte>0}, as the rule of last resort was not yet applied. 

    Suppose $s+\justq_1=0$.
    Whenever $\justq_2>0$, we are done by Theorem \ref{thm:q2weightbound} unless the rule of last resort gets applied. In that case however, if $\justq_2>0$ when $e$ becomes 0, at this point already $\wq(I)=4m$ by Theorem \ref{thm:q2weightbound}, because all bins are in $\topblock\cup\Dnice$ if $s=\justq_1=e=0$. This is Good Situation \ref{gs:weight}. If $\justq_2=0$ when $e$ becomes 0, then a bin in $\justQ2$ could only be created later if $\justq_1>0$ and $e=0$ at the same time, which contradicts Theorem \ref{thm:emptywhensq>0}.

    Suppose $s+\justq_1+\justq_2
    =0$. We now consider the last time when $s+\justq_1>0$ was true (or the start of the input if it was never true). Recall that at this time we had $e\ge\Ezero+1 $ and $r\le\NEzero-1$. The rule of last resort had not yet been applied, and we are done if $c\ge\FFcond$ by Good Situation \ref{gs:wb}.   Suppose $c<\FFcond$.
    From this point on only items that are at least nice were packed into empty bins. When $s+\justq_1+\justq_2=0$, all regular bins are in $\calC\cup\Delta$, so $e\ge m-\FFcond-4=\Omega(m)$ so Invariant \ref{inv:threats} holds by Lemma \ref{lem:tte>0}.

    Clearly, the algorithm does not fail to pack an item as long as $e>0$, as an empty bin is an option for all items. Suppose we reach the state $e=0$ by packing an item in the last empty bin. Then $\justq_2>0$ by the above and $s+\justq_1=0$ by Theorem \ref{thm:emptywhensq>0}. Then $\topblock\cup\Dnice=\justQ2\cup\calC\cup\nice\cup\Delta=\relev\cup\wcomplete$ so by Theorem \ref{thm:q2weightbound}, a $\wq$-weight of $4m$ has already arrived (since $e=0$), and we are in Good Situation \ref{gs:weight}.
\end{proof}

\section{The fill-up phase}
\label{sec:secondphase}

\subsection{Preliminaries}
\label{sec:prelim}
Once the fill-up phase is reached we refer to the bins by their type they had when the fill-up phase was started and no longer update these sets. E.g., a bin in $\justQ2$ at the start of the fill-up phase that receives a big item in the fill-up phase does {not} become a bin in $\Qtwofive$ but is referred to as a bin in $\justQ2$ even after receiving the big item. 
We assume all bins in $\S$ contain more than $\ffsmall$, overestimating its total content by at most $2\cdot(\ffsmall)$. The bin in $\Dlarge$ is assumed to contain a large$^+$ item once we enter this phase, overestimating its content by at most $3+\eps$. The bin $\Dsh$ contains an \easytwo item which matches the packing rules in the fill-up phase. 

There are three possible states when entering the fill-up phase:
\begin{itemize}
    \item $s+\justq_1>0$
    \item $s+\justq_1+\justq_2=0$
    \item $s+\justq_1=0$ and $\justq_2>0$
\end{itemize}
The first case is what we will call the standard case where $r\le\NEzero$ and $e\ge \Ezero$ holds by Theorem \ref{thm:emptywhensq>0} (after possibly one item was packed into a previously empty bin), for which we will see that when using our packing rules we will eventually end in a good situation or the input ends. For the second and third case we will see that we are already in good situations that do not have requirements on $e$ or $r$ (Good Situations \ref{gs:nonempty} and \ref{gs:q2} in section \ref{sec:secondeasy}). From section \ref{sec:firstthree} we focus on the standard case.
\begin{definition} 
    Let $e_0$ be the number of empty bins at the start of the fill-up phase. 
\end{definition}
For the fill-up phase, we introduce the set $\cal U$ of \emph{unused} bins. These are mostly bins that have not received items in the fill-up phase but that we do plan to use for items. At the start of the fill-up phase, these are the bins that are not in $\wcomplete\cup\Qmatch$,  so $|\mathcal{U}| =m-\lc-\qmatch=\tilde m=r+e_0$. 

Bins in $\Dthreenice$ are also not used for items anymore, but are initially counted as part of $\cal U$ so that $\largethreat\le u$ (see Invariant \ref{inv:large}).
Bins in $\Qfive$ are initially in $\cal U$ to maintain the proper ratio $q_1:q_2$ (Theorem \ref{thm:q1q2}). 
At the start of the fill-up phase, the bins in $\cal U$ are sorted from left to right. We use the ordering\footnote{The at most two bins in $\Dnice\cup\Dtwonice$ can be placed anywhere. However, they are ignored when determining $\beta_0$ later.}
\[
\Dthreenice,\Qtwofive,\justQ2,\S,\Q1,
\calE,\justDhalf\cup\Dlarge
\] 
where the subsets $\S$ and $\Q1$ are ordered by non-increasing levels and $\Dsh$ is placed among them if it exists. Indeed, the entire set $\cal U$ is essentially sorted by the levels of small and quarter items at the start of the fill-up phase, so for instance bins in $\Q1$ (including bins in $\Qonefive$ and $\Qonebig$) have level at most $\ffsmall$ for the sorting. 
Throughout the fill-up phase, by the level of a bin in $\calQ$ we will always mean the total size of the quarter items in this bin at the end of the starting phase. Regarding $\nice$, it is often convenient to divide the contents of these bins in a part of size at least $\smallslot$ and a part of size exactly $\bigLB$ (and this is why these bins are first in the ordering). See the definition of $\Tfirst(k)$ and the proof of Lemma \ref{lem:beta}.

During the fill-up phase, we will maintain a set $\calD$ such that $\topthreat\le u-d$ will hold throughout the fill-up phase. 
We define a specific initial set $\calD$ below and we will update this set throughout, using the following rules.

\begin{description}
    \item[Rule 1] 
    \label{rule:avg}Each bin that is used (in particular bins in $\calD$) will receive at least $\bigLB$ (including parts assigned to a bin but not packed in it) to ensure $\bigthreat\le m-\bigblock$ continues to hold. We already note that for bins that are {empty} at the start of the fill-up phase the bound of $\bigLB$ can be reached simply by using Next Fit (it will hold for all but at most one bin at any time).
    \item[Rule 2]  
    \label{rule:d}Whenever some item cannot be packed into some bin in $\calD$ that already received items of the same type in the fill-up phase (types are defined below), that bin will leave $\calD$ and $\cal U$. Each time we pack and/or assign $\topLB$ to $\calD$ in the fill-up phase, a new bin is added to $\calD$. (Sometimes we will assign parts of items packed into other bins to bins in $\calD$.)
    \item[Rule 3] 
    \label{rule:nond}Each bin that is not in $\calD$ will receive at least $\topLB$ on average to maintain $\topthreat\le u-d$.
\end{description}

\paragraph*{Reducing the unused bins} 
We begin the fill-up phase by removing the bins in $\Dthreenice$ and $\Qfive$ from the unused bins. The contents of these bins were and remain counted.  
For some later calculations it will still be important that these bins may exist, which is why we include them initially and gave a specific ordering for them. 

Recall that the initial value of $u$ is $m-\lc-\qmatch=\tilde m$. By Theorem \ref{thm:finalphaseworks},
$\topthreat\le \initd(m-c-\countconstant)$ (because we put $\Qonebig$ into $\Qonefive$).
So at least 
\begin{align*}
m-\initd(m-c-\countconstant)&=\dsize m+\initd\cdot(\countconstant+c)
\\ &=\dsize (m-c) + \initd\cdot\countconstant +c
\end{align*}
bins will not receive top items in the fill-up phase (but $c$ of those bins are unavailable for top items anyway). 
Then after removing $\Dthreenice$ and $\Qfive$ from the unused bins, we have $u=m-c$, so at least 
\[
\dsize\cdot u+\initd\cdot\countconstant 
\] 
\emph{unused} bins will not receive top items. 
We will initially set $d=\dsize\cdot u+\initd\cdot\countconstant$. 

Analogously, $\bigthreat\le m-c-\countconstant$, so at least $c+\countconstant$ bins will not receive big items, meaning that at least $\countconstant$ unused bins will not receive big items in the fill-up phase.
While packing the input, at any time there will be \emph{half-full} bins. These are bins which have received some items during the fill-up phase but have not yet received (or been counted for) $9-\eps$ or, in the case of non-$\calD$ bins, $10+6\eps$. These half-full bins need to be taken into account to ensure that Invariant \ref{inv:threats} is maintained. Denote their number by $h$.

\begin{invariant}
\label{inv:d}
    At any time, the number of bins in $\calD$ that have not yet received any item in the fill-up phase it at least
    $\dsize\cdot u+\initd\cdot\countconstant-h$, where $u=m-c$ initially and $u$ is updated according to Rule \hyperref[rule:d]{2}.
\end{invariant}

\begin{lemma}
\label{lem:inv-d}
    As long as we pack items according to  
Rule \hyperref[rule:avg]{1} and update $\calD$ and $\cal U$ according to Rule \hyperref[rule:d]{2} for all except at most 10 bins, or pack items according to Rule \hyperref[rule:nond]{3},
and at most 13 bins are half-full at any time,
$\topthreat\le u-d$, $\bigthreat\le m-\bigblock$ and Invariant \ref{inv:d} are maintained.
\end{lemma}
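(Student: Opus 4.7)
The plan is to prove the lemma by induction on the sequence of packing operations in the fill-up phase, showing first that the three conditions hold at the start of the phase, and then that each individual operation preserves them.

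For the base case, at the moment we enter the fill-up phase we have $h=0$ and no items have yet been packed in the fill-up phase, so Invariant \ref{inv:d} reduces to $d \ge \dsize\cdot u + \initd\cdot\countconstant$, which holds with equality by the choice of the initial $\calD$. The bound $\topthreat \le u - d$ was established in the discussion preceding Invariant \ref{inv:d}: Theorem \ref{thm:finalphaseworks} applied with $k = \countconstant$ gives $\topthreat \le \initd(m - c - \countconstant)$, and substituting $u = m - c$ yields $u - d = \initd(m-c) - \initd\cdot\countconstant = \initd(m - c - \countconstant)$. Finally, $\bigthreat \le m - \bigblock$ is inherited from Invariant \ref{inv:threats} at the end of the starting phase via Theorem \ref{thm:firstphaseworks}.

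For the inductive step, I would track how each quantity changes as items arrive. The quantity $\bigthreat$ decreases by 1 per big item, and a bin enters $\bigblock$ only once its level surpasses $\bigfree = \onlbin - \topLB$. By Rule \hyperref[rule:avg]{1}, every used bin eventually receives at least $\bigLB > \bigfree$ of volume, so combined with the $(\bigLB)$-guarantee of Theorem \ref{thm:finalphaseworks} the increment in $\bigblock$ is compensated by a matching decrease in $\bigthreat$ on average; the extra $\countconstant$ in $\Tfirst(\countconstant)$ is exactly the buffer absorbing the exceptional bins. For $\topthreat \le u - d$, a top item reduces $\topthreat$ by 1. Under Rule \hyperref[rule:d]{2}, whenever a bin leaves $\calD$ both $u$ and $d$ drop by 1, preserving $u-d$; whenever $\topLB$ of volume has been packed or assigned to $\calD$ a new bin is added to $\calD$, so $d$ increases by 1 and this cancels a potential decrease in $u - d$ coming from a top item placement. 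Under Rule \hyperref[rule:nond]{3}, the same accounting applies to non-$\calD$ bins whose average level reaches $\topLB$. For Invariant \ref{inv:d}, each fresh bin in $\calD$ touched by a first item contributes $+1$ to $h$, and each $\topLB$ accumulated in $\calD$ contributes $+1$ to the right-hand side's "$d$-additions," keeping the slack $\dsize\cdot u + \initd\cdot\countconstant - h$ valid throughout.

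The main obstacle is making the two exception budgets—the at most 10 bins that may violate Rule \hyperref[rule:d]{2} and the at most 13 bins that may be half-full—interact cleanly with the additive constants $\initd\cdot\countconstant$ and with $h$ in Invariant \ref{inv:d}. I would resolve this with a single global volume-bookkeeping argument: track the total packed volume $V$ in the fill-up phase and split it into (a) the volume charged against $\bigLB$ per used bin (for $\bigthreat$), (b) the volume charged against $\topLB$ per non-$\calD$ bin (for $\topthreat$ when Rule \hyperref[rule:nond]{3} applies), and (c) the volume accumulated inside $\calD$ (for Rule \hyperref[rule:d]{2}). The 10 exceptional bins contribute at most a bounded deficit to (a)–(c) which is dominated by the slack $\initd\cdot\countconstant$, and the 13 half-full bins are exactly what the $-h$ term of Invariant \ref{inv:d} is designed to absorb. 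Once this bookkeeping is made precise the three claimed inequalities follow simultaneously for every step.
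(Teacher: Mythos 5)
Your proposal follows the same high-level plan as the paper (case analysis on which Rule governs the current operation, per-step tracking of $u$, $d$, $\topthreat$, $\bigthreat$, $\bigblock$, with the slack terms $\initd\cdot\countconstant$ and $-h$ absorbing the exceptional bins), but the quantitative bookkeeping that constitutes the actual content of the proof is left as a sketch, and in one place it is stated incorrectly. When a new bin is added to $\calD$ after $\topLB$ has been packed there, $d$ increases by 1 and hence $u-d$ \emph{decreases} by 1; this does not ``cancel a potential decrease in $u-d$'' --- it \emph{is} the decrease that must be compensated, and the compensation comes from the volume bound of Theorem~\ref{thm:finalphaseworks} (packing $\topLB$ lowers $\topthreat$ by 1). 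The paper sidesteps this per-item accounting and instead tracks the ratio $d:u$ directly for Rule~\hyperref[rule:avg]{1}: after packing total volume $x$ into $\calD$, at most $x/(\bigLB)$ bins leave $\calD\cap\mathcal U$ and $\lfloor x/(\topLB)\rfloor$ bins are added to $\calD$, so $d$ drops by at most $x\bigl(\frac1{\bigLB}-\frac1{\topLB}\bigr)$ while $u$ drops by at most $x/(\bigLB)$, preserving $d/u\ge\dsize$ up to a rounding error of a single bin.

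The real gap is your final paragraph. You correctly identify that the exception budgets (the $\le 10$ Rule-1/2 violations and the $\le 13$ half-full bins) must be reconciled with the additive constants, and you promise a ``global volume-bookkeeping argument,'' but you never carry it out. That bookkeeping is the whole lemma: the rounding error of one bin plus the 10 exceptional bins are covered because $\initd\cdot\countconstant > 11$; and $\bigthreat\le m-\bigblock$ is maintained because the initial bound $\bigthreat\le m-c-\countconstant$ from Theorem~\ref{thm:finalphaseworks} leaves exactly enough room (together with the one extra bin $\Dtwonice$) to absorb 13 half-full bins. Without these concrete inequalities your argument does not go beyond restating the surrounding discussion of the Rules and of Invariant~\ref{inv:d}.
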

\begin{proof}
    If we pack items according to Rule \hyperref[rule:nond]{3}, the claims follow from the fact that we pack at least $\topLB$ in every non-$\calD$ bin, decreasing $\topthreat$ and $\bigthreat$ by at least $1$ while increasing $\bigblock$ by at most 1 and removing exactly $1$ bin from $\cal U$ when we start using a new bin. Hence $u-d$ and $\topthreat$ both decrease by 1, and $\bigthreat$ decreases by at least 1. In this case the ratio $d:u$ increases.

    Regarding items that get packed according to Rule \hyperref[rule:avg]{1}, we pack at least $\bigLB$ in every bin in $\calD$ (and then remove such bins from $\calD$ and $\cal U$) and add a new bin to $\calD$ after packing $\topLB$, which means that we add a bin on average after using $\frac{\topLB}{\bigLB}$ bins in $\calD$. The ratio $d:u$ remains constant during this process apart from at most one bin.

    The ratio can be seen as follows. After packing a total size of $x$ into $\calD$, we have removed at most $x/(\bigLB)$ bins from $\calD$ (because we only remove a bin from $\calD$ once we start using the next one) and we have added $\lfloor x/(\topLB) \rfloor$ bins to $\calD$.
    Ignoring the rounding, overall $d$ has decreased by at most $x(\frac{1}{\bigLB}-\frac1{\topLB})$
    and $u$ has decreased by at most $x/(\bigLB)$. The ratio is maintained. The rounding means that the set $\calD$ may be 1 smaller during processing. This together with the initial value $d=\dsize\cdot u+\initd\cdot\countconstant$ leaves 10 bins for which the rules do not need to be followed, since $\initd\cdot\countconstant>11$. 

    Finally, maintaining $\bigthreat\le m-\bigblock$ given that initially $\bigthreat\le m-c-\countconstant$ means that it is sufficient (due to the bin $\Dtwonice$) that at most 13 bins will be half-full at any time during the fill-up phase. 
\end{proof}
 
A consequence of Invariant \ref{inv:d} is that the set $\calD$ does not become empty during the packing if we indeed maintain $h\le 13$.
Maintaining this invariant means that all remaining big and top items can be packed at any point during the fill-up phase. Note that if at some point indeed very many top items arrive, they can perhaps not all of them be packed outside of $\calD$, as there can be various half-full bins outside of $\calD$. However, by Invariant \ref{inv:d}, as long as the total number of half-full bins is at most 13, all top items can indeed be packed.

\paragraph*{Item types}
Naturally, as we start filling up bins in the fill-up phase, new thresholds become important. Rather than leaving enough space for items that may arrive in the future as in the starting phase, we now want to use the remaining space efficiently. Bins in $\S$ have at least $(\onlbin)-(\smallslot)=\domslot$ space remaining and bins in $\justQ2$ leave at least $(\onlbin)-\qtwoUB=\topLB$ space. 
If some item type fits at least three times on top of bins in $\justQ2$, then First Fit gives a stronger bound on the packing and Rule \hyperref[rule:avg]{1} is satisfied. (Lemma~\ref{lem:rule13} formalizes this.) As there is at least $\topLB$ remaining space, we define small items to be of size at most $\smallUBF$. Items of size more than $\heavyUBF$ that fit twice in this space are also small items. Apart from the range $(\niceLB,5+3\eps]$, these size ranges are a subset of what defined small items in the starting phase.

The next items are \emph{quarter} items which may not fit three times on $\justQ2$ but at least three times on $\S$. For these we need to be slightly more careful; this is described below.
\emph{Small} items fit at least four times in an empty bin and at least two times on $\S\cup\justQ2$. It can be seen that two items that are larger than small items satisfy Rule \hyperref[rule:avg]{1}. To fill the remaining space well, the remaining items are split into quarter$^+$ and quarter$^{++}$ items.

Analogously, \emph{\easytwo} items fit twice on $\S$ and satisfy Rule \hyperref[rule:nond]{3}, explaining the (unchanged) threshold $\halfUBF$. These items have good sizes as well as large weights.
To pack large items in the range $(\halfUBF,\largepUBF]$ efficiently we introduce a new threshold at $\largemUBF$ separating \emph{large$^-$} and \emph{large$^+$} items which will be explained later. 

\begin{figure}[ht]
	\label{fig:phase2types}
	\centering
	\includegraphics[width=\textwidth]{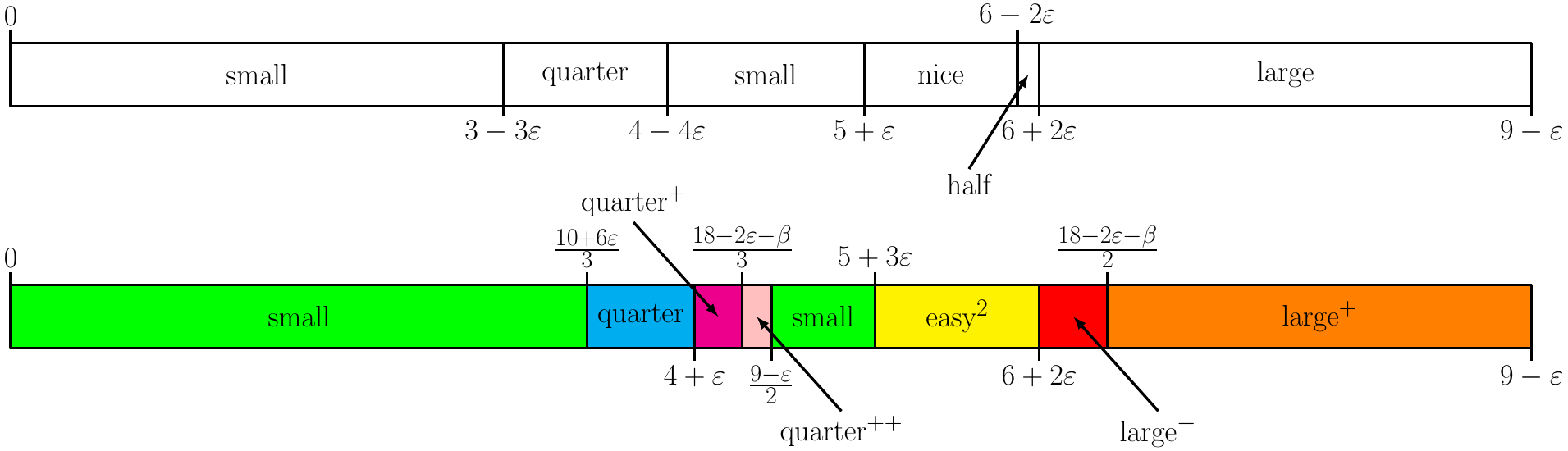}
	\caption{(Sketch, using $\eps=1/31, \beta=5$) A comparison of the important thresholds for our algorithm in the starting phase (top) and in the fill-up phase (bottom). The small (green) items fit at least three times (resp. twice) in the remaining empty space in a $\justQ2$ bin ($10+6\eps$), the easy$^2$ (yellow) items fit at least twice in the remaining empty space in a $\S$ bin ($12+4\eps$) and the large$^-$ (red) (resp. quarter$^+$ (purple)) items fit at least twice (resp. three times) in the remaining empty space in a bin filled to at most $\beta$ ($\onlbin-\beta$).}
    \label{fig:phase2typescap}
\end{figure}
The value $\beta$ will be defined later and will change during the fill-up phase; we will have $\beta\in(\smallUB,\smallslot]$. There are ten size ranges, but the items in six ranges are straightforward to pack (see below). We call the quarter$^+$, quarter$^{++}$, large$^-$ and large$^+$ items \emph{hard} items.

\medskip
\noindent
\begin{tabular}{cccc}
\textbf{Type}              & \textbf{Max size}      
\\ \hline \rule{0pt}{2.6ex}
small                  & $\smallUBF$           
\\
quarter                & $\quarterUBF$              
\\
quarter$^+$              & $\frac{\onlbin-\beta}{3}$ 
\\
quarter$^{++}$              & $\heavyUBF$               
\\
small & $\bigtwoUBF$             
\end{tabular}$\quad$
\begin{tabular}{cccc}
\textbf{Type}              & \textbf{Max size}\\ 
\hline \rule{0pt}{2.6ex}
\easytwo                 & $\halfUBF$               
\\
large$^-$              & $\largemUBF$              
\\
large$^+$              & $\largepUBF$              
\\
big & $\topLB$ 
\\
top & \offbin 
\end{tabular}

\medskip 
We see that quarter items and small items may be slightly larger than in the starting phase. This does not decrease their weight. From the starting phase, we will only use the facts that bins in $\justQ2$ have level at most $8-8\eps$ and bins in $\S$ have level at most $\smallslot$; the old size thresholds play no further part in the analysis.

\subsection{Packing methods for non-hard items}
\label{sec:secondphasetypes}

\begin{description}
    \item[small and big] These are items such that when packing them into $\justQ2$ using First Fit (which are the fullest bins (ignoring $\X$) that possibly still need to receive items in the fill-up phase), each bin (apart from constantly many) will receive at least $\bigLB$ of items. 
    
    These items are always packed in the leftmost available bin that did not already receive items from another type in the fill-up phase. That is, we essentially use First Fit, but items of the three different size ranges are not packed together into bins. We only use bins in $\calD$. This is feasible because new bins will enter the set $\calD$ as we pack items in it (Rule \hyperref[rule:d]{2}).

    \item[\easytwo and top] When packing \easytwo or top items using First Fit into any bin that is not in $\justQ2\cup\Dlarge\cup\Dnice$, Rule \hyperref[rule:nond]{3} is automatically satisfied. (We have $\abs{\Dlarge\cup\Dnice}\le2$.)

    These items are packed exactly like the small items except that we skip the bins in $\justQ2$. We use First Fit for \easytwo and top items (on the bins not in $\justQ2$) separately. 
    \item[quarter] The quarter items are packed by First Fit, alternating between bins in $\justQ2$ and bins not in $\justQ2$. 
    If we run out of bins in $\justQ2$ we use First Fit on dedicated bins not in $\justQ2$. 
    If we run out of nonempty bins not in $\justQ2$ we can always pack all remaining items as we will show in Good Situation \ref{gs:q2}.

    \item[hard] These are the quarter$^+$, quarter$^{++}$, large$^-$ and large$^+$ items. These are the only types of items that we will sometimes (have to) pack into empty bins although nonempty bins that are not in $\justQ2$ are still available.
    When packing hard items of one type using First Fit into empty bins, Rule \hyperref[rule:nond]{3} is satisfied \emph{and} there is a surplus above the requirement of $\topLB$. We use this surplus to pack as many items as possible into nonempty bins, which is necessary to succeed. See Section \ref{sec:hard}.
\end{description}
Once First Fit uses a new bin for an item, the bin previously considered (in which the item did not fit anymore) is removed from $\cal U.$

\subsection{Hard (quarter$^{+(+)}$ and large) items}
\label{sec:hard}
The above lemmas show that in the fill-up phase, all items except quarter$^{+(+)}$ and large items can be packed while following the rules---as long of course as quarter$^{+(+)}$ and large items are packed following the rules as well. We next consider quarter$^{+(+)}$ and large items separately. For this we first need to define some important sets.

\subsubsection{Sets considered for packing hard items}

\paragraph*{The set $\S_{-L}$ and the parameter $\beta_0$}

At the beginning of the fill-up phase, if there are $e$ empty bins, then if we pack two large items in each empty bin, there will be $e$ nonempty bins that will \emph{not} be needed to pack large items that have not arrived yet, even if $u$ of them arrive in total. In principle, we let $\S_{-L}$ be the $e$ \emph{rightmost} nonempty bins. We only deviate from this if there are less than $e$ nonempty bins. In that case $\S_{-L}$ consists of all nonempty unused bins (after the removal of $\Dthreenice\cup\Qtwofive$). 
We define $\beta_0$ as the level of the leftmost bin in $\S_{-L}$ unless all other nonempty bins are in $\Dthreenice\cup\Qtwofive$, in which case $\beta_0$ is set to $\smallslot$. Of course, due to other items arriving, it may well be that some large items end up getting packed in $\S_{-L}$ after all.  

Since $u=m-\lc-\qmatch$ at the start of the fill-up phase, it is not possible that more than $u-\qfive$ large items arrive in the fill-up phase. 
This holds because each bin counted in $\qfive+\qmatch$ contains two items larger than $\bindiff$ or a dominant item, and each bin counted in $\ell$ contains two items larger than $\bindiff$ or an item larger than $\offhalf$.
Moreover, as above,
Invariant \ref{inv:d} is maintained by Theorem \ref{thm:finalphaseworks}. As soon as $u$ starts to decrease in the fill-up phase, more than $u-q_5$ large items may still arrive.

\paragraph*{Construction of the set $\calD$}
We initially define $\calD$ as the {leftmost} $\dsize\cdot u+\initd\cdot\countconstant$ unused bins (where bins in $\Qfive$ and $\Dthreenice$ have already been eliminated).

\paragraph*{Definition of the set $\S_L$}
We define $\S_L$ as the set of the remaining nonempty bins in $\cal U$ (that is, all the unused bins that are not in $\calD\cup\S_{-L}\cup\calE$). 
The set $\S_L$ may be empty.

\medskip 
Note that the above construction is done only once. The sets $\S_L$ and $\S_{-L}$ do not increase and a bin leaves such a set only if the bin leaves $\cal U$ or is added to $\calD$. While packing items in the fill-up phase, bins will be added to $\calD$ from left to right.
Hence entering $\calD$ will happen first to bins in $\S_L$ and then (possibly) to $\S_{-L}$ and $\calE$. Such bins leave their original sets.

Bins in $\S$ are filled to at most $\smallslot$ and hence have at least $\domslot$ empty space. 
Easy$^2$ items fit at least twice.
Bins in $\S_{-L}$ are filled to at most $\beta_0$ and hence have at least $\onlbin-\beta_0$ empty space. 
Large$^-$ items therefore fit at least twice (explaining that threshold). 
Bins initially in $\calD$ are loaded to at most $\qtwoUB$ and hence have at least $\onlbin-\qtwoUB=\topLB$ space.

\subsubsection{Packing methods for hard items: Five stages}
Hard items are packed as follows. There are five possible stages; depending on what the packing looks like when the fill-up phase starts, not all stages might be applicable. Generally, we start by taking advantage of any $\Qonefive$ bins that are available, then we start filling the nonempty bins from right to left, always trying to avoid using empty bins as much as possible.

Hard items are always distributed among several types of bins ($\calD,\S_L,\S_{-L},\calE$). One of the used types will always be $\calD$. 
In several cases, we will specify that nonintegral numbers of bins need to be used for some items. This can be implemented as follows. We always start by using a bin that is not in $\calD$. Then, we keep using bins in $\calD$ until we would exceed the desired ratio. At that point we use a bin that is not in $\calD$ again, and repeat the process. In the long term we get closer and closer to the desired ratio.

\paragraph*{Stage 1} In this stage, as mentioned above we first exploit any bins in $\Qonefive$ that may exist. These bins can be intermixed with $\justQ1$ in the sorted order, but we use them first. To be more precise, these bins are not used to pack any new items (as they are already quite full) but rather to pack additional items into $\calD$, as follows. Note that these bins are not in $\cal U$. For this stage we define the upper bound for quarter$^+$ items as $\heavyUBF$ and the upper bound for large$^-$ items as $\bigLB$ and hence neither quarter$^{++}$ items nor large$^+$ items exist. This also means that the value of $\beta$ (see table of item type thresholds for the fill-up phase) does not yet play a role.

\begin{description}
    \item[Quarter$^{+(+)}$:] As long as there exist bins in $\Qonefive$ with partially uncounted contents, for packing quarter$^{+(+)}$ items such bins are considered to contain quarter$^{+(+)}$ items (of the size of the ignored items). Each such bin allows us to pack $\frac{\topmbig}{1-3\eps}$ bins in $\calD$ with two quarter$^{+(+)}$ items each. In these $\frac{\topmbig}{1-3\eps}$ bins we then pack (or count) more than $2\frac{\topmbig}{1-3\eps}(\quarterUBF)+(\topmbig)=\frac{\topmbig}{1-3\eps}\cdot(\bigLB)$, which includes the so far uncounted part of the bin in $\Qonefive$. 
    \item[Large:] As long as there exist bins in $\Qonefive$ with partially uncounted contents, for packing large items such bins are considered to contain large items (of the size of the ignored items). Each such bin allows us to pack $\frac{\topmbig}{\smallUB}$ bins in $\calD$ with one large item each. In these $\frac{\topmbig}{\smallUB}$ bins we then pack (or count) more than $\frac{\topmbig}{\smallUB}(\largeLB)+(\topmbig)=\frac{\topmbig}{\smallUB}\cdot(\bigLB)$, which includes the so far uncounted part of the bin in $\Qonefive$. 
\end{description}

It can be seen that by packing quarter$^{+(+)}$ and large items this way, Rule \hyperref[rule:avg]{1} and Rule \hyperref[rule:nond]{3} are followed.
Once we run out of bins in $\Qonefive$, we start using the methods in the following table. The value of $\beta$ will change over time and is set at the start of each of the following stages. 
Note here that for $\beta\leq\heavyUBF$ the upper bound for quarter$^+$ items is at least  the upper bound for quarter$^{++}$ items and hence quarter$^{++}$ items do not exist as long as $\beta\leq\heavyUBF$.

\begin{center}
    
$
\begin{array}{cccccccc}
\label{tbl:heavylarge}
\text{Item type} & \text{Bin type} & \text{Nr bins} & \text{Per bin} & \text{Counted} & \text{Average} 
\\
\hline \rule{0pt}{2.6ex}
\text{quarter$^+$} & \calD 
 & 3 & 2 & 2(\quarterUBF) & 9+\frac{9}{4}\eps
\\
& \justQ1,\S_{-L}(,\S_L),\calE & 1 & 3 & 3(\quarterUBF) & & 
\\[0.6ex]
\hline\rule{0pt}{2.6ex}
\text{quarter$^{++}$} 
& \calE & 1 & 4 & \frac43 (\onlbin-\beta) & \bigLB & 
\\
& \calD 
 & \frac{45-4\beta-5\eps}{2\beta-(9-\eps)} & 2 & \frac23 (\onlbin-\beta) &  
\\[0.6ex]
\hline\rule{0pt}{2.6ex}
\text{large$^-$ }  & \calD & 1 & 1 & \largeLB & 9+3\eps &  
\\
 & \justQ1,\S_{-L}(,\S_L),\calE   & 1 & 2 &  \domslot  & 
\\[0.6ex]
\hline\rule{0pt}{2.6ex}

\text{large$^+$ }  & \calD & \frac{\onlbin}{\beta}-2 & 1 & \frac{\onlbin-\beta}2 & \bigLBf 
\\
 & \calE  & 1 & 2 &  \onlbin-\beta   & 
\end{array}
$
\end{center}

The third column indicates the number of bins used each time. The column `Average' contains the average amount counted over all bin types used for this item. 

This table is implemented as follows. Items from these four types are packed into separate bins. For each type except quarter$^{++}$ items, we first use a bin in $\calD$. For quarter$^{++}$ items,
we use $\calE$ first to ensure that the average packed in the bins with these items is always greater than $\bigLB$, apart from at most one bin: the bin currently being used. For all other types, we have this guarantee for all used bins up to and including the most recently filled bin in $\calE$, which is all but at most four bins for each type.

For each bin used we pack items in it until we have packed the number of items in the column Per bin. 
We first use bins in $\calD$ until we have packed the number of items in the column Per bin and until we have used the number of bins in the column Nr bins, followed by one bin in $\calE$. 
(If the number of bins supposed to be used in $\calD$ is not an integer, then the number of bins used in $\calD$ is always off by less than 1 compared to the desired ratio of $\calD:\calE$ usage.) 
This procedure keeps repeating. Whenever we start using a new bin, the previous bin for this type is called \emph{closed}. Once a bin is closed it is removed from $\cal U$. 

In the following we will always 
check if bins with level at most $\beta$ exist. While this is true we remain in the current stage. Else we update $\beta$ to the next higher bound.  

\paragraph*{Stage 2}  
We set $\beta:=\min(\beta_0,\ffsmall)$.
As long as there are bins with level at most $\beta$ available, we use these bins (including bins in $\S_L$ if needed) for packing quarter$^{+}$ and large$^-$ items. 
There are no quarter$^{++}$ items and only large$^+$ items are packed into empty bins. 

\paragraph*{Stage 3} As Stage 2, but with $\beta:=\min(\beta_0,\frac{9-\eps}2)$. Bins in $\S_L$ with level at most $\beta$ will still be used for quarter$^{+}$ and large$^-$ items. There are no quarter$^{++}$ items and only large$^+$ items are packed into empty bins. 

\paragraph*{Stage 4} 
This stage only exists if $\beta_0>\frac{9-\eps}2$. We set $\beta:=\beta_0$.

Quarter$^{++}$ and large$^+$ items are packed into empty bins. All items are packed according to the table. Bins in $\S_L$ are \emph{not} used for quarter$^+$ items; we would go to Stage 5 instead. This is done to ensure that enough bins are left for large$^-$ items and that we do not waste bins on quarter$^+$ items.

\paragraph*{Stage 5} Only bins in $\S_{L}\cup\calD\cup \calE$ are left. At this point quarter$^+$ and large$^-$ items are also packed into empty bins.

\medskip 
Stage 6 would be the stage where all empty bins have been filled while some bins in $\S_L\setminus\calD$ remain. The case where all nonempty bins get filled first is discussed below. We will show that Stage 6 is not reached or we are in a good situation.

As is in the starting phase, if some item cannot be packed according to the packing rules (including the case where we change the packing rules if we reach a good situation) we use the rule of last resort.

\section{Analysis of the fill-up phase}
\label{sec:finish}

To analyze the fill-up phase, it will be convenient to discuss how many bins of various types there are as a fraction of $\tilde m$, which is $m-\lc-\qmatch$ at the start of the fill-up phase (before removing any bins). We define $e_0'$, for instance, to be the number of empty bins at the start of the fill-up phase \emph{divided by $\tilde m$}. We do the same for all other bin types that we refer to and that are in $\cal U$ at the start of the fill-up phase.

\begin{lemma}
    \label{lem:rule13}
    In every stage and for every item type, Rule \hyperref[rule:avg]{1}, Rule \hyperref[rule:d]{2} and Rule \hyperref[rule:nond]{3} are followed (if applicable) except for at most 13 bins.
\end{lemma}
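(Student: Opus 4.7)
The plan is to handle the ten item types separately, grouping them into the \emph{non-hard} items (small in both size ranges, quarter, easy$^2$, big, top) and the \emph{hard} items (quarter$^{+}$, quarter$^{++}$, large$^{-}$, large$^{+}$), and to show that each of these contributes only $O(1)$ exceptional bins, with the sum bounded by $13$. Rule 2 is really a bookkeeping rule about when a $\calD$-bin leaves $\calD\cap \cal U$ and when a new one enters, so once Rule 1 and Rule 3 are verified, Rule 2 is automatic, and I would discharge it in passing once I know that closed bins indeed received $\bigLB$ and that for every $\topLB$ counted into $\calD$ a new bin has been queued.

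For the six non-hard types the argument reduces to the First Fit guarantee (Corollary \ref{cor:FF11}) applied to the family $F_T$ of bins prescribed in Section~\ref{sec:secondphasetypes}. The type thresholds were set precisely so that (i) a small or big item fits at least three times in the residual capacity $\topLB$ of a $\justQ2$-bin, and (ii) an easy$^2$ or top item fits at least twice in the residual capacity $\domslot$ of an $\S$-bin. In each case the First Fit lower bound gives an average counted load of at least $\bigLB$ in $\calD$-bins and at least $\topLB$ on all but one bin in $F_T$, so at most one bin per type is exceptional. Quarter items are alternated between $\justQ2$ and non-$\justQ2$ bins; this costs at most one additional exceptional bin per side, and I would separately observe that if quarter items exhaust one side we enter Good Situation~\ref{gs:q2} and the rule becomes vacuous.

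For the four hard types the argument is arithmetic verification of the tables of Section~\ref{sec:hard}. The plan is to check each row as a ``unit'' that repeats: a specified number of $\calD$-bins together with one bin outside $\calD$. For each row I verify (a) the bin outside $\calD$ has counted size at least $\topLB$ (Rule~3), (b) each $\calD$-bin receives at least $\bigLB$ from the items packed into it (Rule~1 for the $\calD$-bin individually), and (c) the average counted over the whole unit equals $\bigLB$, so the $\calE$/$\S_{-L}$/$\S_{L}$ bins that are closed receive at least $\bigLB$ counted as well. The arithmetic to check is, for instance, $(\onlbin/\beta-2)\cdot(\onlbin-\beta)/2+(\onlbin-\beta)=(\onlbin-\beta)\cdot(\onlbin/\beta)/2$, giving average $\onlbin/2=\bigLB$ for large$^{+}$, and analogous identities for the other three rows. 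Stage~1 uses $\Qonefive$-bins in place of non-$\calD$ bins: here I would check that the counted size $\topmbig$ on top of the ignored part of a $\Qonefive$-bin is enough that the $\calD$-bins it finances average $\bigLB$.

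The main obstacle is the accounting at stage transitions and non-integral ratios. Within a single stage and type the only exceptional bins are the at most one partially-filled unit currently being packed, plus at most one extra from rounding the non-integral ratio of $\calD$-bins to non-$\calD$-bins in a row. When $\beta$ changes between stages, the per-bin counts and the table row change, and a $\calD$-bin that was partially filled in the old stage might be continued in the new one; I would need to argue that such carried-over bins are counted as exceptional at most once and that the closing of the last unit in the old stage contributes at most one exception per hard type. Summing: at most $4$ hard types $\times$ $O(1)$ exceptions (one rounding plus one active unit plus the $\Qonefive$-to-stage-$2$ transition) together with at most $6$ non-hard exceptions gives the $13$. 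The delicate step is making sure the $\Qonefive$ bookkeeping from Stage~1 does not double-count an exceptional bin in Stage~2, which I would handle by noting that Stage~1 only modifies bins already in $\calD$ and that Stage~2 sees an untouched unused list apart from the at most one still-open $\calD$-bin from Stage~1.
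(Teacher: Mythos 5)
Your overall decomposition (handle the item types in groups, non-hard via a First Fit guarantee and hard via the per-unit arithmetic of the table) matches the paper's structure, and your row-by-row unit arithmetic for the hard types, including the identity $(\onlbin/\beta-2)(\onlbin-\beta)/2+(\onlbin-\beta)=(\onlbin-\beta)(\onlbin/\beta)/2$, is correct and essentially what the paper does. Rule 2 being automatic and each hard type contributing one half-full bin are also as in the paper.

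However, there is a genuine gap in the non-hard case. You claim the First Fit lower bound gives ``at most one bin per type'' of exception, and you cite Corollary~\ref{cor:FF11} for this. That corollary is the $k=1$ case (at least one item per bin) and is not the right tool here. When small (or big) items are packed into $\justQ2$, the bins being filled have \emph{different initial levels}: anywhere from $\smallslot$ up to $\qtwoUB$. The paper's proof handles this by invoking Lemma~\ref{lem:FFvarbin}, the variable-bin-size First Fit lemma with $k\ge 3$, and that lemma only covers the bins indexed $k$ through $v-k$; the first and last roughly $2k$ bins are not controlled. This produces \emph{six} exceptional bins for the small/big types alone, which is the dominant term in the final constant ($1+6+1+1+4=13$). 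Your count of ``6 non-hard exceptions'' happens to match numerically but for the wrong reason (you attribute one exception to each of six types), and your count for the hard side, ``one rounding plus one active unit plus the $\Qonefive$-to-Stage-2 transition'' per type, would give $12$ rather than the $4$ the paper uses, so your bookkeeping does not actually sum to $13$. To close the gap you need to (a) recognize that the filled bins are variable-sized and apply Lemma~\ref{lem:FFvarbin} rather than Corollary~\ref{cor:FF11}, and (b) observe that the paper charges exactly one half-full bin per hard type because within each type the partially filled unit is the only exception (the rounding of the $\calD$-to-non-$\calD$ ratio only makes $\calD$ momentarily one smaller, which is absorbed in Lemma~\ref{lem:inv-d}, not here).
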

\begin{proof}
    Rule \hyperref[rule:d]{2} is followed automatically. 
    \begin{description}
        \item[Small and big items]      Items in the range $(\bigLB,\topLB]$ fit once in each bin in $\calD$ and items in the range $(\heavyUBF,\bigtwoUBF]$ fit twice. 
    This proves Rule \hyperref[rule:avg]{1} is followed  for these items. At most one of these bins has received only one item in the range $(\heavyUBF,\bigtwoUBF]$ at any time.
    
    Bins in $\calD$ are currently counted for $\ffsmall$ if they are in $\justQ2$ (after reassigning some contents to $\Q1$). Nothing above $\smallslot$ in $\justQ2$ has been counted so far. 
    Consider a bin in $\justQ2$ that is used to pack items in the range $(0,\smallUBF]$. At least three of these items fit in each bin. 
    Consider a bin that contains $x\in(\smallslot,\qtwoUB]$ at the start of the fill-up phase. 
    By Lemma \ref{lem:FFvarbin}, apart from at most 6 bins, such a bin ends up being full to at least $x+\frac{3}4(\onlbin-x)\ge\frac{3}{4}(\onlbin)+\frac{1}{4}(\smallslot)=15-3\eps$ 
and we can count it in the fill-up phase for $15-3\eps-(\smallslot)>\bigLB$.

Now consider a bin in $\calD\setminus\justQ2$ that contains $x\le\smallslot$. Apart from at most 6 bins, such a bin contains at least $x+\frac34(\onlbin-x)$ in the end but has only been counted for $x$. We can therefore count at least $\frac34(\onlbin-x)>\bigLB$
in the fill-up phase.
\item[Easy and top items] Follows immediately from the use of First Fit and the size thresholds. All bins used for packing \easytwo and top items contain (and are counted for) at most $\smallslot$ at the start of the fill-up phase. At most one bin is half-full at any time.
\item[Quarter items]
    From the use of First Fit and the size threshold it follows immediately that Rule \hyperref[rule:nond]{3} is followed. A bin in $\calD$ that cannot receive another quarter item is filled to at least $(\onlbin)-(\quarterUBF)=\quarterffull$. Such a bin is counted for at most $\smallslot$ so we count it for at least $(\quarterffull)-(\smallslot)=8+3\eps$ in the fill-up phase. As we use one bin that can be counted for $\topLB$ for each such bin, we achieve an average of $((8+3\eps)+(\topLB))/2>\bigLB.$
    \end{description}

    For hard items, see the comments in Stage 1 and the table above (the column Average corresponds to Rule \hyperref[rule:nond]{3}). Each of the four hard item types adds at most one half-full bin.

    In total there are at most 6 bins for which we do not follow Rule \hyperref[rule:avg]{1} and at most $1+6+1+1+4=13$ bins are half-full at any time.
\end{proof}

\subsection{Easy cases: bins in $\S_L\cup\S_{-L}$ get filled up early}
\label{sec:secondeasy}

\begin{goodsit}[Very Simple Fill-Up]
\label{gs:nonempty}
    Only empty bins, half-full bins and bins in $\calD$ are left unused before reaching Stage 6.
\end{goodsit}
\begin{proof}
    As long as there are bins in both $\calD$ and $\calE$ we keep packing items as usual. Suppose only empty and half-full bins remain unused.
    By Lemma \ref{lem:rule13}, we followed Rules \hyperref[rule:avg]{1}--\hyperref[rule:nond]{3} up until this point, so Invariant \ref{inv:d} holds by Lemma \ref{lem:inv-d}. Due to Rule \hyperref[rule:avg]{1} and the volume bound given by Theorem \ref{thm:finalphaseworks}, less than a total volume of $(9-\eps)e$ can still arrive. By Lemma \ref{lem:inv-d}, the algorithm does not fail on dominant items. Each such item takes away one bin and more than $9-\eps$ of size that can still arrive. Other items fit at least twice in empty bins, meaning that these bins will certainly become more than $9-\eps$ full on average if we just use First Fit.

    Now suppose only bins in $\calD$ and half-full bins remain unused. This means we have already packed $\topLB$ in (or can assign $\topLB$ to) every bin that is not in $\calD$  (Rule \hyperref[rule:nond]{3}) apart from the half-full bins, meaning that all items have been packed. 
\end{proof}

\begin{goodsit}[Weight-based packing with $\justQ2$]
\label{gs:q2}
    We have $\justq_2>0$, there is at least one bin in $\justQ2$ that did not receive any item in the fill-up phase, and all remaining bins are half-full or empty.
\end{goodsit}

\begin{proof}

    Recall that $\Qtwofive$ is removed from $\cal U$ at the start of the fill-up phase, so only bins in $\Q2\setminus\Qtwofive=\justQ2$ may remain in $\cal U$.
    If at some point all bins are empty or half-full, we are in Good Situation \ref{gs:nonempty}.
    Otherwise, we show that for the input $I$ packed so far we have $\wf(I)=4m$ if the assumption of this lemma holds, so we are done by Good Situation \ref{gs:weight}. 
    
In this situation, some bins in $\justQ2$ have not yet received items in the fill-up phase, so all items that are packed in $\cal D$ are in fact packed into bins in $\justQ2$ and add weight to the bins (which we do not need in most cases). 
We reassign weight to the bins in $\justQ2$ as described in Theorem \ref{thm:q2weightbound}. Since this already ensures that these bins have a weight of 4, we can ignore the easy-sized items as they get packed exclusively in $\justQ2$ in the fill-up phase in this case. This also includes small and large items packed in Stage 1. We have $\qonebig=0$.

We next consider the types that are not packed into $\justQ2$ at all. Those are the \easytwo items. Items in the range $(5+3\eps,6+2\eps]$ get packed in pairs and add at least $4$ weight to a bin; at most one bin contains only one such item at any time (2 weight missing). An item in the range $(10+6\eps,12]$ adds $4$ weight on its own. 

Now we consider the types that are packed into bins in $\justQ2$ and bins not in $\justQ2.$ For all these types, the number of bins used in $\calD$ per bin used outside of $\calD$ is at least 1.
For both types of large items we pack at least two items with weight 2 each into the non-$\justQ2$ bins, and for both types there is at most one bin at any time with weight 2 and not 4. A bin in $\justQ2$ always receives such an item first, compensating for the at most one bin (per large type) that has one large item. 

For quarter$^{++}$ items that are packed in Stage 4 or 5 we pack at least four items with weight 1 each into the non-$\justQ2$ bins. Hence for all these types we also add 4 weight per non-$\justQ2$ bin. At most one non-$\justQ2$ bin with quarter$^{++}$ items contains less than four of them (but at least one): 3 missing weight. 

We are left with quarter items and quarter$^+$ items (and quarter$^{++}$ items that are packed as quarter$^+$ items; we refer to them as quarter$^+$ items in the remainder of this paragraph). Those require some reassignment. For each non-$\justQ2$ bin that receives quarter$^+$ items we pack six quarter$^+$ items into bins in $\justQ2$, we reassign the weight of all of these items to the non-$\justQ2$ bin. Hence all non-$\justQ2$ bins with quarter$^+$ items are assigned at least 6 weight, even if such a bin contains less than three quarter$^+$ items (there can be one such bin). 

For each non-$\justQ2$ bin that receives (at least) three quarter items we pack (at least) two quarter items into bins in $\justQ2$. We reassign the weight of these items to the non-$\justQ2$ bin. At most one non-$\justQ2$ bin has less than three quarter items but at least one quarter item. Overall each non-$\justQ2$ bin (except at most 1 which receives 3) received at least 4 additional weight in the fill-up phase due to quarter 
items: 1 missing weight.

In $\Delta$, at most 6 weight is missing: 2 weight per bin, but not in $\Dhalf^+$ as that bin is counted among bins with \easytwo items; if on the other hand $\justDhalf$ exists, then $\Dlarge$ does not exist.
In total, we see that compared to the requirement of 4, a weight of at most $2+3+1+6=12$ is missing in bins outside of $\justQ2$, and at most $2q_2$ weight in the bins in $\justQ2.$

By the assumptions of this lemma, every nonempty bin that is not in $\justQ2$ received items in the fill-up phase. So, by assigning one weight from the at least $\qcminusthree+2q_2$ bins in $\Q1$
and using the result of Theorem \ref{thm:q2weightbound}, all bins end up with weight at least 4. (If $\Dhalf^Q$ exists, one less weight is needed.)
By Good Situation \ref{gs:weight}, all items will be packed.
\end{proof}

\subsection{The first three stages succeed}
\label{sec:firstthree}
At the start of the fill-up phase, we have the following situation. First of all, if there is no bin in $\S\cup\justQ1$, by (\ref{eq:relev}) and Observation \ref{obs:deltabounds} all except at most four bins that are not in $\wcomplete\cup\Qfive\cup\Qmatch\cup\Dthreenice$ are in $\calE\cup\justQ2$. These cases were  handled in Good Situation \ref{gs:nonempty} and Good Situation \ref{gs:q2} and they do not rely on any of the properties in the following list except $\abs{q_1-2q_2}=O(1)$ (which holds by Theorem~\ref{thm:q1q2}). This includes the cases where we enter the fill-up phase with $s+\justq_1+\justq_2=0$ (Good Situation \ref{gs:nonempty}) or $s+\justq_1=0$ and $\justq_2>0$ (Good Situation \ref{gs:q2}). 
In the remainder of the proof we consider the case $s+\justq_1>0$.

We have the following properties. Recall that $\eps=1/31$.

\begin{itemize}
      \item There exists a bin in $\S\cup\justQ1$.
      \item We have $r\le\NEzero$, $\tilde m=\Omega(m)$ and $e\ge \frac{1-5\eps}{\ffsmall}\tilde m+\frac12\cnice+\frac{1-5\eps}{\ffsmall}\qmatch+\eps \lc-\NEzerocplusone=\Ezero=\Omega(m)$ 
      by Theorem \ref{thm:emptywhensq>0}.
      \item All bins in $\wcomplete$ (except at most $\xsize$) have (assigned) level at least $\offbin.$ Since $\timing<\Tfirst(\countconstant)$ held until the current item arrived, this means that at most $(\ENDBIG)(m-\lc-\qmatch)+(\bigLB)(q_5+\cnice)+O(1)$ is counted on the regular bins. 
    \item 
    In $\calQ$ there is more than $(1+7\eps)\qonefive$ ignored volume, since $(\topLB)-(\bigLB)=1+7\eps$. 
    
    \item We have $\abs{q_1-2q_2}=O(1)$ by Theorem~\ref{thm:q1q2}.
    
    \item There can be at most $\xsize$ bins in $\X$ which we will ignore for most of the analysis.
\end{itemize}

\begin{lemma}
    \label{lem:beta}
    If $e_0'<\frac27$, 
    then 
    \begin{equation}
        \label{eq:beta}
        \beta_0\le
\frac{ 2(\ENDBIG) - 2e_0'(\smallslot)  - 2\cnice' (\nicenobig) }{2 - 5e_0'-2\cnice'}.
    \end{equation}
\end{lemma}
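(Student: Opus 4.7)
The plan is a volume argument applied at the moment the fill-up phase begins. At that moment we have a lower bound $\timing \ge \Tfirst(\countconstant)$ (this inequality triggered the transition) and, essentially, an upper bound $\timing \le \Tfirst(\countconstant) + 12$, since a single item of size at most $12$ pushed us across the threshold. The strategy is to lower-bound $\timing$ by summing contributions from each bin category and then use the upper bound on $\timing$ to derive the claimed bound on $\beta_0$.

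The crucial observation is that the bins in $\mathcal{U}$ that lie to the \emph{left} of $\S_{-L}$ in the prescribed ordering each contribute at least $\beta_0$ to $\timing$. Within the same sort-category as the leftmost bin of $\S_{-L}$ (where levels are sorted non-increasingly), bins to its left have larger level; bins in the earlier categories $\Dthreenice$, $\Qtwofive$, $\justQ2$ each contribute well in excess of $\beta_0 \le \smallslot$. Quantitatively, I would use that each bin in $\wcomplete\setminus\X_\lc$ contributes at least $\wsqfull$; each bin in $\Qmatch$ contributes at least $\smallUB + (9-\eps)$; each $\Dthreenice$ bin contributes at least $\nicefill = \bigLB + \nicenobig$; each $\justQ2$ bin (of which, by Theorem~\ref{thm:q1q2}, there are roughly $q_1/2$) to the left of $\S_{-L}$ contributes exactly $\smallslot$; and each $\S$ bin to the left of $\S_{-L}$ contributes strictly more than $\beta_0$.

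Summing these lower bounds, subtracting from each complete bin a $\bigLB$ part to match the $\bigLB$ factor in $\Tfirst(\countconstant)$, and regrouping so that $\beta_0$ is the coefficient of the count of bins left of $\S_{-L}$, the comparison $\timing \le \Tfirst(\countconstant) + 12$ should yield (after normalization by $\tilde m/2$) the inequality
\[
\beta_0 \bigl(2 - 5 e_0' - 2 \cnice'\bigr) \;\le\; 2\ENDBIG - 2 e_0' \smallslot - 2 \cnice' \nicenobig + O(1/\tilde m),
\]
which is the desired bound (the $O(1/\tilde m)$ term is absorbed for $m \ge \mmintext$). The hypothesis $e_0' < 2/7$ ensures that $2 - 5 e_0' - 2 \cnice' > 0$ so that dividing through preserves the inequality direction.

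The main obstacle is twofold: (i) a careful case analysis on which category contains the leftmost bin of $\S_{-L}$, since it could lie in $\S$, $\justQ2$, $\Qtwofive$, or even $\Dthreenice$, and in each case a different set of categories is ``left of $\S_{-L}$''; and (ii) bookkeeping of the $O(1)$ correction terms arising from $\X$, $\Delta$, $\Qonefive\cup\Qonebig$, and from the $4\eps$ slack between $\onlbin$ and $\ENDBIG + \smallslot + \bigLB$. These corrections must be shown to be dominated by the $\Omega(m)$ leading terms, which is where the assumption $m \ge \mmintext$ enters.
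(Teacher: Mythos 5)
Your overall strategy — a volume argument comparing the per-bin contributions to $\timing$ against the threshold $\Tfirst(\countconstant)$ that triggered the transition, using the ordering of $\mathcal{U}$ and $q_1\approx2q_2$ — is the paper's approach. But the accounting as you set it up does not yield the claimed inequality, and the missing piece is exactly the contribution of the $\S_{-L}$ bins themselves.

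You list contributions from $\wcomplete$, $\Qmatch$, $\Dthreenice$, the $\justQ2$ bins to the left of $\S_{-L}$ (at $\smallslot$ each), and the $\S$ bins to the left of $\S_{-L}$ (at $>\beta_0$ each). Feeding only these into the budget $(\ENDBIG)\tilde m+(\bigLB)\cnice+O(1)$ gives, after normalizing, a numerator of roughly $2(\ENDBIG)-e_0'(\smallslot)-2\cnice'(\nicenobig)$, i.e.\ with $e_0'(\smallslot)$ rather than $2e_0'(\smallslot)$, which is a strictly weaker bound on $\beta_0$ than~\eqref{eq:beta}. The factor of two comes from the $\S_{-L}$ bins: in the worst case these $e_0$ bins are all in $\justQ1$ and each contains one quarter item, so each contributes at least $\smallUB=3-3\eps$ to $\timing$. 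Since $e_0\cdot(\smallUB)+\frac{e_0}{2}\cdot(\smallslot)=e_0\cdot(\smallslot)$, adding this term to your tally is exactly what closes the gap and produces the claimed numerator. (It is also this worst case that justifies the $q_2\approx e_0/2$ count you invoke: one needs $q_1\ge e_0$, which follows precisely when $\S_{-L}\subseteq\Q1$.) Without explicitly crediting the $\S_{-L}$ bins, the lemma does not follow.

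A smaller misconception: you present the case analysis on the category of the leftmost $\S_{-L}$ bin as producing, in each case, a different valid inequality to assemble. In the paper the hypothesis $e_0'<\frac27$ is used to \emph{rule out} the cases where that bin is in $\Dthreenice$ or in $\Q2$ (each leads to a contradiction on the amount of counted volume), so that inequality~\eqref{eq:beta} is derived only from the single surviving case $\S\cup\Q1$. Also note you omit $\Q1$ from the list of possible categories, yet $\S_{-L}\subseteq\Q1$ is the tight case.
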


\begin{proof}
    Consider the bin that determines the value of $\beta_0$. Suppose it is in $\Dthreenice$. Then the regular $e_0$ bins to its immediate right are counted for at least $(\ffsmall)e_0$, and the $\tilde m-2e_0$ bins to its left are counted for more than $\nicetime$. Given that $(\ENDBIG)m+O(1)$ is counted
    \footnote{Here we ignore $\bigLB$ that is in each bin in $\Dthreenice\cup\Qfive$ as the threshold $\Tfirst$ is exactly higher by $(\bigLB)(\cnice+q_5)$ and what we additionally count in the bins is higher than at least $(\bigLB)(\cnice+q_5)$.}, this implies $e_0'>\frac38(1-\eps)>\frac13$, a contradiction.

    Now suppose that bin is in $\Q2$. This means the number of bins in $\Q1$ is at most $e_0$, so the number of bins in $\Q2$ is at most $e_0/2$ and the number of bins in $\Dthreenice$ is at least $\tilde m-\frac52e_0$.  
    These bins contain and are counted for more than $\nicefill$ (three nice items). From each bin we can assign $\nicefill-(\bigLB)-(\ENDBIG)=3+3\eps>\ENDBIG$ to empty bins\footnote{This motivates the lower bound on the size of nice items.}. As long as $\tilde m-\frac52e_0\ge e_0$, this gives an average level over all the bins of more than $\ENDBIG$, a contradiction. Since $e_0<\frac27\tilde m$, this case cannot occur.
        
    If the bin is in $\S\cup\Q1$, then $\beta_0$ as a function of $e_0'$ is maximized if the number of bins in $\calQ$ is maximized 
    and $\justq_2>0$ (due to our ordering of the bins). In particular, if all the $e_0$ bins in $\S_{-L}$ contain only a quarter item (except for the one that defines $\beta_0$).
    In this case there are (at least) $e_0$ bins in $\S_{-L}$ that contain more than $\smallUB$ and $e_0/2$ bins in $\calD\cup\S_L$ that contain more than $\smallslot$, counting only the (initial) quarter items in both cases. The remaining regular bins contain at least $\beta_0$, and $\cnice$ of them contain at least $\nicefill$. Using Theorem \ref{thm:sjustq1g0} and the definition of $\Tfirst$,
    we get the equation 
\[
(\smallslot)\frac{e_0}{2}+(3-3\eps)e_0+ (\tilde m-\frac52e_0-\cnice)\beta_0 +(\nicefill) \cnice\leq (\ENDBIG)\tilde m+(\bigLB)\cnice
\]
    which implies (\ref{eq:beta}).
\end{proof}

Of course we also have $\beta_0\le\smallslot$. The bound (\ref{eq:beta}) is stronger than this for $e_0'\le\frac27$. In this region, the bound is weakest if $\cnice'=0$, since it is decreasing in $\cnice'$.

\begin{corollary}
If $e_0'<\frac27$, then 
\[
e_0'\ge
    \frac{2 (\beta_0-(\ENDBIG)+(\nicenobig-\beta_0)\cnice')}{5 \beta_0- 12 (1 - \eps)}.
\]
\end{corollary}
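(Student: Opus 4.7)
The corollary is an algebraic rearrangement of the bound in Lemma~\ref{lem:beta} to isolate $e_0'$, so the strategy is simply to cross-multiply, collect terms in $e_0'$, and divide. I would proceed as follows.

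First, I would observe that under the hypothesis $e_0'<\tfrac{2}{7}$ the denominator $2-5e_0'-2\cnice'$ on the right-hand side of (\ref{eq:beta}) is positive. Indeed, in the proof of Lemma~\ref{lem:beta} the bound arose by lower-bounding the total counted volume on $\tilde m-\tfrac{5}{2}e_0-\cnice$ bins, which are only present when $1-\tfrac{5}{2}e_0'-\cnice'\ge 0$; this is exactly equivalent to the positivity of the denominator. (If this quantity is non-positive the bound in (\ref{eq:beta}) is trivial, and we would separately note that in that regime $e_0'$ is already large enough for the claimed inequality.) Cross-multiplying (\ref{eq:beta}) by the positive denominator and expanding then yields
\[
2\beta_0 - 5\beta_0\, e_0' - 2\beta_0\,\cnice' \;\le\; 2(\ENDBIG) - 2e_0'(\smallslot) - 2\cnice'(\nicenobig).
\]

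Next I would move all terms involving $e_0'$ to one side and all other terms to the other:
\[
\bigl(2(\smallslot) - 5\beta_0\bigr)\, e_0' \;\le\; 2\bigl((\ENDBIG)-\beta_0\bigr) - 2\cnice'\bigl((\nicenobig)-\beta_0\bigr),
\]
which, after multiplying by $-1$, becomes
\[
\bigl(5\beta_0-12(1-\eps)\bigr)\, e_0' \;\ge\; 2\bigl(\beta_0 - (\ENDBIG) + ((\nicenobig)-\beta_0)\,\cnice'\bigr).
\]

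Finally, dividing by $5\beta_0-12(1-\eps)$ yields the stated bound. The only subtlety is the sign of this denominator: one must check that $5\beta_0 > 12(1-\eps)$ so that the inequality is not flipped. In the interesting regime, $\beta_0$ is at least the size of a quarter item, and in any case when $5\beta_0 \le 12(1-\eps)$ the right-hand side of the displayed inequality is non-positive (so there is nothing to prove, as $e_0'\ge 0$), hence the rearrangement is valid. The main ``obstacle'' here is purely bookkeeping: tracking the signs carefully so that division by $5\beta_0-12(1-\eps)$ produces a \emph{lower} bound on $e_0'$ rather than an upper bound, and noting that the bound on $\cnice'$ implicit in Lemma~\ref{lem:beta} guarantees the cross-multiplication step is sign-preserving.
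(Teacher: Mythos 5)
Your core algebra is correct, and since the paper supplies no proof for this corollary, a direct rearrangement of~(\ref{eq:beta}) is indeed the intended argument. However, both of your ``backup'' arguments for the sign edge-cases are wrong, and you should rely instead on the structural fact $\beta_0>\smallUB$ that you mention only in passing.

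Concretely, for the second sign issue: if $5\beta_0\le 12(1-\eps)$, then dividing the rearranged inequality $(5\beta_0-12(1-\eps))e_0'\ge 2(\beta_0-(\ENDBIG)+((\nicenobig)-\beta_0)\cnice')$ by the nonpositive left coefficient \emph{flips} the inequality and yields an \emph{upper} bound on $e_0'$, not the claimed lower bound --- so this regime cannot simply be dismissed. Moreover, the numerator $\beta_0-(\ENDBIG)+((\nicenobig)-\beta_0)\cnice'$ is negative when $\cnice'$ is small and $\beta_0<3+\eps$, so after dividing by a negative denominator the corollary's right-hand side is strictly positive; ``$e_0'\ge 0$'' does not close the gap. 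The correct observation is that $\beta_0$ is always strictly larger than $\smallUB=3-3\eps$: $\S_{-L}$ consists of bins that contain either a quarter item (size $>3-3\eps$) or, if in $\S$, are assumed in the fill-up phase to have level more than $\ffsmall$. Hence $5\beta_0-12(1-\eps)>5(3-3\eps)-12(1-\eps)=3-3\eps>0$, and no flipping occurs. For the first sign issue ($2-5e_0'-2\cnice'$), your claim that ``in that regime $e_0'$ is already large enough'' is unsubstantiated and, by a similar computation, not true for large $\cnice'$; rather, the positivity of $2-5e_0'-2\cnice'$ is a precondition for the derivation of~(\ref{eq:beta}) in the case analysis of Lemma~\ref{lem:beta} (the contradictory subcases there force the $\beta_0$-defining bin to lie in $\S\cup\Q1$, which makes $\tilde m-\tfrac52 e_0-\cnice\ge 0$), so it may simply be cited as part of applying the lemma.
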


\begin{corollary}
\label{cor:beta}
    If $\beta_0>\ffsmall$ and $e_0'<\frac27$, then $e_0'>(2-10\eps)/(8-8\eps)$.
    If $\beta_0>\heavyUBF$, then $e_0'>(6-6\eps)/(21+19\eps)=18/67=0.26866$ 
    (for $\eps=1/31)$.
\end{corollary}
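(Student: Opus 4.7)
The plan is to deduce both claims from the preceding corollary (the one derived from Lemma \ref{lem:beta}), which gives
\[
e_0' \ge \frac{2\bigl(\beta_0 - (\ENDBIG) + (\nicenobig - \beta_0)\cnice'\bigr)}{5\beta_0 - 12(1-\eps)}.
\]
Since the value of $\cnice'$ is not controlled, I would first take the infimum of the right-hand side over $\cnice' \ge 0$ to get a bound that is independent of $\cnice'$. Because $\beta_0 \le \smallslot < \nicenobig$ on the entire relevant range (recall $\beta_0$ is either the level of a bin in $\S_{-L}$, which is at most $\smallslot$, or else is explicitly set to $\smallslot$), the coefficient $\nicenobig - \beta_0$ is strictly positive, so the numerator is minimized at $\cnice' = 0$. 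This yields
\[
e_0' \ge g(\beta_0) \;:=\; \frac{2(\beta_0 - (\ENDBIG))}{5\beta_0 - 12(1-\eps)}.
\]

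Next I would verify that $g$ is strictly increasing on the relevant range of $\beta_0$: a short quotient-rule calculation shows that the sign of $g'$ equals the sign of $-24(1-\eps) + 10(\ENDBIG) = 6 + 34\eps$, which is positive. Hence any lower bound on $\beta_0$ transfers cleanly to a lower bound on $e_0'$. For the first claim, the hypothesis $e_0' < 2/7$ puts us in the setting of Lemma \ref{lem:beta}, and plugging $\beta_0 > \ffsmall$ into $g$ produces $e_0' > (2-10\eps)/(8-8\eps)$ after simplifying $\ffsmall - (\ENDBIG) = 1-5\eps$ and $5(\ffsmall) - 12(1-\eps) = 8-8\eps$.

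For the second claim, there is no hypothesis on $e_0'$, so I would first handle the case $e_0' \ge 2/7$ separately: a direct cross-multiplication shows $2 \cdot 67 = 134 > 126 = 18 \cdot 7$, so $2/7 > 18/67$ and the conclusion holds trivially. Otherwise $e_0' < 2/7$ and Lemma \ref{lem:beta} is again applicable; substituting $\beta_0 > \heavyUBF = (9-\eps)/2$ into $g$ gives $e_0' > (6-6\eps)/(21+19\eps)$, which for $\eps = 1/31$ simplifies to $180/670 = 18/67$.

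The only step requiring genuine care is the monotonicity argument: one must notice both that the preceding corollary's lower bound is minimized at $\cnice' = 0$ (a consequence of $\beta_0 \le \smallslot < \nicenobig$, which is crucially built into the set-up) and that $g$ itself is monotone in $\beta_0$. Once these are in place the remaining work is essentially plugging values of $\beta_0$ into $g$, together with the trivial $e_0' \ge 2/7$ case in the second claim, so I do not expect any real obstacle.
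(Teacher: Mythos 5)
Your proof is correct and reconstructs exactly the argument the paper intends (the paper states the corollary without an explicit proof, relying on the preceding corollary and the remark after Lemma~\ref{lem:beta} that the bound is weakest at $\cnice'=0$). The three computations all check out: the coefficient of $\cnice'$ is nonnegative because $\beta_0\le\smallslot=6-6\eps<6+4\eps=\nicenobig$; the quotient-rule sign computation $-24(1-\eps)+10(\ENDBIG)=6+34\eps>0$ is right; and the two evaluations $g(\ffsmall)=\frac{2-10\eps}{8-8\eps}$ and $g(\heavyUBF)=\frac{6-6\eps}{21+19\eps}=\frac{18}{67}$ (for $\eps=1/31$) are correct. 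Your case split for the second claim (handling $e_0'\ge 2/7$ via $2/7>18/67$ before invoking Lemma~\ref{lem:beta} when $e_0'<2/7$) is a genuinely needed detail that the paper leaves to the reader, and you supply it cleanly. One point worth stating explicitly is that for $\beta_0>\ffsmall>\tfrac{12(1-\eps)}{5}$ the denominator $5\beta_0-12(1-\eps)$ is strictly positive, which is what makes both the reduction to $\cnice'=0$ and the monotonicity argument legitimate; you use this implicitly but it is the hypothesis under which the unlabeled corollary gives a nonvacuous bound.
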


\begin{lemma}
\label{lem:13}
    We do not run out of empty bins in Stages 1--3.
\end{lemma}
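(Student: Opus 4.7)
My plan is to argue stage by stage that the only hard items that can consume empty bins are large$^+$ items, and then to bound their number against the reservoir $e_0\ge\Ezero=\Omega(\tilde m)$ supplied by Theorem~\ref{thm:emptywhensq>0}. Stage~1 is immediate: by construction its mechanism only reuses bins in $\Qonefive$ (bookkeeping their so-far uncounted contents) together with fresh bins drawn from $\calD$, never touching $\calE$. For Stages~2 and~3, the key observation is that once $\beta\le(9-\eps)/2$ the quarter$^+$ threshold $(\onlbin-\beta)/3$ is at least $\heavyUBF$, so the quarter$^{++}$ class is empty, while quarter$^+$ and large$^-$ items are by rule placed only into bins with level at most $\beta$ (i.e.\ $\justQ1\cup\S_L\cup\S_{-L}$), never into $\calE$. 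Hence large$^+$ items are the sole consumer of empty bins in these stages. From the table, one empty bin together with $\onlbin/\beta-2$ bins in $\calD$ absorbs $\onlbin/\beta$ large$^+$ items, so $L^+$ large$^+$ items consume at most $L^+\beta/\onlbin+O(1)$ empty bins; and the start-of-phase bound $\largethreat\le u-q_5\le\tilde m$ (from Invariant~\ref{inv:large} after removing $\Dthreenice$ and $\Qfive$ from $\mathcal U$) gives $L^+\le\tilde m$.

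I would then verify $e_0/\tilde m\ge\beta/\onlbin$ by case-splitting on $\beta_0$, absorbing the $O(1)$ error into the assumption $m\ge\mmintext$. If $\beta_0\le\ffsmall$, then $\beta\le\ffsmall$ throughout; the bound $e_0/\tilde m\ge(1-5\eps)/\ffsmall$ from Theorem~\ref{thm:emptywhensq>0} reduces the goal to $(1-5\eps)\onlbin>(\ffsmall)^2$, i.e.\ $2-60\eps-6\eps^2>0$, which holds at $\eps=1/31$ with margin $\approx 0.06$. If $\ffsmall<\beta_0\le(9-\eps)/2$, I would use Lemma~\ref{lem:beta} directly, since Corollary~\ref{cor:beta}'s numerical form $e_0'>(2-10\eps)/(8-8\eps)\approx0.217$ is too loose (at $\beta_0=(9-\eps)/2$ one needs $e_0'\ge(9-\eps)/(2\onlbin)\approx0.25$). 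Cross-multiplying the $\beta_0$-dependent bound from Lemma~\ref{lem:beta} against $\beta_0/\onlbin$ yields the quadratic $-5\beta_0^2+(48-16\eps)\beta_0-108-24\eps+4\eps^2\ge0$, a downward parabola with vertex near $\beta_0\approx4.75$; since $[\ffsmall,(9-\eps)/2]\approx[3.87,4.49]$ lies to the left of the vertex, the parabola is strictly increasing there, and positivity at $\beta_0=\ffsmall$ (value $\approx0.12$ for $\eps=1/31$) propagates to the whole interval. Finally, if $\beta_0>(9-\eps)/2$, the maximum $\beta$ used in Stages~2--3 is $(9-\eps)/2$, and Corollary~\ref{cor:beta} gives $e_0/\tilde m>18/67\approx0.269$, which exceeds $(9-\eps)/(2\onlbin)=139/556\approx0.25$. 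A positive $\cnice'$ only strengthens Lemma~\ref{lem:beta} for $\beta_0<\nicenobig$, so the worst case in each branch is $\cnice'=0$.

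The hard part will be the arithmetic in the first two branches, which is tight for $\eps=1/31$ with margins of only a few percent; in particular the middle branch requires invoking Lemma~\ref{lem:beta} in its full $\beta_0$-dependent form rather than Corollary~\ref{cor:beta}, and exploiting monotonicity of the quadratic on $[\ffsmall,(9-\eps)/2]$, since the interval does not contain the parabola's vertex and endpoint checks alone would not suffice to conclude positivity throughout.
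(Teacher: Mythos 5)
Your proof is correct and follows essentially the same route as the paper: bound the empty-bin consumption rate of large$^+$ items by $\beta/\onlbin$, bound the number of large items that can arrive by $\tilde m$, and verify $e_0'\ge\beta/\onlbin$ via Lemma~\ref{lem:beta}. The only organizational difference is that you case-split on $\beta_0$ (three ranges) while the paper handles Stage~2 by a single uniform bound and then splits Stage~3 on $e_0'$ (above/below $1/4$), plugging Lemma~\ref{lem:beta} into the quantity $\ell^*=\frac{(\onlbin)(2-5e_0')}{\largeLB-2e_0'(\smallslot)}e_0'$ rather than into a quadratic in $\beta_0$ — but these are dual formulations of the same inequality. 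One small slip: for your downward (concave) parabola, positivity at \emph{both} endpoints of the interval would in fact already imply positivity on the whole interval, so the remark that "endpoint checks alone would not suffice" is not accurate; your monotonicity argument is nonetheless valid and reaches the same conclusion.
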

\begin{proof}
    In Stage 1, nothing is packed into empty bins. In Stage 2, when packing large$^+$ items into $\calE$, we pack $\frac{18-2\eps}{\beta}$ large items for every empty bin that we use, where $\beta\le\ffsmall$. Thus we run out of large items after packing $\frac{\beta}{18-2\eps}\tilde m\le\frac{30}{139}\tilde m<\frac{13}{60}\tilde m=\Ezero'\tilde m+O(1)$ bins.\footnote{This inequality is the one that motivates setting $\eps=1/31$. We have equality for $\eps=2\sqrt{\frac{19}3}-5=0.03322$. (But if we used that value of $\eps$, the algorithm would only work if we had an unbounded number of machines.)} Hence some empty bins are still empty at this point. No other items are packed into empty bins in Stage 2.

    In Stage 3 we have $\beta\le\heavyUBF$. The number of large items packed when all empty bins are full is at least $\frac{\onlbin}{\beta}e_0'$ since no small items are packed into empty bins until after Stage 3. Thus if $e_0'>1/4$, all large items are packed before running out of empty bins, meaning that we do not run out of empty bins since no other items are packed there.

    Suppose $e_0'\le1/4$.
Using Lemma \ref{lem:beta} and the fact that $\beta_0\le\smallslot$, we have
$\beta_0\le (\largeLB-2e_0'(\smallslot))/(2-5e_0')$. This number is minimized if $\beta_0$ is maximized (as a function of $e_0'$). We conclude that we pack at least
\[
    \ell^*=\frac{(18-2\eps)(2-5e_0')}
    {\largeLB-2e_0'(\smallslot)} e_0'
\]

large items if we use all empty bins for them. This is at least $1$ for $e_0'\in[\Ezero',1/4]$, and $e_0'$ cannot be smaller. This means that we do not run out of empty bins for $e_0'\le1/4$ either.
\end{proof}

\subsection{All items are packed}
\label{sec:allitemspacked}
In the next part, we will be formulating several linear programs. We introduce a common notation. It will be convenient to scale all the numbers so that $\tilde m=1$. Whenever we speak of a scaled number, we mean the corresponding number of bins divided by $\tilde m$.

\noindent 
\begin{tabularx}{\textwidth}{cX}
$x_1$ (resp.,~$x_2$) & the scaled number of bins in $\Qonefive$ that are removed from $\cal U$ in Stage 1 while packing large (resp., small) items.\\
$y_1$ & the scaled number of empty bins that receive two large$^+$ items before Stage 5.\\
$y_5$ & the scaled number of empty bins that receive two large items in Stage 5.\\
$x_3$  (resp.,~$x_4$) & the scaled number of bins in $\justQ1\cup\S_{-L}$ that receive two large$^-$ items (resp., small items).\\
$x$ & the total scaled number of nonempty bins with level at most $\heavyUBF$.
\end{tabularx}

\begin{lemma}
\label{lem:lowbins}
    The scaled number of nonempty bins with level at most $\heavyUBF$ is at least $e_0'$ if $\beta_0\le\heavyUBF$,
    and it is at least $\frac{90-278e_0'}{57}$. For $e_0'>18/67$, the first bound is the strongest (but it only applies if $\beta_0\le\heavyUBF$) and for $e_0'\le18/67$, the second bound is strongest.
\end{lemma}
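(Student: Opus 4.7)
The first bound is immediate from the definition of $\S_{-L}$: since $\S_{-L}$ consists of the $e_0$ rightmost nonempty bins, each of them has level at most $\beta_0$, so if $\beta_0 \le \heavyUBF$ all of them contribute to $x$, giving $x \ge e_0'$ after scaling.

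For the second bound the plan is a volume argument in the style of Lemma~\ref{lem:beta}, closed using the near-fixed ratio from Theorem~\ref{thm:q1q2}. I would start from $\timing < \Tfirst(\countconstant)$ and subtract the minimum counted contributions from the regular bins outside $\mathcal{U}$, namely $\Dthreenice$ (at least $\nicefill$ per bin) and $\Qfive$ (at least $12-4\eps$ per $\Qonefive$ bin, and at least $15 - 7\eps$ per $\Qtwofive$ bin since the ignoring rule for $\justq_2$ does not apply there), to obtain a scaled upper bound on the counted volume in $\mathcal{U}$. I then lower-bound this same counted volume bin-by-bin: a $\justq_1$ bin contributes at least $\smallUB$, a $\justq_2$ bin contributes exactly $\smallslot$ (the content above is ignored in $\timing$), a low $\S$ bin at least $\ffsmall$, and a high $\S$ bin at least $\heavyUBF$. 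Using the identities $(\mathrm{low}\,\S)' = x - \justq_1' + O(1/\tilde m)$ and $s' = 1 - e_0' - \justq_1' - \justq_2' - \cnice' - q_5' + O(1/\tilde m)$ and collecting terms yields the central inequality
\[
(1+7\eps)(x + \justq_2') \ge (3-3\eps) - (9-\eps)\,e_0' + O(\cnice' + q_5' + 1/\tilde m).
\]

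To close, every $\justq_1$ bin has level at most $\quarter = 4-4\eps < \heavyUBF$, so $\justq_1' \le x + O(1/\tilde m)$; combined with $\justq_1 + \qonefive = 2\justq_2 + 2\qtwofive + O(1)$ from Theorem~\ref{thm:q1q2} this gives $\justq_2' \le x/2 + O(\qonefive' + 1/\tilde m)$. Substituting produces $\tfrac{3}{2}(1+7\eps)\,x \ge (3-3\eps) - (9-\eps)\,e_0' + O(\cnice' + q_5' + 1/\tilde m)$; with $\eps = 1/31$ one checks $\tfrac{3}{2}(1+7\eps) = 57/31$, $3-3\eps = 90/31$ and $9-\eps = 278/31$, so clearing denominators yields the stated bound $x \ge (90-278\,e_0')/57$. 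The comparison of the two bounds is the algebraic identity $\tfrac{90 - 278 e_0'}{57} = e_0' \iff e_0' = 18/67$, and the crossover at $18/67$ is consistent with Corollary~\ref{cor:beta}, which is precisely what forces $\beta_0 \le \heavyUBF$ whenever $e_0' \le 18/67$ (so the first bound is in fact available on the regime where the second one is needed).

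The main obstacle will be the careful bookkeeping of constants in the volume computation and, more subtly, the verification that the correction terms from $\cnice'$ and $q_5'$ enter with the right sign so as not to weaken the clean bound. The $\qtwofive'$ correction in the Theorem~\ref{thm:q1q2} ratio goes the favourable direction for $\justq_2' \le x/2 + \cdots$, and the extra counted contributed by $\Qtwofive$ bins (counted for more than $15 - 7\eps$ rather than $12 - 4\eps$) needs to be combined with this to absorb the unfavourable $-(3+5\eps)q_5'$ term coming from the upper bound on counted in $\mathcal{U}$; this delicate cancellation is the place where every single $\eps$-contribution matters.
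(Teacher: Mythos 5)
Your first bound is exactly the paper's: both appeal directly to the definition of $\S_{-L}$ and to $\beta_0\le\heavyUBF$. For the second bound, however, your plan diverges from the paper's proof, and the intermediate step you state is not quite right.

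The paper's key device is a \emph{reassignment} of quarter items from $\Q2$ to $\Q1$ so that every $\calQ$ bin ends up at the same level $\mu>\ffsmall$. This collapses all of $\calQ$ into one block contributing $\mu\cdot(q_1'+q_2')$ and makes the volume inequality a one-line affair, with a single unknown $a$ (the reassigned count of bins above $\heavyUBF$). Un-reassigning afterwards only costs $q_2'$ extra high bins, bounded by $q_2'\le(1-e_0'-a)/3$ via (\ref{eq:q1inv}). This avoids ever having to track $\justq_1$, $\justq_2$, $\qonefive$, $\qtwofive$ with their different per-bin contributions.

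Your proposal keeps $\Q1$ bins at $\smallUB$ and $\Q2$ bins at $\smallslot$ and feeds the ratio $q_1\approx 2q_2$ into the volume inequality directly. This \emph{can} be made to work (the total quarter-item mass is the same either way), but your stated central inequality
\[
(1+7\eps)(x + \justq_2') \ge (3-3\eps) - (9-\eps)e_0' + O(\cnice' + q_5' + 1/\tilde m)
\]
does not follow from the bin-by-bin accounting you describe. Two concrete issues. First, your identity $(\mathrm{low}\,\S)' = x - \justq_1'$ forgets the $\Qonefive$ bins: they have quarter-item level at most $\quarter<\heavyUBF$ and are counted in $x$ (the paper even notes this explicitly in its first sentence), so really $x = \justq_1' + \qonefive' + (\mathrm{low}\,\S)'$. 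Second, carrying the accounting through honestly yields $\frac{1+7\eps}{2}(\qonefive'+\ell_s) + \frac{3+5\eps}{2}\justq_1' \ge \frac{3-3\eps}{2} - \frac{9-\eps}{2}e_0' + \frac{3-11\eps}{2}(\justq_2'+\qtwofive') + \frac{3+9\eps}{2}\cnice'$, in which $\justq_1'$ carries coefficient $\frac{3+5\eps}{2}\ne\frac{1+7\eps}{2}$ and the $\justq_2'$, $\qtwofive'$ terms sit on the opposite side with coefficient $\frac{3-11\eps}{2}$, not $1+7\eps$. Your ``$O(\cnice'+q_5')$'' is hiding $\Theta(1)$-size corrections whose sign matters, and your own discussion acknowledges this. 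Substituting $\justq_2'+\qtwofive' = (\justq_1'+\qonefive')/2 + O(1/\tilde m)$ into the correct inequality and then comparing to $\frac{3+21\eps}{2}(\justq_1'+\qonefive'+\ell_s)$ does recover $(90-278e_0')/57$ for $\eps=1/31$, so the route is salvageable---but as written the central inequality is wrong, and the closing step $\justq_2'\le x/2$ only makes sense once you include $\qonefive'$ in $x$. Adopting the paper's reassignment would have removed all of this bookkeeping.
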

\begin{proof}
    If $\beta_0\le \heavyUBF$ all $e_0'$ bins in the set $\S_{-L}$ have level at most $\heavyUBF$ (some of them can be in $\Qonefive$ and contain additional items).

    We reassign parts of quarter items from $\Q2$ to $\Q1$ so that all bins in $\calQ$ have the same level, which is more than $\ffsmall$.
    Let the scaled number of bins with level more than $\heavyUBF$ after this reassignment be $a$. This includes all bins in $\Dthreenice$. 
    Suppose it includes no bins in $\calQ$.
    There are $1-e_0'-a$ bins with level more than $\ffsmall$ but at most $\heavyUBF$. 
    We have \[
    (\nicefill)\cnice'+(1-e_0'-a)(4-4\eps)+(a-\cnice')\heavyUBF\le\ENDBIG+(\bigLB)\cnice',\] so
\[
    a\le \frac{120e_0'-26}{19}.
\]
We see that at most $q_2'+\frac{120e_0'-26}{19}$ bins have actual level more than $\heavyUBF$ in this case. We have $q_2'+q_1'\le 1-e_0'-a$, so $q_2'\le(1-e_0'-a)/3$ (by (\ref{eq:q1inv})). The scaled number of bins with actual level more than $\heavyUBF$ is
\[
q_2'+a\le\frac{1-e_0'}3+\frac23a\le \frac{1-e_0'}3+\frac{240e_0'-52}{57} = \frac{221e_0'-33}{57}.
\]
Therefore the scaled number of nonempty bins with level at most $\heavyUBF$ is at least $1-e_0'-\frac{221e_0'-33}{57} = \frac{90-278e_0'}{57}$.

Finally, we consider the case where all bins in $\calQ$ end up with level more than $\heavyUBF$ after the reassignment. We get the same bound on $a$ as above but this now includes the set $\Q2$. So the upper bound on $a$ immediately is an upper bound for the number of bins with actual level more than $\heavyUBF$. This is a weaker bound than in the first case, implying a stronger bound on the scaled number of nonempty bins with actual level at most $\heavyUBF.$
\end{proof}

\begin{lemma}
\label{lem:betasmall}
    If $\beta_0\le\heavyUBF$, we do not run out of empty bins in Stage 5.
\end{lemma}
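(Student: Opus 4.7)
The plan is to combine a volume accounting argument with the structural bound from \Cref{lem:lowbins} and the proof technique of \Cref{lem:13}. Crucially, under the hypothesis $\beta_0 \le \heavyUBF$, Stage 4 does not exist (it requires $\beta_0 > \heavyUBF$), so the algorithm transitions directly from Stage 3 to Stage 5 with $\beta = \beta_0$. With this value of $\beta$, the threshold $\frac{\onlbin-\beta}{3}$ for quarter$^+$ items equals $\heavyUBF$, so quarter$^{++}$ items do not exist, and the only item types that can consume empty bins in Stage 5 are quarter$^+$, large$^-$, and large$^+$.

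First I would extract the structural information available at the start of Stage 5. Since Stage 3 only terminates once all bins of level at most $\beta_0$ are used, every bin in $\S_{-L}$ has been consumed in Stages 2--3, and each such bin received either two large$^-$ items (total volume more than $\domslot$) or three quarter$^+$ items (total volume more than $12+3\eps$). This accounts for at least $(12+3\eps)e_0$ of fill-up-phase volume already committed to nonempty bins, and, crucially, this volume did \emph{not} require any empty bin. By \Cref{lem:lowbins} together with $\beta_0\le\heavyUBF$, the bound $x\ge e_0'$ holds, confirming that $\S_{-L}$ contains enough room for this absorption.

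Next, I would suppose for contradiction that all $e_0$ empty bins are exhausted by the end of Stage 5. Splitting these as $e_0 = y_1 + y_5$, where $y_1$ counts empties used in Stages 2--3 (only for large$^+$) and $y_5$ counts empties used in Stage 5, I would apply the ratio argument from \Cref{lem:13}: each Stage 2--3 empty bin used for large$^+$ is accompanied by $\tfrac{\onlbin}{\beta_0}-2$ bins in $\calD$ and contributes at least $\onlbin-\beta_0$ in volume. Each Stage 5 empty bin contributes at least $\min(12+3\eps,\ 12+4\eps,\ \onlbin-\beta_0) = 12+3\eps$ in volume. Combining these with the $\calD$-volume committed by Rule 1 (at least $\bigLB$ per used $\calD$ bin) and the $(12+3\eps)e_0$ volume in $\S_{-L}$, I would obtain a lower bound on the total volume added in the fill-up phase which is strictly greater than the upper bound $(\bigLB)\tilde m + O(1)$ from \Cref{thm:finalphaseworks}, delivering the contradiction. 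Here the inequality $e_0'\ge\frac{1-5\eps}{\ffsmall} - o(1) = \frac{13}{60}-o(1)$ from \Cref{thm:emptywhensq>0} plays the decisive numerical role, exactly as it did in \Cref{lem:13}.

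The main obstacle is the bookkeeping across the four types of bins ($\calE$, $\calD$, $\S_{-L}$, $\S_L$) and across the different stages, so as not to double-count $\calD$-volume against $\calE$-volume: each used $\calE$ bin of a given item type is paired with a fixed number of $\calD$ bins, and these latter bins count separately toward the $\calD$ total. A secondary obstacle is that the ratio $\beta_0/\onlbin$ of large$^+$-driven empty usage to $\calD$ usage varies with $\beta_0$, so the estimate must be tight in the worst case $\beta_0 = \heavyUBF$, where $\onlbin/\beta_0 = 4$; precisely this worst case is what forces the choice $\eps = 1/31$ already highlighted in \Cref{lem:13}.
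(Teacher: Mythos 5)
Your structural setup is correct: under $\beta_0 \le \heavyUBF$ Stage 4 is skipped, quarter$^{++}$ items collapse into quarter$^+$ items, all of $\S_{-L}$ is consumed in Stages 2--3, and (by Lemma~\ref{lem:13}) some bins are still empty when Stage 5 begins. Those observations match the opening of the paper's proof. However, the rest diverges in a way that leaves real gaps.

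The paper does not argue by volume; it sets up a linear program in the scaled bin-count variables $x_1,\dots,x_4,y_1,y_5,e_0'$ with six constraints and shows infeasibility. Two of those constraints have no counterpart in your sketch and cannot be extracted from a volume bound alone. First, the constraint ``$1.4222 x_1 + x_2 + 3x_3 + 4y_1 + 3y_5\le1$'' limits the \emph{number} of large items (and pairs of half items); since each group of bins used for large$^-$, large$^+$, or Stage-1 large packing burns through several such items, this cardinality constraint is what forces $y_1+y_5$ to be small, and a volume argument does not capture it (a bin filled with a single large item and a bin filled with two small quarter$^+$ items can carry comparable volume, but very different large-item counts). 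Second, the paper uses \emph{both} bounds from Lemma~\ref{lem:lowbins}: for $e_0'\le 18/67$ it uses $x_1+\dots+x_4\ge(90-278e_0')/57$, which is strictly stronger than $x\ge e_0'$ in that range. You only invoke $x\ge e_0'$. Since $\Ezero'\approx 13/60\approx 0.2167 < 18/67\approx 0.2687$, there is a nonempty range $e_0'\in[13/60,18/67]$ where your lower bound on the number of low-level bins is too weak, and the paper's LP needs the tighter bound there to reach infeasibility.

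Beyond the missing constraints, your claim that ``each Stage 5 empty bin contributes at least $\min(12+3\eps,\,12+4\eps,\,\onlbin-\beta_0)$'' to volume, combined with at least $\bigLB$ per $\calD$ bin, exceeding $(\bigLB)\tilde m+O(1)$, is not obviously a contradiction: the number of $\calD$ bins and $\S_L$ bins that have actually been used when you ``run out'' of empty bins is exactly what the paper's variables $x_3,x_4$ (together with c1, c3) are tracking, and without a lower bound on how many nonempty bins must have been consumed along the way (the role of constraint c5), the claimed deficit does not follow. In short, a correct version of your argument would essentially have to re-derive constraints c1, c2, c3, c5 anyway; as written it is an outline of the right ingredients rather than a proof, and it is missing the two constraints that make the paper's LP infeasible.
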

\begin{proof}
    If $\beta_0\le\frac{9-\eps}2$, the algorithm first does Stage 3, after which all bins in $\S_{-L}$ are filled up by definition of $\beta_0$. Furthermore, all quarter$^{++}$ items are packed as quarter$^+$ items throughout the complete execution. By Lemma \ref{lem:13}, some bins are still empty when $\S_{-L}$ is filled. In fact, some bins in $\S_L$ are still empty at this point since otherwise we would be done by Good Situation \ref{gs:nonempty} or \ref{gs:q2}. This means that only quarter$^{+}$ and large items get packed on the $x$ bins with level at most $\heavyUBF$.
    We then go to Stage 5.

We first consider the case $e_0'\le18/67$ (constraint $c4$).
We have $x_1+x_2+x_3+x_4\ge \frac{90-278e_0'}{57}$ by Lemma \ref{lem:lowbins} (constraint $c5$). 
The scaled number of \emph{nonempty} bins packed before Stage 5 is at least 
$2x_1+4x_2+2y_1$, and this is at most $1-e_0'$ (constraint $c1$). Here we have that the coefficient of $x_1$ is $1+\frac{\topmbig}{\smallUB}$ as we use $\frac{\topmbig}{\smallUB}$ bins in $\calD$ for every bin in $\Qonefive$ in Stage 1 when packing large items. With $\eps=1/31$ we have $1+\frac{\topmbig}{\smallUB}>1.4222$ and we use a weaker constraint in our presented linear program. Analogously, the coefficient of $x_2$ is $1+\frac{\topmbig}{\smallUB}$ as we use $\frac{\topmbig}{1-3\eps}>1.357$ bins in $\calD$ for every bin in $\Qonefive$ in Stage 1 when packing small items.
Before Stage 5 we use $1$ (resp. $3$) bins in $\calD$ for every nonempty bin outside $\calD$ when packing large$^-$ (resp. small) items, explaining the coefficient $2$ (resp. $4$) of $x_3$ (resp. $x_4$) in constraint $c1$. For every empty bin used to pack large$^+$ items we use $\frac{\onlbin}{\beta}-2$ bins in $\calD$. Hence, the coefficient of $y_1$ is $\frac{\onlbin}{\beta}-2\geq 2$ (for $\beta\leq\beta_0\leq\heavyUBF)$.

We next count items that are at least large and bins containing pairs of half items together. Their scaled sum is at least $3x_1+4y_1+3y_5$. This value is at most 1 (constraint $c2$). Here the coefficient of $x_4$ is 1 because only the bins in $\Qonefive$ when packing small items contain two half items or an item that is at least large.
The coefficient of $x_3$ is 3 as we pack three large items per group of bins used for large$^-$ items. For the large$^+$ items we use $\frac{\onlbin}{\beta}-2\geq 2$ bin in $\calD$ which receive one large item and 1 empty bin which receives two large items, so in total 4 large items get packed per group used for large$^+$ items (explaining the coefficient of $y_1$). The coefficient of $y_5$ is the minimum of the coefficients of $x_3$ and $y_1.$
    
The total scaled number of bins packed when running out of empty bins is at least $1.4222x1 + 2.357x2 + 2x3 + 4x4 + 3y1 + 2y5 + 4(e0-y1-y5)$. The value is at most 1 (constraint $c3$). Here the coefficients are as in constraint $c1$, except that for $y_1$ it is 3 as we now also count the empty bin used per group when packing large$^+$ items. As in constraint $c2$ the coefficient of $y_5$ is the minimum of the coefficients of $x_3$ and $y_1$ in constraint $c3.$ Finally, the remaining empty bins are filled by quarter$^+$ items and such group consists of three $\calD$ bins and one empty bin, so 4 bins in total (explaining the coefficient of $e_0-y_1-y_5$).

The total scaled number of empty bins that received two large items can be at most $e_0'$ (constraint $c6$).

We summarize our above constraints in a form of a linear program, which can be solved  using the GLPK solver\footnote{\url{https://cocoto.github.io/glpk-online/}}, for example. This linear program has no feasible solution.

\begin{verbatim}
var e0 >= 0;
var x1 >= 0;
var x2 >= 0;
var x3 >= 0;
var x4 >= 0;
var y1 >= 0;
var y5 >= 0;
maximize
obj: e0;
subject to
/* nonempty bins packed early */
c1: 1.4222*x1 + 2.357*x2 + 2*x3 + 4*x4 +2*y1 <= 1-e0;
/* large items packed */
c2: 1.4222*x1 + x2 + 3*x3 + 4*y1 + 3*y5 <= 1;
/* number of bins packed total */
c3: 1.4222*x1 + 2.357*x2 + 2*x3 + 4*x4 + 3*y1 + 2*y5 + 4*(e0-y1-y5) <= 1; 
/* to ensure that (90-278*e0)/57 > e0 */
c4: e0 <= 18/67; 
/* number of used bins with low level */
c5: x1 + x2 + x3 + x4 >= (90-278*e0)/57;
/* pairs of large items in empty bins */
c6: y1 + y5 <= e0;
\end{verbatim}

We next consider the case $e_0'>18/67$ and adapt condition c4 accordingly. We replace condition c5 by $x_1+x_2+x_3+x_4\ge e_0'$. Again there is no feasible solution.
\end{proof}

Using Corollary \ref{cor:beta} together with Lemma \ref{lem:13} and Lemma \ref{lem:betasmall}, we are left with the case $\beta_0>\frac{\bigLB}{2}$ and $e_0'>18/67=0.2686$. 

\begin{lemma}
    \label{lem:nofailstage5}
    If $\beta_0>\heavyUBF$ and we do not run out of empty bins during Stage 4, all items get packed in Stage 5.
\end{lemma}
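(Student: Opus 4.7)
The plan is to proceed by contradiction via a linear program, following the pattern of Lemma~\ref{lem:betasmall} but now with quarter$^{++}$ and large$^+$ items in play. Assume that Stage~5 fails, so all empty bins get exhausted while some hard item cannot be placed. Since $\beta_0 > \heavyUBF$, Corollary~\ref{cor:beta} gives $e_0' > 18/67$; Lemma~\ref{lem:beta} bounds $\beta_0$ from above for $e_0' < 2/7$, and the trivial bound $\beta_0 \le \smallslot$ handles $e_0' \ge 2/7$.

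I would introduce scaled variables $x_1, x_2$ for the $\Qonefive$ bins exploited in Stage~1 (for large and small items), $x_3, x_4$ for the bins in $\S_{-L}$ (and possibly $\S_L$) that receive large$^-$ pairs and quarter$^+$ triples during Stages~2--4, $y_1, y_2$ for empty bins receiving large$^+$ pairs and quarter$^{++}$ quadruples in Stages~2--4, and $z_1, z_2, z_3, z_4$ for the Stage~5 use of empty bins for large$^-$ pairs, large$^+$ pairs, quarter$^+$ triples, and quarter$^{++}$ quadruples respectively. Each empty-bin usage also consumes a number of $\calD$ bins according to the table in Section~\ref{sec:hard}; crucially, the coefficients $\tfrac{\onlbin}{\beta_0} - 2$ and $\tfrac{45-4\beta_0-5\eps}{2\beta_0-(9-\eps)}$ governing large$^+$ and quarter$^{++}$ items depend on $\beta_0$.

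The constraints I would impose are exactly analogous to (c1)--(c6) in Lemma~\ref{lem:betasmall}: a bound of $1 - e_0'$ on scaled nonempty bins used outside $\calD \cup \calE$; a bound of $1$ on the scaled count of items of type at least large, from $\wn(I) \le 4m$; a bound of $1$ on the total scaled bins consumed; a lower bound of $e_0'$ on the scaled number of used bins of level at most $\beta_0$, namely the bins of $\S_{-L}$ (the first branch of Lemma~\ref{lem:lowbins} still applies in this regime since $\beta_0 \le \smallslot$); the empty-bin cap $y_1 + y_2 + z_1 + z_2 + z_3 + z_4 \le e_0'$; and the range $e_0' \in (18/67, 1]$ paired with $\beta_0$ at its Lemma~\ref{lem:beta} boundary for $e_0' < 2/7$ and at $\smallslot$ otherwise.

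The main obstacle is the $\beta_0$-dependence of the coefficients. I would handle this by observing that both $\tfrac{\onlbin}{\beta_0}-2$ and $\tfrac{45-4\beta_0-5\eps}{2\beta_0-(9-\eps)}$ are strictly decreasing in $\beta_0$, so the adversary is best served by $\beta_0$ as large as possible; Lemma~\ref{lem:beta} with $\cnice' = 0$ (which maximises the right-hand side of~\eqref{eq:beta}) then pinpoints that worst-case $\beta_0$. Substituting this into the LP reduces it to a one-parameter family in $e_0'$, whose infeasibility on $(18/67, 2/7]$ and on $[2/7, 1]$ can be verified with the GLPK solver exactly as in Lemma~\ref{lem:betasmall}. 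With the contradiction in hand, either the input terminates or the remaining configuration satisfies Good Situation~\ref{gs:nonempty} or~\ref{gs:q2}, both of which then pack everything that is left.
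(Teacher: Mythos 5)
The paper's own proof of this lemma is considerably shorter than yours: it simply observes that the second LP from Lemma~\ref{lem:betasmall} (the $e_0'>18/67$ case) remains valid verbatim, except that the coefficients on $y_1$ drop because the lower bound on $\tfrac{\onlbin}{\beta}-2$ falls from $\geq 2$ (for $\beta\le\heavyUBF$) to $\geq 1$ (for $\beta\le\smallslot$), and the coefficient $4$ on $e_0-y_1-y_5$ survives because the quarter$^{++}$ ratio $\tfrac{45-4\beta-5\eps}{2\beta-(9-\eps)}$ is still $>3$. No new variables, no $\beta_0$-dependent coefficients, no interval splitting — a single LP with constant conservative coefficients certified infeasible once. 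Your plan instead introduces separate Stage-$4$/Stage-$5$ variables $y_2,z_1,\ldots,z_4$, carries exact $\beta_0$-dependent ratios, invokes Lemma~\ref{lem:beta} with $\cnice'=0$ to pin the worst-case $\beta_0$, and splits into $e_0'\in(18/67,2/7]$ versus $[2/7,1]$. That machinery is essentially what the paper deploys for Lemma~\ref{lem:nofailstage4}, where the sharper $\beta_0$-dependence really is needed; for Lemma~\ref{lem:nofailstage5} the crude uniform bounds already suffice, so your route is doing unnecessary work. It could still work in principle, but you have not actually written out the resulting LP family nor verified its infeasibility, so the claim at the end is only a promise.

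There is also a concrete error in your justification of the c5-type constraint. You write that ``the first branch of Lemma~\ref{lem:lowbins} still applies in this regime since $\beta_0\le\smallslot$.'' That branch of Lemma~\ref{lem:lowbins} bounds the number of nonempty bins of level at most $\heavyUBF$, and it explicitly requires $\beta_0\le\heavyUBF$ — the opposite of the hypothesis $\beta_0>\heavyUBF$ here. What is actually true, and what you seem to be reaching for, is that $\S_{-L}$ consists by construction of $e_0'$ (scaled) nonempty bins of level at most $\beta_0$, and once Stage~5 is reached every bin in $\S_{-L}\cup\justQ1$ (outside $\calD$) has been consumed as an anchor for hard items; this is what supports a lower bound on $x_1+x_2+x_3+x_4$, not Lemma~\ref{lem:lowbins}. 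Replacing the incorrect citation with that observation would repair this step.

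Finally, a smaller wording issue: you open with ``Assume Stage~5 fails, so all empty bins get exhausted while some hard item cannot be placed.'' Running out of empty bins in Stage~5 triggers Stage~6, and what the paper actually shows is that Stage~6 is never reached (and otherwise a good situation applies); the contradiction is ``all empty bins consumed'' itself, not an unplaceable item within Stage~5. Your overall contradiction structure is still the right one, but the phrasing misstates what the LP infeasibility is ruling out.
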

\begin{proof}
    If we do not run out of empty bins in Stage 4, all inequalities in the last linear program above still hold, except that the coefficient of $y_1$ decreases, because the lower bound for the ratio $\frac{\onlbin}{\beta}-2$ (from table \eqref{tbl:heavylarge}) decreases to at least $1$ (for $\beta\leq (\smallslot)$ compared to $\beta\leq\heavyUBF$). We further note that the ratio $\frac{695-62\beta}{31\beta-139}$ (from the table on page \pageref{tbl:heavylarge} with $\eps=1/31$) for quarter$^{++}$ items is larger than the ratio $3$ for quarter$^+$ items.
    Overall, we get the following linear program that still has no solution and conclude that all items get packed. From now on, we will  present only the constraints for the linear programs.
\begin{verbatim}
c1: 1.4222*x1 + 2.357*x2 + 2*x3 + 4*x4 + y1 <= 1-e0; 
c2: 1.4222*x1 + x2 + 3*x3 + 3*y1 + 3*y5 <= 1;
c3: 1.4222*x1 + 2.357*x2 + 2*x3 + 4*x4 + 2*y1 + 2*y5 + 4*(e0-y1-y5) <= 1;
c4: e0 >= 18/67;
c5: x1 + x2 + x3 + x4 >= e0;
c6: y1 + y5 <= e0;
\end{verbatim}
\end{proof}

We are left with the case where $\beta_0>\heavyUBF$, $e_0'>18/67$ and we run out of empty bins in Stage 4, meaning that only quarter$^{++}$ items and large$^+$ items are packed into empty bins. 

\begin{lemma}
\label{lem:lowbins2}
    We have 
    \[
        x\ge \frac{188-278e_0'+62\beta_0(2e_0'-1)}{82-31\beta_0}.
    \]
\end{lemma}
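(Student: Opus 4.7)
The plan is to refine the approach of Lemma~\ref{lem:lowbins} by leveraging the strengthened lower bound $\beta_0 > \heavyUBF$ on the level of every bin to the left of $\S_{-L}$ in the ordering of $\mathcal{U}$. I partition the bins at the start of the fill-up phase into $\Dthreenice$, empty, \emph{left} bins (level $\geq \beta_0$, totaling $1-2e_0'-\cnice'$ scaled), and \emph{right} bins (the $e_0'$ bins of $\S_{-L}$ with level $\leq \beta_0$). Since every $\Q1$-bin has level at most $\ffsmall < \heavyUBF$, any nonempty bin with level at most $\heavyUBF$ lies in $\S_{-L}$; letting $s_{SL,h}$ be the scaled number of $\S$-bins in $\S_{-L}$ with level in $(\heavyUBF,\beta_0]$, we have $x = e_0' - s_{SL,h}$, so the goal is to upper-bound $s_{SL,h}$. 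For readability I set $\cnice'=\qfive'=0$; the extensions use the same accounting as in Lemma~\ref{lem:lowbins}, and both cases only tighten the constraint (since $\Dthreenice$ and $\Qfive$ bins have counted volume exceeding $\beta_0$ by more than the corresponding RHS contribution). The case $\justq_2=0$ forces $\justq_1'=O(1/\tilde m)$ by Theorem~\ref{thm:q1q2}, so the volume bound alone gives $x \geq e_0'$ and the claim; hence I assume $\justq_2 > 0$, which by Lemma~\ref{lem:q1b} gives $\qonebig=0$ and by Theorem~\ref{thm:q1q2} gives $\justq_1' = 2\justq_2' + O(1/\tilde m)$.

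Applying the counted-volume bound on regular bins stated just before Lemma~\ref{lem:beta}, at most $3+\eps$ in scaled form, and using the type-dependent lower bounds (namely $\beta_0$ for non-$\justQ2$ left bins, $\smallslot$ for left $\justQ2$ bins whose counted volume is capped by the ignoring rules, $\smallUB$ for right $\Q1$ bins, $\heavyUBF$ for the $s_{SL,h}$ bins, and $\ffsmall$ for the remaining right $\S$-bins), I obtain
\[\beta_0(1-2e_0'-\justq_2') + \smallslot\,\justq_2' + \smallUB\,\justq_1' + \heavyUBF\,s_{SL,h} + \ffsmall\, s_{SL,l} \le 3+\eps.\]
The algebraic identity $2\smallUB = \smallslot$ combined with $\justq_1' = 2\justq_2'$ causes the quarter-item contributions to combine cleanly, and after substituting $s_{SL,l} = e_0' - 2\justq_2' - s_{SL,h}$ the constraint collapses to
\[(\ffsmall - \beta_0)\justq_2' + (\heavyUBF - \ffsmall)\,s_{SL,h} \le (3+\eps) - \beta_0(1 - 2e_0') - \ffsmall\,e_0'.\]

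To maximize $s_{SL,h}$ (minimizing $x$) I must also maximize $\justq_2'$, since $\ffsmall - \beta_0 < 0$. Crucially, $\justq_2'$ is coupled to $s_{SL,h}$ by $s_{SL,l}\geq 0$ and $\justq_1' = 2\justq_2'$, yielding $\justq_2' \leq (e_0'-s_{SL,h})/2$. Substituting this tight upper bound reduces the inequality to a single linear one in $s_{SL,h}$; solving and using $x = e_0' - s_{SL,h}$, then applying $\eps=1/31$ so that $31(\beta_0 - 3 + 11\eps) = 31\beta_0 - 82$ and $31(5\beta_0 - 12 + 12\eps) = 155\beta_0 - 360$, gives
\[x \geq \frac{(62\beta_0 - 188) + (278 - 124\beta_0)e_0'}{31\beta_0 - 82},\]
which is equivalent to the claimed bound after multiplying numerator and denominator by $-1$. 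As a sanity check, at $\beta_0 = \heavyUBF = 139/31$ this specializes to $(90-278e_0')/57$, matching Lemma~\ref{lem:lowbins}.

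I expect the main difficulty to be recognizing that the coupled constraint $\justq_2'\leq(e_0'-s_{SL,h})/2$ must be used rather than the more natural-looking $\justq_2'\leq e_0'/2$ that treats the two variables as independent; the latter gives a strictly weaker bound with denominator $31\beta_0-120$ that fails to reproduce Lemma~\ref{lem:lowbins} at $\beta_0=\heavyUBF$. The identity $2\smallUB=\smallslot$ is also a slightly non-obvious algebraic coincidence (arguably the reason $\smallUB$ is the right quarter-item threshold) that is what makes the Theorem~\ref{thm:q1q2} substitution cancel so cleanly.
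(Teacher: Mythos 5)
Your proof is correct and, despite the different bookkeeping, is algebraically the same argument as the paper's: the paper virtually reassigns $\smallslot-\beta_0$ from each $\Q2$ bin to its two partner $\Q1$ bins (putting $\Q2$ at $\beta_0$ and $\Q1$ at $\smallslot-\beta_0/2$), while you keep the quarter items in place and use $\smallslot$ for $\Q2$ and $\smallUB$ for $\Q1$; these LHS's coincide because $\beta_0+2(\smallslot-\beta_0/2)=\smallslot+2\smallUB$, and your coupled bound $\justq_2'\le(e_0'-s_{SL,h})/2$ is precisely the paper's worst case $\tilde x = x$ (all low-level $\S_{-L}$ bins in $\Q1$). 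One nuance you gloss over slightly: the parenthetical claim that $\Qfive$ ``only tightens the constraint'' because its counted volume exceeds $\beta_0$ fails for $\Qonefive$, whose effective contribution after absorbing the $\bigLBf$ adjustment in $\Tfirst$ is only $\ge\smallUB<\beta_0$; the correct handling (implicit in the paper's reassignment and in Lemma~\ref{lem:lowbins}) is to fold $\Qonefive$ into the $\Q1$-side of the pairing with the $\Q2$-surplus rather than treat it as a separate surplus term, after which the bound goes through unchanged.
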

\begin{proof}
All bins counted in $x$ that have level at most $\ffsmall$ are in $\Q1$. Suppose there are $\tilde x$ such bins. This means that there are $\tilde x/2$ bins in $\Q2$. 
For the calculations, we assign the part of the contents of those bins above $\beta_0$ to the $\tilde x$ bins in $\Q1$. From each bin we reassign $6-6\eps-\beta_0$, and this gets distributed over two bins with level at least $3-3\eps$.
Then those bins reach level at least $6-6\eps-\frac{\beta_0}2$. 
The remaining $x-\tilde x$ bins have level at least $\ffsmall$.

We have $1-2e_0'$ bins with level at least $\beta_0$, $x-\tilde x$ bins with level at least $\ffsmall$ and $\tilde x$ bins with level at least $6-6\eps-\frac{\beta_0}2$ and $e_0'-x$ bins with level at least $(\bigLB)/2$. In order to maximize $1-e_0'-x$, the level on all bins should be as low as possible given the constraints, since the total volume is fixed to be $3+\eps$. Thus in the worst case $\tilde x=x$ and  all levels are equal to the lower bounds just given. We get the claimed bound.
\end{proof}

\begin{corollary}
    We have $\beta_0\le \frac{188-360e_0'}{62-155e_0'}$.
\end{corollary}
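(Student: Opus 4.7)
The plan is to combine Lemma \ref{lem:lowbins2} with an upper bound $x \le e_0'$ that becomes available precisely in the regime $\beta_0 > \heavyUBF$ currently under consideration.

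To establish $x \le e_0'$, I would argue from the ordering of $\cal U$ and the definition of $\S_{-L}$: the latter consists of at most $e_0$ nonempty bins (the rightmost ones in the ordering), and $\beta_0$ is the level of its leftmost member. Every nonempty bin not in $\S_{-L}$ has level strictly above $\heavyUBF$: either it lies to the left of $\S_{-L}$ in the sorted order (so its level is at least $\beta_0 > \heavyUBF$), or it lies in $\Dthreenice$, $\Qtwofive$, $\Dnice$, or $\Dtwonice$, all of whose bins contain items of total size exceeding $\heavyUBF$ (the first two contain at least $\nicefill$ and $(\ffsmall)+(\bigLB)$ respectively, and the last two contain at least a nice item of size more than $\niceLB$). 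Consequently no nonempty bin outside $\S_{-L}$ contributes to $x$, yielding $x \le |\S_{-L}| \le e_0'$.

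Substituting this into Lemma \ref{lem:lowbins2} gives
\[
e_0' \;\ge\; \frac{188 - 278\, e_0' + 62\, \beta_0 (2 e_0' - 1)}{82 - 31\, \beta_0}.
\]
Since $\heavyUBF = (9-\eps)/2 > 82/31$ for $\eps = 1/31$, the denominator is negative, so clearing it reverses the inequality. Gathering the $\beta_0$-terms would produce $\beta_0 (62 - 155\, e_0') \le 188 - 360\, e_0'$, and since $62 - 155\, e_0' > 0$ in the relevant range $e_0' < 62/155$, dividing through yields the claimed bound.

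The main obstacle is conceptual rather than computational: one must recognize that the hypothesis $\beta_0 > \heavyUBF$ is precisely what upgrades the trivial estimate $x \le 1 - e_0'$ to the much tighter $x \le e_0'$ needed to make Lemma \ref{lem:lowbins2} cut meaningfully against $e_0'$. Once this observation is made, the derivation is elementary, requiring only care about the sign of $82 - 31\, \beta_0$ when multiplying through and a routine check that $62 - 155\, e_0'$ is positive in the range of interest.
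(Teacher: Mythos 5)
Your proposal matches the paper's reasoning exactly: the paper's proof is the one-liner that ``for larger values of $\beta_0$ we get $x>e_0'$, which is impossible,'' i.e.\ it likewise combines Lemma~\ref{lem:lowbins2} with the implicit bound $x\le e_0'$ and does the sign-reversing algebra. You have simply made the two implicit ingredients explicit (why $x\le e_0'$ when $\beta_0>\heavyUBF$, and the sign of $82-31\beta_0$); the only point you defer rather than verify, namely that $62-155e_0'>0$ in the regime under consideration, is equally left implicit in the paper.
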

\begin{proof}
    For larger values of $\beta_0$, we get that $x>e_0'$ which is impossible. 
\end{proof}

\begin{lemma}
\label{lem:nofailstage4}
    We do not run out of empty bins during Stage 4.
\end{lemma}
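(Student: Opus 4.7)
The plan is to extend the linear-programming style used in Lemmas \ref{lem:13}, \ref{lem:betasmall}, and \ref{lem:nofailstage5} to Stage~4. We suppose for contradiction that we run out of empty bins during Stage~4, and set up a scaled LP (with $\tilde m=1$) in the variables $x_1,x_2,x_3,x_4,y_1,y_5,e_0'$ together with a new variable $\beta$ representing $\beta_0$. We then show, under the residual assumptions $e_0'>18/67$ and $\beta_0>\heavyUBF$ that remain after Lemma \ref{lem:13}, Lemma \ref{lem:betasmall}, Corollary \ref{cor:beta} and Lemma \ref{lem:nofailstage5}, that the LP is infeasible.

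First I would write down the analogues of constraints c1--c3 of Lemma \ref{lem:nofailstage5}, adjusted to Stage~4. In Stage~4, bins in $\S_L$ are not used for quarter$^+$ items, so the $x_3,x_4$ terms still count bins in $\justQ1\cup \S_{-L}$ used in Stages 1--3, while in Stage 4 quarter$^+$ and large$^-$ items use only bins in $\calD$ and bins with level at most $\beta_0$ that survive the earlier stages. In particular the coefficient of $y_1$ in the ``items packed'' and ``bins used'' constraints becomes $\frac{\onlbin}{\beta}-2\ge 1$ and $\frac{\onlbin}{\beta}-1\ge 2$ respectively, using $\beta\le \smallslot$; and filling each empty bin with quarter$^{++}$ items uses $\frac{45-4\beta-5\eps}{2\beta-(9-\eps)}+1$ bins total (at least $3$ for $\beta\le \smallslot$), which I would introduce as a fresh variable $z_1$ for the scaled number of empty bins receiving quarter$^{++}$ items. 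The constraint $y_1+y_5+z_1\le e_0'$ replaces the old c6.

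Next I would strengthen the lower bound on $x$: the sum $x_1+x_2+x_3+x_4$ (bins that enter $\calD$ from a low-level position) must be at least the bound from Lemma \ref{lem:lowbins2},
\[
x_1+x_2+x_3+x_4 \;\ge\; \frac{188-278e_0'+62\beta_0(2e_0'-1)}{82-31\beta_0},
\]
and I would add the bounds $\heavyUBF<\beta_0\le \frac{188-360e_0'}{62-155e_0'}$ from Corollary after Lemma \ref{lem:lowbins2}, together with $e_0'>18/67$. The ``we ran out of empty bins in Stage 4'' assumption is encoded as equality in the c6-style constraint.

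The main obstacle is that two coefficients depend nonlinearly on $\beta$. The plan to handle this is standard: observe that each of the relevant coefficients ($\frac{\onlbin}{\beta}-2$, $\frac{\onlbin}{\beta}-1$, and the quarter$^{++}$ ratio) is monotone in $\beta$ on the feasible interval $(\heavyUBF,\smallslot]$, so replacing each by its worst-case value (the one making the LP easiest to satisfy, namely the smallest number of $\calD$ bins per empty bin) only enlarges the feasible region. After this substitution the LP becomes linear in $e_0'$ and the $x_i,y_i,z_1$, and can be fed (together with the $x$-lower bound rewritten via the $\beta_0$-bound, evaluated at the worst case to eliminate $\beta_0$) to the GLPK solver exactly as in Lemma \ref{lem:nofailstage5}. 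I would verify that the resulting LP is infeasible, which contradicts the assumption that empty bins are exhausted in Stage~4 and completes the proof.
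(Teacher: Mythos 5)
Your high-level framework is right — set up an LP over the scaled variables, use the bounds from Lemma \ref{lem:lowbins2} and the $\beta_0$-corollary, observe monotonicity of the $\beta_0$-dependent coefficients, relax, and check infeasibility with a solver. But the proof has a genuine gap: the single, uniform relaxation over the whole interval $(\heavyUBF,\smallslot]$ is too lossy, and the resulting LP is in fact \emph{feasible}, so the solver will not report infeasibility.

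The problem is that the two $\beta_0$-dependent quantities pull in opposite directions. The coefficients on $y_1$ and on $e_0'-y_1$ (quarter$^{++}$) in constraints c1--c3 \emph{decrease} in $\beta_0$, so their adversary-friendly values occur at $\beta_0=\smallslot$ (giving $1.088$ and $9.17$). But the lower bound of Lemma \ref{lem:lowbins2} on $x_1+x_2+x_3+x_4$ is \emph{increasing} in $\beta_0$, so its adversary-friendly value is at $\beta_0=\heavyUBF$ (reducing to $(90-278e_0')/57$). You cannot realize both worst cases simultaneously with a single $\beta_0$, but the uniform relaxation pretends you can. Concretely, note that subtracting c1 from c3 in the relaxed LP forces $y_1\ge e_0'$ while c6 gives $y_1\le e_0'$, so $y_1=e_0'$. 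Then the point $e_0'=y_1\approx 0.3238$, $x_1=x_2=x_3=x_4=0$ satisfies every constraint: c2 becomes $3.088\cdot 0.3238\le 1$ (tight), c1 and c3 are slack, c4 holds since $0.3238>18/67$, and crucially the c5 right-hand side $(90-278\cdot 0.3238)/57$ is already nonpositive, so c5 is vacuous. The actual $\beta_0$ at such an $e_0'$ would have to be strictly above $\heavyUBF$, which would make the genuine Lemma \ref{lem:lowbins2} bound positive and rule this point out, but the uniform relaxation has thrown that information away.

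The paper's fix, which you are missing, is to split the range of $\beta_0$ into two sub-intervals (roughly $(\heavyUBF,5.2]$ and $(5.2,\smallslot]$) and set up a separate LP for each, taking the worst endpoint for each constraint \emph{within that sub-interval}. On the lower sub-interval the $y_1$ coefficients are larger (evaluated at $\beta_0=5.2$), forcing $e_0'\le 1/3.449\approx 0.29$ via c2 once $y_1=e_0'$, at which point the c5 bound (at $\beta_0=\heavyUBF$) is still substantial and yields infeasibility. On the upper sub-interval the $y_1$ coefficients are small, but the c5 bound is evaluated at $\beta_0=5.2$ rather than $\heavyUBF$, and this stronger bound restores infeasibility. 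Without the split, the argument does not close. (A secondary, minor point: in this lemma Stage 5 has not yet begun, so $y_5$ must be $0$; the paper's Stage-4 LPs correctly omit $y_5$, whereas your proposal retains it.)
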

\begin{proof}
    The regular coefficients for $y_1$ and $e_0-y_1$ depend on $\beta_0$ and decrease in $\beta_0$. On the other hand the lower bound for $x_1+x_2+x_3+x_4$ depends on $\beta_0$ but is increasing in $\beta_0$ (for $\beta_0$ in $[(\bigLB)/2,\smallslot]$ and $\eps=1/31$).

Hence we give two intervals for $\beta_0$ and linear programs with worst possible coefficients and bounds.

If $\smallslot\geq\beta>5.2$ we have the following linear program that does not have a solution. Here we have that the coefficient of $y_1$ is $\frac{\onlbin}{\beta}-2\geq\frac{\onlbin}{\smallslot}-2>1.088$ (for $\eps=1/31$). The coefficient of $e_0-y_1$ is the number of bins used when packing quarter$^{++}$ item, so $\frac{45-4\beta-5\eps}{2\beta-(\bigLB)}+1\geq\frac{45-4(\smallslot)-5\eps}{2(\smallslot)-(\bigLB)}+1>9.17.$ We note that quarter$^+$ items are not packed into empty bins before Stage 5.
\begin{verbatim}
c1: 1.4222*x1 + 2.357*x2 + 2*x3 + 4*x4 + 1.088*y1 <= 1-e0;
c2: 1.4222*x1 + x2 + 3*x3 + 3.088*y1 <= 1;
c3: 1.4222*x1 + 2.357*x2 + 2*x3 + 4*x4 + 2.088*y1 + 9.17*(e0-y1) <= 1; 
c4: e0 >= 18/67;
c5: x1 + x2 + x3 + x4 >= (188-278*e0+62*5.2*(2*e0-1))/(82-31*5.2);
c6: y1 <= e0;
\end{verbatim}

If $5.2\geq\beta>(\bigLB)/2$ we have the following linear program that does not have a solution. Here we have that the coefficient of $y_1$ is $\frac{\onlbin}{\beta}-2\geq\frac{\onlbin}{5.2}-2>1.449$ (for $\eps=1/31$). The coefficient of $e_0-y_1$ is $\frac{45-4\beta-5\eps}{2\beta-(\bigLB)}+1\geq\frac{45-4\cdot5.2-5\eps}{2\cdot5.2-(\bigLB)}+1>17.78.$
\begin{verbatim}
c1: 1.4222*x1 + 2.357*x2 + 2*x3 + 4*x4 + 1.449*y1 <= 1-e0;
c2: 1.4222*x1 + x2 + 3*x3 + 3.449*y1 <= 1; 
c3: 1.4222*x1 + 2.357*x2 + 2*x3 + 4*x4 + 2.449*y1 + 17.78*(e0-y1) <= 1; 
c4: e0 >= 18/67;
c5: x1 + x2 + x3 + x4 >= 
    (188-278*e0+62*(9-1/31)/2*(2*e0-1))/(82-31*(9-1/31)/2);
c6: y1 <= e0;
\end{verbatim}
\end{proof}

\begin{corollary}
    We do not run out of empty bins in Stages 1--5, so we never reach Stage 6.
\end{corollary}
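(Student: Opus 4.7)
The plan is to derive the corollary by combining the earlier lemmas via a case split on the value of $\beta_0$, since $\beta_0$ determines both whether Stage 4 is executed at all and which of the more delicate packing arguments apply to Stage 5.

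First, I would invoke Lemma \ref{lem:13} to dispatch Stages 1--3 uniformly: this lemma already handles both the small-$e_0'$ and large-$e_0'$ regimes, using the structural bound $\beta_0\le\smallslot$ together with the value $\eps=1/31$ (this is the place where the choice of $\eps$ is critical, but it is already absorbed inside the lemma).

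Next, I would split on whether $\beta_0\le\heavyUBF$ or $\beta_0>\heavyUBF$. In the first case, Stage 4 is by definition skipped (it is only entered when $\beta_0>\frac{9-\eps}{2}$), so the algorithm proceeds directly from Stage 3 into Stage 5, and Lemma \ref{lem:betasmall} guarantees that we do not exhaust the empty bins in Stage 5. In the second case, Corollary \ref{cor:beta} together with the hypothesis $\beta_0>\heavyUBF$ puts us in the range $e_0'>18/67$ where Lemma \ref{lem:nofailstage4} applies; that lemma shows that Stage 4 ends without exhausting the empty bins, and then Lemma \ref{lem:nofailstage5} shows that Stage 5 successfully packs all remaining items, which a fortiori means it cannot run out of empty bins.

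Since in both branches we exhaust neither the empty bins nor the items during Stages 1--5, Stage 6 is never reached. The proof is therefore essentially a bookkeeping step; the only thing to be careful about is to make sure that the case split on $\beta_0$ at threshold $\heavyUBF$ exactly matches the hypotheses of Lemmas \ref{lem:betasmall}, \ref{lem:nofailstage4}, and \ref{lem:nofailstage5} with no gap at the boundary $\beta_0=\heavyUBF$, and that $\tilde m$ being large enough absorbs the additive $O(1)$ slack hidden inside the linear-programming arguments of those lemmas.
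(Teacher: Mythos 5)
Your decomposition (Lemma \ref{lem:13} for Stages 1--3, then a case split at $\beta_0=\heavyUBF$ routing to Lemma \ref{lem:betasmall} on one side and Corollary \ref{cor:beta}, Lemma \ref{lem:nofailstage4}, Lemma \ref{lem:nofailstage5} on the other) is exactly the structure the paper uses, and the boundary threshold $\heavyUBF=\frac{9-\eps}{2}$ indeed coincides with the condition under which Stage 4 exists. The proposal is correct and matches the paper's proof.
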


We conclude that therefore all items end up packed (by Good Situation \ref{gs:nonempty} and $\ref{gs:q2})$.

\section{Dependency on $\eps$  and $m$}
\label{sec:depm}

The approach used by our algorithm cannot work in this way for $\eps\ge1/5$, because we might run out of empty bins in the starting phase: we only have the bound $r\le\NEzero$, which may be as much as $\frac{\ENDBIG}{\ffsmall}m+\NEzeroc$. Thus the starting phase cannot always be completed successfully for $\eps=1/5$ (which corresponds to a competitive ratio of 1.46667). 

Moreover, already for much smaller values of $\eps$ we quickly run out of empty bins in the fill-up phase if only large items arrive (at first). See the proof of Lemma \ref{lem:13} and its footnote.
This means that a number of additional nonempty bins may receive a single large item. It may still be possible to pack all remaining items after this, since they are all of size at most 6, but a different algorithm (which has a Stage 6) would be required.

In Lemma~\ref{lem:13} we need for $\beta\leq \ffsmall$ that 
$\frac{\beta}{18-2\eps}\tilde m<\Ezero'\tilde m$. Hence, it holds that there is at least one empty bin if
\begin{align*}
\frac{\ffsmall}{\onlbin}\tilde m+1&\leq \left(\left(\frac{1-5\eps}{\ffsmall}\tilde m + \frac{1-5\eps}{\ffsmall}\qmatch+ \frac{4\eps}{\ffsmall}\ell-\NEzerocplusone\right)/\tilde m\right)\tilde m\\&\leq \frac{1-5\eps}{\ffsmall}\tilde m+\frac{\nicerzero}{\ffsmall}\cnice + \frac{1-5\eps}{\ffsmall}\qmatch+ \frac{4\eps}{\ffsmall}\ell-\NEzerocplusone.
\end{align*} 
This leads to \begin{align*}
    \frac{\ffsmall}{\onlbin}(m-\qmatch-\lc)\leq
    \frac{1-5\eps}{\ffsmall}(m-\qmatch-\lc)+ \frac{1-5\eps}{\ffsmall}\qmatch+\frac{4\eps}{\ffsmall}\lc-\NEzerocplustwo
\end{align*}
    and with $\eps=1/31$ to 
\begin{align*}
    \left(\frac{13}{60}-\frac{30}{139}\right) m+\frac{30}{139}\qmatch+\frac{271}{8340}\lc=\frac{7}{8340}m+\frac{30}{139}\qmatch+\frac{271}{8340}\lc \geq\NEzerocplustwo.
\end{align*}
As $\qmatch\geq0$ and $\lc\geq0$ this holds if $m\geq\frac{\NEzerocplustwo\cdot8340}{7}$, which holds if $m\geq\mmin.$

\bibliographystyle{plain}
\bibliography{stretch}

\appendix 
\section{First Fit}
Our algorithm often uses First Fit (FF) as a subroutine, sometimes on bins (really parts of bins) that do not all have the same size. FF has the following useful properties. 

\begin{lemma}\footnote{The simple proof of this lemma can be found, e.g., in~\cite{53bin}.}
	\label{lem:2+bin}
	Consider a set $V$ of bins that is packed by First Fit
	of which at least the last $\abs{V}-2$ bins contain at least $k$ items.
	If $\abs{V}\ge3$, the total level of the bins in $V$ is more than 
	\[\frac{k\abs{V}}{k+1}\,.\]
\end{lemma}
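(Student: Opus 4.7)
The plan is to combine two elementary consequences of the First Fit rule into a weighted sum that yields the desired average. First I would record the two invariants that every First Fit packing satisfies: (a) since $b_2$ is opened, its first item did not fit into $b_1$, hence $\ell_1 + \ell_2 > 1$; and (b) for every $i \ge 3$ and every $j < i$, each item placed into $b_i$ failed to fit into $b_j$ at the moment of placement, and since levels only increase, every item of $b_i$ has size strictly more than $1-\ell_j$. Summing over the (at least) $k$ items in $b_i$ gives the key linear inequality
\[
k\ell_j + \ell_i > k \qquad (j<i,\ i\ge 3).
\]

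Having these, the goal is to derive $(k+1)\sum_{i=1}^n \ell_i > kn$ where $n=\abs{V}$. The natural strategy is to take a nonnegative linear combination of the inequalities in (a) and (b) whose left-hand side is a uniform multiple of each $\ell_i$. Equivalently, I would view the problem as an LP: minimize $\sum \ell_i$ subject to (a), (b), and $0\le \ell_i \le 1$. The candidate primal solution $\ell_i = k/(k+1)$ for all $i$ is feasible (every constraint in (b) becomes an equality and (a) is satisfied because $k\ge 1$), has objective value $kn/(k+1)$, and complementary slackness is easily verified with symmetric dual multipliers supported on the pairs $(j,i)$ with $j<i$, $i\ge 3$. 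Hence the LP optimum is exactly $kn/(k+1)$.

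Passing from the LP bound to the strict inequality claimed in the lemma is then immediate: the constraints of (a) and (b) in the actual First Fit packing are strict (items have strictly positive size, so ``does not fit'' gives a strict inequality), so no $\ell$ satisfying all constraints can attain the LP optimum; thus $\sum_i \ell_i > kn/(k+1)$.

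The main obstacle is presenting the dual-combination cleanly without formally invoking LP duality. I would do this by exhibiting concrete multipliers: put weight $1/(k+1)$ on each inequality $k\ell_j+\ell_i>k$ involving $j\in\{1,2\}$, then use the remaining inequalities with $j\ge 3$ (paired with (a) when necessary) to top up the coefficients so that each $\ell_i$ has total weight exactly $k+1$. An equivalent alternative is induction on $n$: assume the bound for $n-1$ and use $\ell_n > k(1-\ell_j)$ for a carefully chosen $j<n$ together with the identity $(k+1)\cdot \tfrac{k(n-1)}{k+1} + k = \tfrac{k(k+1)n}{k+1}$ to extend the bound to $n$. Either route gives the strict inequality claimed.
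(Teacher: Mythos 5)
Your proposal has a genuine gap at its central step: the claim that the LP optimum is exactly $kn/(k+1)$ is false once $k\ge 3$. The constraints you derive from First Fit are correct — $\ell_1+\ell_2>1$, $k\ell_j+\ell_i>k$ for $j<i$ and $i\ge 3$, and $0\le\ell_i\le1$ — but they do not pin down the claimed minimum. Concretely, for $n=3$ and any $k\ge 3$ the point $\ell_1=\ell_2=1$, $\ell_3=0$ is feasible for the LP with objective $2$, which is strictly less than $3k/(k+1)$. Consequently the ``symmetric dual multipliers'' you invoke cannot exist: weak duality would then force the primal value to be at least $kn/(k+1)$, contradicting the feasible point above. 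This is not merely a boundary artifact of the LP relaxation: the First Fit instance $0.9,\,0.9,\,0.11,\,0.11,\,0.11$ yields three bins with levels $0.9,\,0.9,\,0.33$; the last bin has three items, so $k=3$ and $|V|=3$, yet the total level is $2.13<9/4$. Thus the constraints you list cannot possibly imply the stated bound in this regime (and they suggest the lemma as literally written needs an additive loss term, exactly as in the paper's own Lemma~\ref{lem:FFvarbin}, which bounds only $\sum_{j=k}^{v-k}s(j)$ rather than the full sum).

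The inductive alternative has the same hole. You would want to argue that either some $\ell_j\le k/(k+1)$ (giving $\ell_n>k(1-\ell_j)\ge k/(k+1)$), or all $\ell_j>k/(k+1)$. But in the second case $1-\ell_j<1/(k+1)$, so $k(1-\ell_j)<k/(k+1)$, and the lower bound on $\ell_n$ becomes \emph{weaker} than $k/(k+1)$, not stronger; the induction does not close. A more careful version of this case analysis, writing $\ell_j\ge \tfrac{k}{k+1}+\delta$ and tracking the coefficient of $\delta$, goes through only when $n\ge k+1$; for $n\le k$ it fails, matching the counterexample above. Your argument does hold for $k\in\{1,2\}$ (which is what the paper actually uses via Corollary~\ref{cor:FF11}), but as a proof of the lemma for general $k$ it is incorrect. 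Note also that the paper does not supply a proof of this lemma — it cites an external reference — so there is no in-paper argument against which to compare.
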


\begin{lemma}
    \label{lem:FFvarbin}
    For any set of $v$ variable-sized bins that is packed using First Fit, the following property holds. If at least $k<v/2$ items are packed into each bin, the total size of all the items packed into these bins is at least 
    $$ \frac{k}{k+1}\sum_{j=k}^{v-k}s(j),$$
    where the size of the $j$-th bin is denoted by $s(j)$. This even holds if the number of bins increases while First Fit is running (in this case $v$ is the final number of bins).
\end{lemma}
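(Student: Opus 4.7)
The plan is to generalize the standard First Fit argument by a bin-by-bin pairing, but for variable bin sizes I have to be careful about which bin's size appears on the right-hand side of the FF inequality. Index the bins $1, 2, \ldots, v$ in the order in which First Fit opens them, let $s(j)$ denote the size of the $j$-th bin and $L(j)$ its final level. The key First Fit observation is the following: for every item $x$ that ends up in some bin $j' > j$, at the moment when $x$ was placed it did not fit in bin $j$, so $L_{\text{then}}(j) + x > s(j)$, where $L_{\text{then}}(j)$ is the level of bin $j$ at that moment. Since the level of any bin only grows over time, $L_{\text{then}}(j) \le L(j)$, and hence $x > s(j) - L(j)$. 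This handles the dynamic case as well: even if bin $j+1$ is opened after more items have already been placed in bin $j$, the argument only relies on the monotonicity of $L(j)$ over time.

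Next, I would apply this observation to the items inside bin $j+1$ for each $j$ in the range $\{k, k+1, \ldots, v-k\}$. By assumption bin $j+1$ contains at least $k$ items, and each of them has size strictly greater than $s(j) - L(j)$ by the observation above. Summing those $k$ items yields a lower bound on $L(j+1)$:
\[
L(j+1) \;>\; k\bigl(s(j) - L(j)\bigr),
\qquad\text{i.e.,}\qquad
L(j) + \tfrac{1}{k}L(j+1) \;>\; s(j).
\]

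Finally, I would sum this inequality over $j = k, \ldots, v-k$ and reindex the right-hand sum:
\[
\sum_{j=k}^{v-k} L(j) + \tfrac{1}{k}\sum_{j=k+1}^{v-k+1} L(j) \;>\; \sum_{j=k}^{v-k} s(j).
\]
Using $1/k \le (k+1)/k$, the left-hand side is at most $\tfrac{k+1}{k} \sum_{j=k}^{v-k+1} L(j)$. Rearranging gives
\[
\sum_{j=k}^{v-k+1} L(j) \;>\; \tfrac{k}{k+1}\sum_{j=k}^{v-k} s(j),
\]
and the total packed volume is at least this partial sum. The condition $k < v/2$ is precisely what ensures the index range $[k, v-k]$ is nonempty so the argument applies. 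The main subtlety, and the step I would be most careful about, is the monotonicity argument in the first paragraph: it is what allows me to use the \emph{final} level $L(j)$ (rather than $L_{\text{then}}(j)$) in the inequality, and it is also what makes the statement remain valid in the dynamic setting where additional bins may be opened while First Fit is still running.
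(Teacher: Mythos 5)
Your proof is correct, and it is a genuinely different—and arguably cleaner—argument than the one in the paper. The paper's proof works with $(k+1)$-tuples of consecutive bins $(j,\ldots,j+k)$: it picks $\alpha$ to be the largest empty space among the first $k$ of these bins, lower-bounds each item in bin $j+k$ by $\alpha$, and obtains the bound $\sum_{i=j}^{j+k-1}s(i)$ for each tuple; it then sums these bounds over all $j\in\{1,\ldots,v-k\}$ and reorganizes the resulting double sum via a coefficient-counting argument. Your proof instead uses only the direct First-Fit comparison between \emph{consecutive} bins $j$ and $j+1$, yielding the single inequality $L(j)+\tfrac{1}{k}L(j+1)>s(j)$, and then sums that over $j\in\{k,\ldots,v-k\}$, bounding both resulting partial sums of $L$ by $\sum_{j=k}^{v-k+1}L(j)$. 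This avoids the multi-bin $\alpha$ trick and the more elaborate index-counting entirely; the telescoping is immediate. Both proofs rely on the same monotonicity observation to handle variable-sized bins and bins opened mid-stream (the final level dominates the level at item-placement time), which you correctly highlight as the crux of the generalization. One small phrasing nit: the statement ``Using $1/k\le (k+1)/k$'' is not really the step being used; what you actually need, and implicitly do, is that each of the two partial sums of $L$ is bounded above by $\sum_{j=k}^{v-k+1}L(j)$, so the left-hand side is at most $(1+\tfrac1k)\sum_{j=k}^{v-k+1}L(j)$. Minor wording aside, the argument is sound.
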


\begin{proof}
    Let the bins be sorted by the order of First-Fit.
    
    We look at an $(k+1)$-tuple $(j,j+1,\ldots,j+k)$ with $1\leq j\leq v-k$. Let $\alpha$ be the largest empty space of bins $j,\ldots,j+(k-1)$. The items in bin $j+k$ have size at least $\alpha$. Bins $j,\ldots,j+(k-1)$ on the other hand are filled to at least $s(j)-\alpha,\ldots,s(j+(k-1))-\alpha$.
    We know that at least $k$ items of size at least $\alpha$ are packed in bin $j+k$, so in these $k+1$ bins we have an overall load of at least $\sum_{i=j}^{j+(k-1)}s(i)$.
    Applying this bound for $j=1,2,\dots$ we find guarantees for First-Fit of at least $\sum_{i=1}^{k} s(i) + \sum_{i=k+2}^{2k+1} s(i) +\ldots, \sum_{i=2}^{k+1} s(i) + \sum_{i=k+3}^{2k+2} s(i) +\ldots$, etc.
    Adding all these bounds gives
    \[(k+1)\cdot FF\geq \sum_{i=1}^{k-1} i\cdot s(i) + \sum_{i=k}^{v-k} k\cdot s(i) + \sum_{i=v-(k-1)}^{v-1} (v-i)s(i)>k\sum_{i=k}^{v-k}s(i).\qedhere \]
\end{proof}

\begin{corollary}
    \label{cor:FF11}
    For any set of $n>1$ bins that is packed using First Fit in which the $j$-th bin already contains $p(j)\ge p$, if at least one item is packed into each bin, the total size of all items packed into these bins (including the values $p(j)$) is at least $(\onlbin+p)(v-1)/2$.
\end{corollary}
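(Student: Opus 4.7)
The plan is to give a direct pairwise argument rather than going through Lemma~\ref{lem:FFvarbin}; this cleanly handles the edge case $v=2$ and also sidesteps the condition $k<v/2$ in that lemma. Write $L(j)$ for the final level of the $j$-th bin; by hypothesis $L(j)\ge p(j)\ge p$ for every $j$, and every bin receives at least one item during the First Fit run.

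The key step is to establish, for every $j\in\{1,\dots,v-1\}$, the consecutive-pair inequality
\[
L(j)+L(j+1) \;>\; \onlbin+p.
\]
I would prove this by examining the first item $a$ placed by First Fit into bin $j+1$; such an item exists because each bin receives at least one item. Since First Fit tries the bins in their fixed order, $a$ was offered to bin $j$ first and did not fit. At that earlier moment the level of bin $j$ was some $L'(j)\le L(j)$, and the non-fit condition gives $a>\onlbin-L'(j)\ge \onlbin-L(j)$. Consequently $L(j+1)\ge p(j+1)+a>p+\onlbin-L(j)$, which rearranges to the displayed inequality.

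With this pairwise bound in hand, I would finish by pairing up bins. For even $v$, group them as $(1,2),(3,4),\dots,(v-1,v)$ to obtain a total of more than $\tfrac{v}{2}(\onlbin+p)\ge\tfrac{v-1}{2}(\onlbin+p)$. For odd $v$, use the pairs $(1,2),\dots,(v-2,v-1)$ for a contribution of more than $\tfrac{v-1}{2}(\onlbin+p)$ and add the remaining bin $v$, whose level is at least $p\ge 0$. Either way, the total content (items together with the pre-existing $p(j)$'s) exceeds $\tfrac{v-1}{2}(\onlbin+p)$, as claimed. The only subtle point—and what I would take care to write carefully—is the justification that the level of bin $j$ at the placement time of $a$ is bounded above by its final level $L(j)$, which is where the First Fit ordering really enters the argument.
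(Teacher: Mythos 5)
Your proof is correct, and it takes a genuinely different route from the paper's. The paper deduces Corollary~\ref{cor:FF11} by invoking Lemma~\ref{lem:FFvarbin} with $k=1$ on the bins of variable size $s(j)=\onlbin-p(j)$, then adds back the pre-existing content $\sum_j p(j)$ and uses $p(j)\ge p$. You instead prove directly the consecutive-pair inequality $L(j)+L(j+1)>\onlbin+p$ (via the ``didn't fit'' relation $a>\onlbin-L'(j)\ge\onlbin-L(j)$ for the first item $a$ landing in bin $j+1$, combined with $L(j+1)\ge p+a$) and then sum over a partition of the bins into adjacent pairs. Both arguments are really the same First-Fit observation packaged differently, but your version buys two things: it is self-contained (no appeal to the more general but heavier Lemma~\ref{lem:FFvarbin}), and it covers the boundary case $v=2$, which is technically excluded by the hypothesis $k<v/2$ in Lemma~\ref{lem:FFvarbin} even though the corollary is stated for $v>1$. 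The paper's route is shorter given that Lemma~\ref{lem:FFvarbin} is already available and is needed elsewhere (e.g.\ for $k=3$ in the analysis of the fill-up phase). One small remark: in the odd-$v$ case you discard bin $v$'s level $L(v)\ge p$; this needs $p\ge 0$, which holds in all uses in the paper and is implicit since the $p(j)$ are loads, but it is worth saying out loud.
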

\begin{proof}
    The assumptions imply that $k=1$ in Lemma \ref{lem:FFvarbin}. The total amount packed is therefore at least
    $\frac12\sum_{j=1}^{v-1}(\onlbin-p(j))+\sum_{j=1}^v p(j) > (\bigLBf)(v-1)+\frac12\sum_{j=1}^{v-1}p(j)>(\onlbin+p)(v-1)/2.$
\end{proof}

\subsection{Design constraints}\label{sec:ideas}

\bigskip
In this problem, it is very easy to make fatal mistakes already on very small inputs. As a simple example, just packing more than $\bindiff$ of items in one bin can already mean that some item remains unpacked at the end if $m$ items of size slightly less than 12 arrive. Such an input is feasible if $m$ is large enough and the small items are small enough to be packed with the items of size almost 12 into separate offline bins.

It would seem that items of size 3, for instance, could be packed into pairs, because if such items arrive it means that the input cannot contain $m$ items of size 12. However, packing just two bins with two such items each means that the algorithm fails if $m-1$ items of size exactly 12 arrive afterwards. On the other hand, packing three items of size $3$ into a single bin fails after $m$ items of size exactly $9$ arrive. Both of these are not an issue for $\eps=0$.

Being careful about putting too much into single bins is a design constraint for any bin stretching algorithm, regardless of the goal stretching factor $\stretchfactor$. This essentially forces any algorithm to start with a phase in which items are not packed above $\stretchfactor-1$ times the offline bin size unless there is a very good reason to do so (for instance, if items larger than $\stretchfactor-1$ arrive).
Our algorithm begins with a phase which does exactly this. However, the number of design constraints increases markedly once $\eps>0$.

Böhm et al.~\cite{bsvsv17onepointfive} considered the setting $\eps=0$, that is, a competitive ratio of $3/2$. In that setting, any item that fits twice in an optimal bin can be packed three times in an online bin. Moreover, crucially, such an item can be packed once in an online bin while still leaving an empty space of size at least an offline bin available, so an item of any size can still be packed there. If $\eps>0$, items in the range $(\bindiff,\offhalf]$ fit twice in an optimal bin, but cannot be packed in an online bin without blocking that bin for some (very large) items. Given that the combinatorial properties of items of size at most $\bindiff$ are different from those of items of size more than $\bindiff$ and the fact that offline bins have size $\offbin$, the threshold $\largeLB$ naturally becomes important as well. Items in the range $(\bindiff,\largeLB]$ are called \textbf{half} items. (Having a separate threshold at $\offhalf$, while possible, does not seem to be useful.)

Since half items block some items from being packed with them, it is important to pack half items efficiently, so (at least) twice per bin wherever possible; this motivates the bound of $\smallslot$ for packing small items, leaving space for two half items. As long as we pack only \textbf{small} items of size at most $\smallUB$ up to a level of at most $\smallslot$, using First Fit guarantees that we pack $\ffsmall$ per bin (apart from at most two bins).  

There are further complications when we consider items that are slightly smaller than half items, in the range $(\smallslot,\bindiff]$, a subset of what we call \textbf{nice} items.
Since these items can be larger than $\smallslot$, they cannot necessarily be combined with two half items in a bin. The easiest solution would be to pack these items in the same way as half items. However, an item in this range \emph{can} be packed together with an item larger than $\largeLB$ (a \textbf{large} item) in an offline bin, and two large items do not fit in $\domslot$ space. If we pack even a single large item (e.g., size $6+3\eps$) alone in a bin, and an item with size $6-3\eps$ together with some very small items ($m-1$ identical items of total size $2\eps$),
we would fail to pack the input if $m-1$ items of size $12-2\eps/(m-1)$ subsequently 
arrived. On the other hand, packing large items together with smaller items also does not seem good, as it could lead to filling many bins to (significantly) less than $\offbin$ and it is unclear how to compensate for that, for instance if many big items arrive later and bins exist that contain only 11 (for instance).

Finally, if $\eps=0$, items that fit four times in an offline bin can be packed twice in an online bin while still leaving empty space of size equal to an offline bin. Once we set $\eps>0$, items in the range $(\frac12\cdot(\bindiff),\frac14\cdot\offbin]$ fit four times in an offline bin, but cannot be packed in pairs without blocking bins for some  items.

Generally, once we set $\eps>0$ there is much more interaction between the packing rules for the several types than for $\eps=0$. For the specific inputs described before and similar inputs, it is at least relatively straightforward to identify sets of items that could cause trouble in the future as these usually consist of just one item repeated many times or perhaps two different items repeated many times. Items larger than half an online bin are packed one per bin by both the online and the offline algorithm and cause no problems if they arrive early in the input (unless we have already made one of the fatal mistakes described above of course). It is useful to think of such items mainly as a \emph{threat} which restricts how we can pack other items (see Definition \ref{def:threats} below).
The main threat at the beginning and until we have packed {$(2-6\eps)m$} is that $m$ items of size close to 12 will arrive.
There are also issues which only become apparent much later.

Packing items of size at most $\smallUB$ while packing at most $\smallslot$ into any bin using First Fit gives us a packing guarantee of $\ffsmall$ per bin (apart from at most two bins). One looming threat later in the input is $m$ items larger than 6 arriving. This can happen even until we have packed a total volume of nearly $6m$. Suppose that (as a simple example) we pack all the bins using First Fit, packing slightly more than $\ffsmall$ per bin, and then we start filling up the bins, packing one bin to above $11-3\eps$. Then the algorithm fails if $m$ items of size $7+\eps$ arrive at the end.

Generally, it is crucial to not wait too long before starting to fill up bins, as we might be forced to pack many large items alone into bins, meaning that many bins receive less than 12 (in the previous paragraph they get $11-3\eps$). This could easily lead to inputs that cannot be packed; we must strive to pack an average of at least 12 per bin when all is said and done.

We decide to start filling up bins after having packed essentially $(\ENDBIG)m$ using at most $\smallslot$ per bin. This will be the fill-up phase of our algorithm. Using the packing guarantee $\ffsmall$, this means that roughly $1/4$ of the bins are still empty at this point (all this will be made precise later). If also (or only) quarter items arrive, this can only be achieved by packing them carefully; just packing them one per bin gives only a packing guarantee of $3-3\eps$ (which means that we run out of empty bins too fast), while packing all of them  two per bin or even three per bin is impossible as mentioned above (the example with four items of size 3).

We will pack quarter items roughly in a 1-1-2 pattern: two quarter items alone in bins followed by a bin containing two quarter items. This pattern ensures that we still pack $\ffsmall$ per bin on average, and because in this way we pack four quarter items for every bin that we (possibly) pack above the level of $\bindiff$, all possible remaining top items in the input can always be packed. In fact our packing method will be slightly more conservative in creating new bins with two quarter items, depending on what other items arrive in the meantime. This is just one example of the interaction between packing rules for different items.

\end{document}